\newcommand{\SELECT}{\lstinline'SELECT'\xspace}
\newcommand{\FROM}{\lstinline'FROM'\xspace}
\newcommand{\AND}{\lstinline'AND'\xspace}
\newcommand{\NULL}{\lstinline'NULL'\xspace}
\newcommand{\GROUPBY}{\lstinline'GROUP BY'\xspace}
\newcommand{\accmat}{\href{https://github.com/laowantong/sqlab/blob/main/pub/2024-10. Learning SQL from Within/code.ipynb}{\textcolor{ACMDarkBlue}{\faFile*[regular]}}\xspace}
\newcommand{\cmark}{\ding{51}}
\newcommand{\gcmark}{\textcolor{lightgray}{\ding{51}}}
\newcounter{x}\newcommand{\bullets}[1]{%
    \forloop{x}{0}{\value{x} < #1}{$\bullet$}%
    \forloop{x}{#1}{\value{x} < 4}{$\circ$}%
}
\ttfamily\linespread{0.9}\selectfont,
\newcolumntype{V}[1]{>{\topsep=0pt\@minipagetrue}p{#1}<{\vspace{-\baselineskip}}}
\newtheorem{definition}{Definition}
\patchcmd{\subequations}{}%
{}{}{}
\newcommand{\tablecaption}[1]{\vspace{\bigskipamount}\caption{#1}\vspace{-\bigskipamount}}
\title{Learning SQL from within}
\author{Aristide Grange}
 \affiliation{%
    \institution{Université de Lorraine}
    \department{LCOMS EA7306}
    \city{Metz}
    \country{France}
}
\email{aristide.grange@univ-lorraine.fr}
\keywords{SQL, database, education, game-based learning, fingerprinting}
\begin{abstract}
    SQL adventure builder (SQLab) is an open-source framework for creating SQL games that are embedded within the very database they query.
    Students' answers are evaluated using query fingerprinting, a novel technique that allows for better feedback than traditional SQL online judge systems.
    Fingerprints act as tokens that are used to unlock messages encrypted in an isolated auxiliary table. These messages may include hints, answer keys, examples, explanations, or narrative elements. They can also contain the problem statement of the next task, which turns them into nodes in a virtual DAG with queries as edges. This makes it possible to design a coherent adventure with a storyline of arbitrary complexity.
    
    This paper describes the theoretical underpinnings of SQLab's query fingerprinting model, its implementation challenges, and its potential to improve SQL education through game-based learning. The underlying concepts are fully cross-vendor, and support for SQLite, PostgreSQL and MySQL is already available. As a proof of concept, two games, 30 exercises and one mock exam were tested over a three-year period with about 300 students.
\end{abstract}
\begin{document}

\VerbatimFootnotes 
\maketitle

\section{Introduction}

Since its introduction in the 1970s by IBM researchers Donald D. Chamberlin and Raymond F. Boyce \cite{elmasri2017}, SQL has become an enormously complex language, with the 2023 standard spanning over 4,300 pages \cite{sql_standard_2023}. SQL betrays its age in its syntax, which is difficult to learn and difficult to use \cite{shute2024}. However, it remains the \emph{de facto} language for relational database management systems (RDBMSs), widely used in industry and widely taught in high schools and universities \cite{cscurricula2023,smith2022,tuparov2022}.

SQL educators must find effective ways to engage students in the learning process, help them understand the underlying principles of database design and querying, and provide them with appropriate feedback. Traditional approaches to teaching SQL typically rely on textbooks, lectures, and a series of exercises that are either checked manually by instructors or evaluated through automated systems. Textbooks provide theoretical knowledge, lectures offer clarification and deeper insight, and exercises allow for hands-on practice. However, instructors may struggle to provide timely and detailed feedback, especially in large classes. Feedback from automated systems is generally limited to a set of formulaic messages, leaving students to figure out their mistakes on their own. Moreover, these systems often compare the output of a student's query against an expected result table, a process known as execution matching. This can lead to false positives, where queries produce the correct results on a particular instance of the database but fail to generalize to all possible cases.

We introduce \textbf{SQL adventure builder} (\textbf{SQLab}), a novel framework for creating SQL games with the potential to be more engaging and more accurate in their feedback and assessment capabilities. They rely on cryptographic concepts like hashing (individual rows of a table), fingerprinting (SQL queries), and decrypting (feedback). As an interesting side effect, this approach allow us to integrate the exercises on the database into the database itself, turning into data what is generally thought of as code. Consequently, such games can be distributed as standalone database dumps, and played through any DBMS administration tool, eliminating the need for special-purpose software. As for the exercises themselves, they can range from a collection of unrelated questions, to an exam, to a full-blown tutorial, to a coherent adventure with a linear or DAG-shaped storyline.

SQLab has been developed and tested over three years with about 300 students in two database courses at \anon{Université de Lorraine, France}. It is published as open-source software at \cite{sqlab}. Several adventures are available or in preparation for three DBMSs: SQLite, PostgreSQL and MySQL, with more to come.


\subsubsection*{Terminology note}

CRUD (Create, Read, Update, Delete) operations on data form the core of any SQL instruction \cite{tuparov2022}, whether in Computer Science curricula \cite{cscurricula2023}, in professional training programs, or even in non-technical disciplines such as business \cite{smith2022}. Most of the time \cite{bowman2001, elmasri2017, taipalus2018, tuparov2022, mysql, oracle}, they are referred to as Data Manipulation Language.
However, in Part 2: “Foundation” of the SQL standard ISO/IEC 9075 \cite{sql_standard_2023}, section “14: Data Manipulation” does \emph{not} include the Read operation, which has its own section, “7: Query Expressions”.
Notable non-academic sources, such as Wikipedia (art. “DML” \cite{wiki:dml}) and PostgreSQL \cite{postgresql}, adhere to the terminology of the standard.
Since this distinction better aligns with our purposes, the present paper will also refer to data retrieval operations (SELECT) as \textbf{queries}, and data modification operations (INSERT, UPDATE, DELETE) as \textbf{DML statements}.


\subsubsection*{Reproducibility}

Throughout this paper, a clickable symbol \accmat is placed near each figure, table or example whose full code is provided in the accompanying notebook at \href{https://github.com/laowantong/sqlab/tree/main/pub}{\cite{sqlab}\texttt{/pub}}.

\subsubsection*{Plan}

The remainder of this paper is organized as follows.

Section~\ref{sec:related_works} (related works) reviews the literature on SQL error categorization and query matching, and proceeds to describe existing SQL games with respect to their design principles. 

Section~\ref{sec:playing} (playing an SQLab adventure) presents an SQLab game from the student's perspective. It demonstrates the practical utility of query fingerprinting, the core concept of our framework.

Section~\ref{sec:fingerprinting} (fingerprinting) begins with a thorough theoretical analysis of fingerprint: how it is calculated, what categories of queries it applies to, and how accurately it can differentiate them. We define \emph{starring}, a query transformation such that, under certain conditions, fingerprint equality of the original queries is equivalent to execution matching of the starred queries. We then discuss in detail the implementation challenges of fingerprinting formulas, with a focus on aggregation functions: the distribution of their outcomes, the properties of their composition, and their availability across DBMSs.

Section~\ref{sec:life_cycle} (life cycle of an SQLab database), as opposed to Section~\ref{sec:playing}, describes an SQLab game from the designer's perspective: what source material is needed, how the scenario is layed out and the tasks are encoded, what happens during the build process, and what to expect during the practice phase.

Section~\ref{sec:conclusion} (conclusion) summarizes the main contributions of this paper and outlines future directions for research.

\section{Related works} \label{sec:related_works}

\subsection{SQL errors}

The literature on this topic originates from the late 1970s, with early references provided by Ahadi et al. \cite{ahadi2016}. In the same paper, the authors present the first large scale analysis of SQL \textbf{syntax errors} and demonstrate that 54\% of student attempts present such mistakes.

Another 40\% of the student queries are syntactically correct, but nevertheless return incorrect results for at least one instance of the database. Brass and Golberg \cite{brass2006} provide a comprehensive list of the subset of these errors that can be identified without knowledge of the intended task. Taipalus et al. \cite{taipalus2018} refer to these as \textbf{semantic errors}, as opposed to \textbf{logical errors}, which require knowledge of the task to be identified. In turn, they put forth an exhaustive list of these logical errors.

Correct queries, that return the expected result on any instance of the database, represent the remaining 6\% of the attempts analyzed in \cite{ahadi2016}. Of these, some are considered \textbf{gold queries}, while others could be formulated in a simpler fashion, and are thus classified as \textbf{complications}. Like semantic errors, complications can be identified without knowledge of the task.
This categorization is conveniently summarized in Table~\ref{tab:taipalus}, taken from \cite{taipalus2018}.

\begin{table}[h!]
    \tablecaption{Taipalus et al.'s error classes. In bold, errors evident without knowledge of the intended task. Note that Brass and Goldberg \cite{brass2006} call \emph{semantic errors} what is divided here into \emph{logical errors} and \emph{semantic errors}. Moreover, they study \emph{possible runtime errors}, which are left outside this classification.}
    \begin{tabular}{ccccc}
        \multicolumn{2}{c}{\em Correct result table} 
        & \multicolumn{2}{c}{\em Incorrect result table}
        & \multicolumn{1}{c}{\em No result table}
        \\
        \cmidrule(lr){1-2} \cmidrule(lr){3-4} \cmidrule(lr){5-5}
        gold & \textbf{complication} & logical error & \textbf{semantic error} & \textbf{syntax error}
    \end{tabular}
    \label{tab:taipalus}
\end{table}

Some results are transversal with respect to these categories. For instance, Taipalus and Perälä show that logical errors and complications are the most \textbf{persistent errors}, namely, those that may recur even after multiple attempts to correct them \cite{taipalus2019}.
They hypothesize that these errors are often rooted in fundamental misconceptions or misunderstandings about SQL. Miedema et al. extend significantly this investigation in \cite{miedema2024,miedema2022b,miedema2022a}. 

\subsection{Assessment of the correctness of SQL queries}

Two queries are said to be \textbf{semantically equivalent} if they return the same result table on every possible database instance. Although the query equivalence problem is in general undecidable \cite{chu2017a}, it has numerous real-world applications, including
query optimization,
query caching \cite{zhou2019},
text-to-SQL translation \cite{kim2020,kanburoglu2024,zhong2020,zhao2024},
grading or providing feedback to students \cite{burgstaller2023,kaur2023,koberlein2024,nayak2024,wang2020,yang2022}, etc. There are two main methods for assessing the equivalence between a submitted SQL query and a designated gold query:
\textbf{execution matching} compares the result tables, at the risk of false positives;
\textbf{code matching} compares the queries themselves, usually at the risk of false negatives.

Among the games we will review in Section~\ref{sec:sql_games}, SQL Island, Select Star SQL, and GalaXQL only rely on execution matching, while aDBenture combines execution and code matching. As for SQLab, it compares the result tables of a “starred” version of the original queries, a concept we will introduce in Section~\ref{sec:theory}.

\subsubsection{Execution matching} \label{sec:mutation}
This approach is at the heart of every Online Judge System. An \textbf{OJS}, by definition \cite{wasik2018}, “assesses solutions [...] based on a set of test instances”\,---\,which, in the context of SQL queries, is one or several database instances. However, as observed in \cite{wang2024}, OJS “only compare the execution results of queries with the correct results, without assessing the correctness of the query logic”. As an unfortunate consequence, “to deceive OJS and obtain correct evaluations, students may deliberately [...] construct queries that execute correctly on specific test cases but are logically incorrect”. This gave rise to “a new type of logical error called \textbf{deceptive error}, which occurs when students construct queries that pass specific test cases but fail to solve the actual problem”.

To overcome the inability of OJS to assess the logic of an SQL query, the designer “should give special attention to both the type of the problem and the preparation of [the dataset]” \cite{wasik2018}. The latter requirement can be partially fulfilled by means of mutation testing, a technique borrowed from software engineering \cite{jia2011}. A \textbf{mutation} is defined in \cite{chandra2015} as "a single (syntactically correct) change of the original query", while a \textbf{mutant} is "the result of one or more mutations on the original query". Mutation has recently been used to "repair" the slightly incorrect queries sometimes emitted by text-to-SQL tools \cite{yang2023}. However, in the majority of scenarios, it is applied on correct queries to generate incorrect ones, the end goal being to assess the quality of a test suite \cite{chan2005, tuya2007} and/or to generate new test data \cite{shah2011,suarezcabal2017,chandra2015,zhong2020}. A database state is said to \textbf{kill} a mutant if executing both the mutant query and the gold query yields different tables. An \textbf{equivalent mutant} is a mutant which no database state can kill \cite{tuya2007}.

Mutations of interest are those that can be systematically generated. For example, the tool SQLMutation \cite{tuya2006} transforms a given query by applying four mutation operators, categorized according to their effect on: the main SQL clauses (SC); the operators within conditions and expressions (OR); the handling of NULL values (NL); the identifiers, including column references, constants, and parameters (IR). XDATA \cite{chandra2015} explores a mutation space involving join type, selection predicates, and unconstrained aggregations.

\subsubsection{Code matching}
Assessing the equivalence of SQL queries by static analysis can be carried out in a variety of ways.
The simplest approach, \emph{exact string matching}, is too strict to be useful in practice. \emph{Exact set matching} \cite{yu2018,kaur2023} is more lenient, as it only requires the student to use the same keywords and identifiers in the same clauses as the gold query. Such text-based comparisons have the advantage of being applicable to syntactically incorrect queries, but they may suffer from a lack of robustness to minor variations in the queries.
More sophisticated approaches entail parsing (syntactically correct) queries into formalizations amenable to semantic comparison,
such as
abstract syntax trees \cite{wang2020,koberlein2024},
tree-like structures of keywords and identifiers \cite{yang2022},
symbolic representations \cite{zhou2019}, 
algebraic expressions \cite{chu2017a}, 
functions over cardinal numbers or univalent types \cite{chu2017b}. 

\subsection{SQL games} \label{sec:sql_games}

The concept of integrating education and entertainment can be traced back to at least the time of Jerome\footnote{
    “Get for her a set of letters made of boxwood or of ivory and called each by its proper name. Let her play with these, so that even her play may teach her something.”, Ep. CVII to Laeta (c. 400) \cite{schaff_1886}.
}. However, it was not until 1970 that the term “serious game” was first coined \cite{abt1970,djaouti2011}. Then, in the early 2000s, the rapidly developing and sometimes-maligned \cite{funk2004} industry of video games gained unexpected advocates among academic researchers. Prensky \cite{prensky2001,prensky2003} posited that video games align with how a “digital native” learns\,---\,through experimentation, problem-solving, and immediate feedback.
Gee \cite{gee2003} added that they can teach valuable literacy and learning strategies by immersing players in complex systems, and encouraging them to adapt and think critically. The advent of “game-based learning” (GBL) gave rise to the development of numerous educational games, which in turn prompted a multitude of studies on their impact on student's engagement and learning outcomes (see \cite{%
    bond2020,
    boyle2016,
    connolly2012,
    egenfeldt2006,
    ekin2023,
    papastergiou2009
} for literature reviews and meta-analyses).

A small subset of these games involves SQL learning and training. A Google Scholar search for “SQL game” returns over 50,000 results, but most of them are either poor fits, duplicates, gamified tutorials (e.g. \cite{ward2015,pustulka2021}), SQL-injection cybersecurity games, or sequences of problems lacking a storyline. Non-academic sources provide a few more examples. All in all, we identified four web-based text adventures \cite{batista2019,burgstaller2023,canale2022,schildgen2015}, two augmented reality (AR) games \cite{lupano2021,tuparov2022}, one text-mode massively multiplayer online role-playing game (MMORPG) \cite{schemaverse_repo}, and even one 3D role-playing game developed as a mod of NeverWinter Nights \cite{soflano2011,soflano2015}. We will discuss just three of these \cite{burgstaller2023,canale2022,schildgen2015}, representative of the others in terms of the orthogonal properties we are interested in. For comparison sake, we also include two quasi-games \cite{galaxql_online,select_star_sql_online}, along with our own proposed solution \cite{sqlab}.

\subsubsection{Properties studied}

Table~\ref{tab:game_properties} orders this selection by inception date, and introduces four terms that will be defined in the following paragraphs.

\begin{table}[h!]
    \tablecaption{Characteristics of some SQL games. The main novelty of SQLab is that it allows the creation of games that are both \emph{directed} and \emph{standalone}, with highly \emph{specific feedback}.}
    \begin{tabular}{llcccccc}
  year
  & name
  & \shortstack{directed\\($\neg$ undirected)}
  & \shortstack{standalone\\($\neg$ embedded)}
  & \shortstack{stateful\\($\neg$ stateless)}
  & \shortstack{feedback\\level}
  \\
  \hline
  2005 & GalaXQL \cite{galaxql_online} &\cmark &  ~  & \cmark & \bullets{4} (4) \\
  2014 & SQL Island \cite{schildgen2015,sql_island_online}  &\cmark &  ~  & \cmark & \bullets{2} (2) \\
  2018 & SQL Murder Mystery \cite{canale2022,sql_murder_mystery_repo} & ~ & \cmark  & ~ & \bullets{0} (0) \\
  2018 & Select Star SQL \cite{select_star_sql_online} &\cmark &  ~  & ~ & \bullets{1} (1) \\
  2023 & aDBenture games \cite{adbenture_online,burgstaller2023} &\cmark &  ~  & ~ & \bullets{3} (3) \\
  2024 & SQLab games \cite{sqlab_repo} & \cmark & \cmark  & \cmark & \bullets{4} (4) \\
\end{tabular}

    \label{tab:game_properties}
\end{table}

\begin{enumerate}
\item We say that an SQL game is \textbf{directed} if the player is guided through one or more sequences of tasks predefined by the instructor. It is said \textbf{undirected}\footnote{
    Most authors \cite{burgstaller2023, hattie2008, xinogalos2022} describe such games as \emph{self-directed}, but we prefer \emph{undirected} for its straightforward antonym \emph{directed}.
} if players are free to craft their own strategies to solve the tasks. The latter may better reflect actual usage of SQL in a business setting, but it prevents the adventure from being used as a tutorial, making it less suitable for learning than for training.

\item An SQL game is \textbf{standalone} if it can be distributed as a self-contained database dump. Being a “normal” database, it is potentially playable through the full range of browser-based and desktop administration tools, either dedicated (e.g., phpMyAdmin \cite{phpmyadmin} for MySQL \cite{mysql} and MariaDB \cite{mariadb}, etc.) or generic (e.g., DataGrip \cite{datagrip}, DBeaver \cite{dbeaver}, SQuirreL SQL Client \cite{squirrel}, etc.), as well as through vendor-specific command-line tools (e.g., \texttt{mysql}, \texttt{psql}, \texttt{sqlite3}, etc.). This flexibility allows instructors to familiarize their students with both SQL and a “real-life” tool, rather than requiring them to learn a purpose-built interface they won't use afterward. Additionally, individual learners may select the environment that best aligns with their workflow, comfort level, or special needs.

\item An SQL game is either \textbf{stateful} or \textbf{stateless} depending on whether the player is expected to modify the state of the underlying database (with DML statements such as UPDATE, DELETE, INSERT), or simply retrieve data (with SELECT queries). This distinction has significant implications for the design of the game. Statefulness requires each student to work on a fully-functional sandbox database to avoid altering the game state of their peers. This can be set up locally on their own machine (via a full installation of the DBMS, a virtual machine, or a Docker container), hosted on a generic cloud platform, or provided at university level \cite{tuparov2022}. In contrast, statelessness allows all students to share the same database instance, but precludes the use of numerous interesting plot devices. 

\item Apart from the DBMS' error messages, we distinguish five levels in the \textbf{feedback} delivered to the players when they submit a query: 0 for \textbf{no feedback}, 1 for \textbf{binary} (“Correct” / “Incorrect”), 2 for \textbf{generic} (e.g., “Too many rows”), 3 for \textbf{templated} (e.g., “Missing column \{x\}”), and 4 for \textbf{specific} (e.g., “Filter out the evil inhabitants, not the friendly ones”).

\end{enumerate}

It bears noting that the presence of any one of the four aforementioned characteristics is inherently more demanding than its absence: a system that enables the creation of directed and stateful games \emph{a fortiori} supports undirected and stateless ones; a standalone game can obviously be embedded in a website (see Section~\ref{sec:embedded-version} for a discussion on an embedded version of SQLab games); and a feedback message pertaining to level $i+1$ may be rephrased at level $i$. However, none of these implications is reciprocal.

\subsubsection{SQL Island}

This online tutorial from Technical University of Kaiserslautern stands as the leading example of the \emph{directed} / \emph{embedded} model: a dedicated web application \cite{sql_island_online} tightly integrated with its underlying database and scripting every aspect of storyline and interaction \cite{sql_island_repo}. The player assumes the role of a character stranded on an island after a plane crash. To escape, they must retrieve or \emph{modify} various data, the latter meaning that the game is \emph{stateful}\footnote{
    SQL Island has sometimes \cite{xinogalos2022, burgstaller2023} been referred to as \emph{stateless} for its inability to resume a game without starting over from the beginning. This property, which we prefer to call \textbf{resumability}, is not intrinsic but rather an implementation detail. As a matter of fact, SQL Island has recently become resumable without any change to its gameplay \cite{sql_island_repo}.
}. DML statements enable several classical devices of the text-adventure genre: gathering resources, selling them, earning a salary, killing the villains, and rescuing their victim. SQLite is used as the back-end because its file-based nature makes it easy to create a new database instance for each game session \cite{schildgen2015}. Currently, SQL Island only provides a \emph{generic feedback} (level 2 of 4) on the number and content of the resulting rows and columns, without offering further analysis or guidance \href{https://github.com/jschildgen/sql-island/blob/master/lang_en.csv}{\cite{sql_island_repo}\texttt{/blob/master/lang\_en.csv}}.

Note that, as a proof-of-concept of our game-engine capabilities, we have developed and open-sourced three adaptations of SQL Island (for SQLite, MySQL and PostgreSQL). SQLab Island \cite{sqlab_island} is still \emph{directed} and \emph{stateful}, but gains the \emph{standalone} and \emph{specific feedback} properties. Increasing the feedback level was often cited among the improvements proposed by the players of SQL Island \cite{xinogalos2022}.

\subsubsection{SQL Murder Mystery}

Published by Northwestern University's Knight Lab, this detective story \cite{sql_murder_mystery_repo} serves as a rare example of an \emph{undirected} / \emph{standalone} SQL training game, making it conceptually the opposite of SQL Island.
As a testament of its undirected nature, solutions found online range from 4 to 15 steps, differing significantly in style, elegance, and accuracy: for instance, at the beginning of the game, asked to find which “witness lives at the last house on Northwestern Dr”, some use a nested \texttt{SELECT max(...)} statement, others an \texttt{ORDER BY} with (or even without) a \texttt{LIMIT} clause.

Being standalone, SQL Murder Mystery is more of a database than a software application. It comes as a downloadable SQLite dump, and an initial prompt to kickstart the investigation\footnote{
    Due to popular demand, it was later wrapped in an (almost generic) web application and accompanied by a static tutorial \cite{sql_murder_mystery_online}, allowing beginners and non-technical users to play without downloading or installing anything.
}. Giving the players full access to the data raises an interesting question: how can they be prevented from stumbling upon the solutions, either intentionally or accidentally? Before presenting the authors' answer, we need to walk through the game. Its story unfolds in just six narrative steps. The player retrieves the first clue from the \texttt{description} column in the \texttt{crime\_scene\_report} table, filtered by the crime's \texttt{date}, \texttt{location}, and \texttt{type}. The next two clues are gathered from the \texttt{transcript} column in the \texttt{interview} table using the witness IDs deduced from the report. These clues point to a suspect whose name, when inserted into the \texttt{solution} table, triggers a fourth message. The player then interviews the murderer (fifth message) to trace the mastermind, leading to the sixth and final message.

In order to prevent deceptive errors (or cheating, in layman's terms), two distinct measures have been implemented.
The first one is reminiscent of Rivest's \textbf{chaffing} \cite{rivest1998}, i.e., “adding fake [rows] with [...] reasonable message contents, [...] intermingled with the good (wheat) [rows]”. Indeed, the \texttt{crime\_scene\_report} and \texttt{interview} tables contain only 4 “wheat” rows among 6,215 “chaff” ones. The fake rows have plausible but random data, including sentences from \emph{Alice in Wonderland} and various internet sources. While most of these filler messages make no sense in the context of the story, there is still a risk that an incorrect query might yield chaff that would be mistaken for an actual interview transcript, further confusing a struggling student: we call this the \textbf{misleading message issue}.
The second measure involves storing both suspect names and messages in an SQL trigger. The suspect names are hex-encoded before being compared against two hard-coded values, and the messages themselves don't contain any revealing information. This simple \textbf{obfuscation} technique ensures that the answers cannot be readily seen in the trigger's code. Faced with the same problem, we resorted to a combination of hashing, salting and encryption, which is more complex to implement, but provably more robust, and does not suffer from the misleading message issue.

SQL Murder Mystery has \emph{no feedback} mechanism (level 0 of 4), except for the trigger that updates the table \texttt{solution} with a congratulatory message \cite{sql_murder_mystery_repo}.

\subsubsection{aDBenture}

Recently developed by the University of Klagenfurt, this framework allows for the creation of \emph{directed} / \emph{embedded} / \emph{stateless} SQL games \cite{adbenture_online}. It relies on a dedicated learning environment, as generic administration tools “do not provide enough feedback” (this was indeed the primary challenge we aimed to address in our own design). The authors highlight that their solution, in contrast with all others, offers the teachers a backend to analyze student work. However, from our point of view, this is not an inherent aspect of the model: once query logging has been enabled at the administrative level, it is always possible to analyze the logs with a separate program (for instance, SQLab provides a sub-command \texttt{report} for this purpose).

In an “aDBenture”, the player's query is first checked against a pool of known gold queries; if it matches one exactly, the task is marked correct; otherwise, if its result table is correct, it is flagged for manual review and potential addition to the pool. Any incorrect answer enters a pipeline of comparisons against both the result table and the code of the query; a feedback message is generated for the first issue encountered, whether it is a data discrepancy, or a missing or unnecessary column, alias, table, WHERE condition, or sorting criterion \cite{burgstaller2023}. The teacher can set the feedback level to be either \emph{binary} (level 1 of 4) or \emph{templated} (3 of 4).

\subsubsection{Schemaverse}

This unique space MMORPG runs entirely within a PostgreSQL database. Players issue SQL commands to manage their in-game assets, such as creating ships, moving them across the game universe, or attacking other players' fleets. However, as of 2024, the repository \cite{schemaverse_repo} appears to be inactive, and the central server \cite{schemaverse_online} is unavailable. Due to the difficulty to test it, it was not included in Table~\ref{tab:game_properties}.

\subsubsection{Other SQL “games”}
We put \emph{games} in quotation marks because the \emph{directed} / \emph{embedded} tutorials we briefly discuss here lack the playful aspect of the previous examples.

As a matter of fact, Select Star SQL \cite{select_star_sql_online} is based on a dataset documenting Texas death row inmates executed from 1976 to the present.
The feedback is \emph{binary} (level 1 of 4). The correctness is assessed by execution matching: thus, complications such as useless or extraneous elements go undetected. Inspection of the JavaScript code \cite{select_star_sql_repo} confirms that errors are identified solely by ensuring that the symmetric difference set is empty\footnote{
    \href{https://github.com/zichongkao/selectstarsql/blob/3fbda4d40cc42007dfea31600222679764dd197c/scripts/main.js#L188}{\cite{select_star_sql_repo}\texttt{/mysql/blob/3fbda4d40cc42007dfea31600222679764dd197c/scripts/main.js\#L188}.}
}.

In GalaXQL \cite{galaxql_online}, the user writes queries to manipulate a 3D galaxy. The feedback is \emph{specific} to each query (level 4 of 4). In the C++ wrapper \cite{galxql_repo}, each task has its own hardcoded set of tests and messages. We will see later how SQLab allows a similar level of guidance without requiring the designer to write any host-language code.


\section{Playing an SQLab adventure} \label{sec:playing}

\subsection{Setup}

An SQLab database is distributed as a self-contained SQL dump file. Executing this file creates and/or populates all necessary tables, functions and triggers. Students can download it and install the appropriate DBMS (currently, either PostgreSQL, MySQL, or SQLite\footnote{%
    Priority was given to these DBMSs because they are free, open-source and widespread, but nothing prevents other systems from being added in the future. Each version requires writing approximately 100 lines of Python, and a similar amount of SQL code in the target dialect.
}) on their own machine, or simply connect to the administration interface of the university server which hosts the database.

\subsection{Individual tasks} \label{sec:individual_tasks}

\subsubsection{One-pass fingerprinting}

A task is made up of at least two elements (Fig.~\ref{ex:fingerprinting_1}, above dotted line): a \textbf{problem statement} and a \textbf{token formula}. The token formula is an SQL expression that students have to copy and paste into the SELECT clause of their query (Fig.~\ref{ex:fingerprinting_1}, below dotted line). It calculates and distributes along a \texttt{token} column what can be considered as a fingerprint of an intermediate result of the query execution (more on this later). Then, the students call a predefined function \texttt{decrypt()} with this token as argument (for Fig.~\ref{ex:fingerprinting_1}, \lstinline|SELECT decrypt(279951804330132)|). Behind the scene, this function queries a single-column table of encrypted messages, \texttt{sqlab\_msg}, filters out all rows that fail to decrypt, and returns a single-cell result table with the decrypted feedback. This can be either a success message, or a specific hint to help the student correct their query. Whenever all rows are filtered out, a generic fallback error message is displayed.

\begin{figure}[!ht]
    \input{floats/fingerprinting_1}
    \caption{An SQLab exercise with a possible answer and its result table. The given formula is meant to be copied and pasted into the outer SELECT clause of the query. The resulting token can be directly used to unlock the correction (one-pass fingerprinting). \hfill\accmat}
    \label{ex:fingerprinting_1}
    \Description{Text.}
\end{figure}

\subsubsection{Two-pass fingerprinting}

Certain types of queries require one more copy-paste operation (Fig.~\ref{ex:fingerprinting_2}). In these cases, the formula contains a placeholder \texttt{(0.0)} that the students are instructed to replace by a control value extracted from the result table or, less frequently, from the query itself. Executing the updated query will then give the token its final value (not shown on the figure) to be used in the \texttt{decrypt()} function. This process is called \textbf{two-pass fingerprinting}. Although more complex, it is only required for more advanced queries, and then should only come into play once students are familiar with the basic game mechanics.

\begin{figure}[!ht]
    \input{floats/fingerprinting_2}
    \caption{A query with two-pass fingerprinting. Students are invited to replace the \textcolor{ACMOrange}{\texttt{(0.0)}} placeholder in the formula by the first number of the \texttt{employees} column (\textcolor{ACMOrange}{\texttt{1}}). \hfill\accmat}
    \label{ex:fingerprinting_2}
    \Description{Text.}
\end{figure}

\subsection{Progression through tasks}

Fig.~\ref{fig:activity_map} shows part of the structure of a collection of more than 500 feedback messages we created for a series of SQL practical sessions on a home-made French-language database called \emph{Sessform} \cite{sqlab_sessform}. It provides 30 basic exercises, 10 mock exam questions, and an adventure spanning no less than 36 episodes.

\begin{figure}[!ht]
    \fcolorbox{lightgray}{white}{\includegraphics[width=0.6\linewidth,clip,trim=32cm 11cm 1.8cm 21cm]{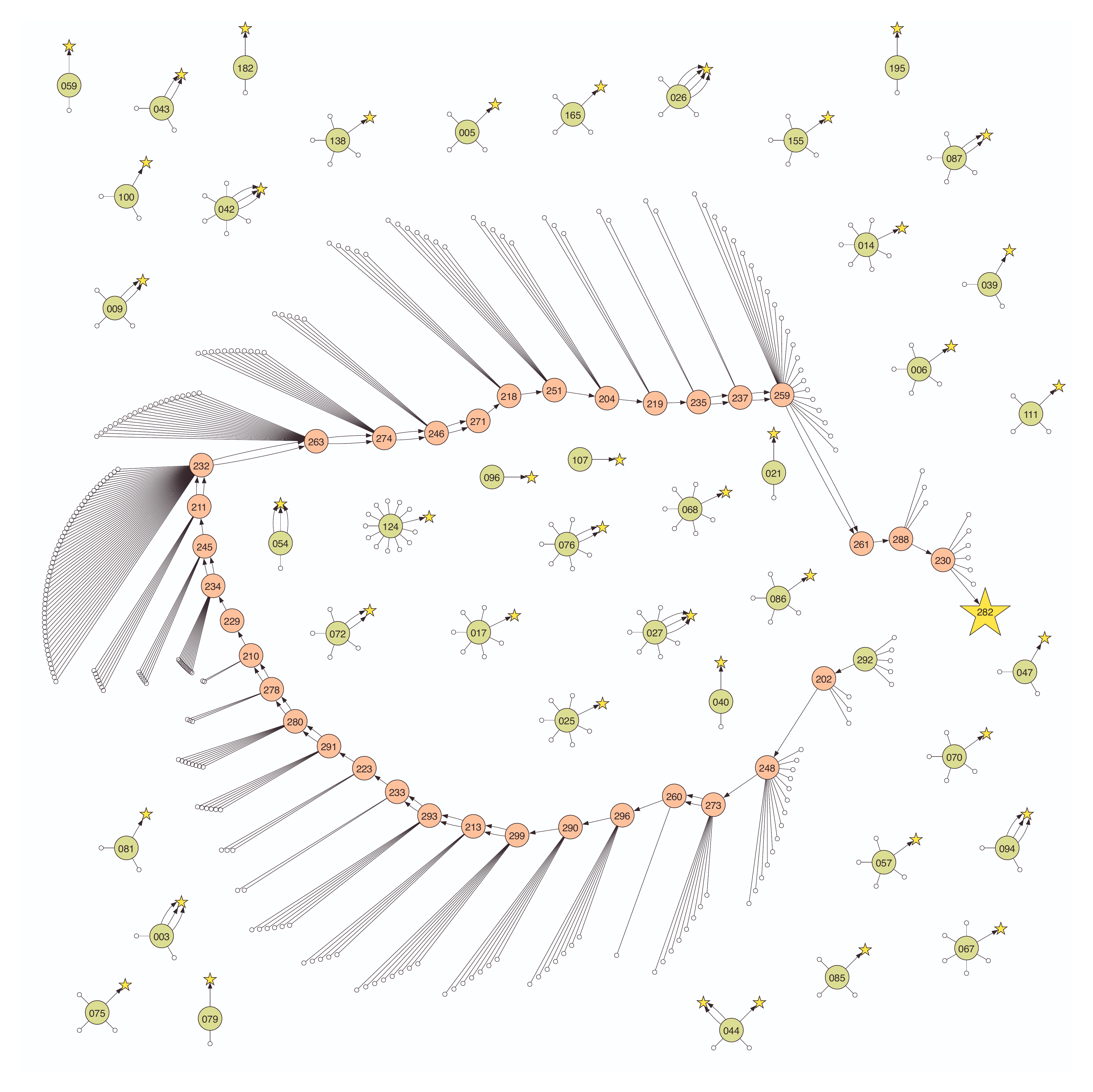}}
    \caption{
        Fragment of an activity map on SQLab Sessform (MySQL version).
        A green node is an entry point, i.e., a task whose access token is part of the statement given to the students: it can be either an independant exercise (e.g., 027, 040, 086) or the first episode of an adventure (292).
        A red node (e.g., 202, 248) is an intermediate point, i.e., a task whose access token is produced by a correct answer to its predecessor; multiple arcs (e.g., the arcs outgoing from 027 or 273) indicate that a task admits multiple solutions.
        A small blank node represents a specific hint-message triggered by a wrong answer (arrows omitted for clarity).
        A yellow star (e.g., 282) is an exit point, i.e., a message that congratulates the student for solving an exercise or completing an adventure.
        Entire map on \href{https://github.com/laowantong/sqlab_sessform/blob/main/mysql/activity_map.pdf}{\cite{sqlab_sessform}\texttt{/mysql/activity\_map.pdf}}.
        }
    \Description{A directed acyclic multigraph representing an adventure and a collection of SQL exercises.}
    \label{fig:activity_map}
\end{figure}

More generally, the underlying structure of the message table is a multigraph of one or more \textbf{tasks} of two possible types:

\begin{itemize}
\item \textbf{Exercises} are standalone tasks that can be tackled in any order. Topologically, they are star multigraphs: their internal node is the question-message, the arcs are the predicted answers, and the external nodes are the \emph{non sequitur} feedback-messages. At least one predicted answer is correct; its target node consists of a congratulatory message, the gold answer, zero, one or more variants, and optionally a textual fragment. There may be zero, one or more predicted wrong answers; each one targets a specific \textbf{hint}.
\item An \textbf{adventure} is a series of tasks structured by a narrative and/or didactic plot. Topologically, it is a directed acyclic multigraph with at least one source (the first task) and one sink (the epilogue). What we have said about the elements of an exercise still holds, except for the following points: there are \emph{several} internal nodes, called \textbf{episodes}; the question-messages usually include a fragment of the storyline and all the sections (congratulatory message, etc.) pertaining to the target of a correct answer.
\end{itemize}

Each task has two unique identifiers. First, as we have seen, its \textbf{access token}. Second, its \textbf{number}, a three-digit natural integer that is used as a suffix for both the title of the task (e.g., “Exercise [042]” in Fig.~\ref{ex:fingerprinting_1}) and the name of the salt function involved in the token calculation formula (\texttt{salt\_042()} in the same figure); somewhat counterintuitively, the numbering is not necessarily ordered: this allows the instructor to insert new tasks without having to renumber the existing ones. Moreover, if a task is an entry point (green nodes in Fig.~\ref{fig:activity_map}), the token is conveniently equal to the number of the task; for instance, the token of the “292nd” (actually first) episode of Sessform adventure is also 292, so the students can simply unlock the corresponding message by typing \lstinline|SELECT decrypt(292)|.

Players are encouraged to copy and paste into a separate text file all the success messages they receive (see Fig.~\ref{fig:pma} for an example). Each message includes a detailed correction (essential for review) and, in an adventure, a textual context (useful for catching missed details of a narrative or a tutorial), the next task, and the access token. The latter could later be used to resume a stateless adventure from that checkpoint. This primitive save mechanism is akin to the “game codes” which were prevalent before the advent of battery-powered memory and file-storage systems, and that the players would handwrite on a piece of paper to avoid losing their progress \cite{tobin2016}.

\section{Fingerprinting an SQL query} \label{sec:fingerprinting}

The calculation of access tokens is the cornerstone of our model. It allows us to determine whether a student's query is equivalent to a predicted query, and if so, to provide them with the appropriate feedback. This section describes the logical principles and the implementation of this mechanism.

\subsection{Concepts} \label{sec:fingerprinting_concepts}

As defined in \cite{broder1993}, a \textbf{fingerprint} is a short tag for larger objects such that, if two fingerprints are different, the corresponding objects are also certainly different. The goal is to avoid collisions (i.e., the same fingerprint for two different objects) as much as possible. In our case, an object is an SQL query on a given database. A fingerprint (or access token if we adopt the student's perspective) is a big integer calculated from the hashes of the rows involved in the top level part of this statement.

These hashes are not calculated during the execution of a query, but rather pre-calculated during the creation of the database, and stored in a supplementary column called \texttt{hash}. To ensure cross-table uniqueness, the required \textbf{hash function} $h$ takes as input not only the content of a row, but also the name of the table it belongs to.

Fingerprinting a query heavily relies on \emph{aggregation functions}, for their “ability to summarize information” \cite{vanrenesse2003}. To paraphrase \cite{Jesus2015}, an \textbf{aggregation function} $\mathcal{A}: \mathbb{N}^I \to O$ maps a multiset\footnote{$\mathbb{N}^I$ represents the set of all functions from the set $I$ to the set $\mathbb{N}$ of natural numbers. In other words, it describes how many times each element of $I$ appears in the multiset.} of elements from a domain $I$ to a value in a domain $O$.

Note that the input being a multiset implies that: 1) the order of the elements is irrelevant; 2) a given element may occur several times. Similarly, we can already rule out two classes of SQL (so-called) aggregation functions: 1) the order-sensitive ones (like \texttt{group\_concat()}, \texttt{string\_agg(), \texttt{first()}}, \texttt{json\_arrayagg()}); 2) the duplicate-insensitive ones (like \texttt{count(distinct)}).

\subsection{Algorithm}

\subsubsection{Overview}

\begin{figure}[!ht]
    \includegraphics[width=0.7\linewidth]{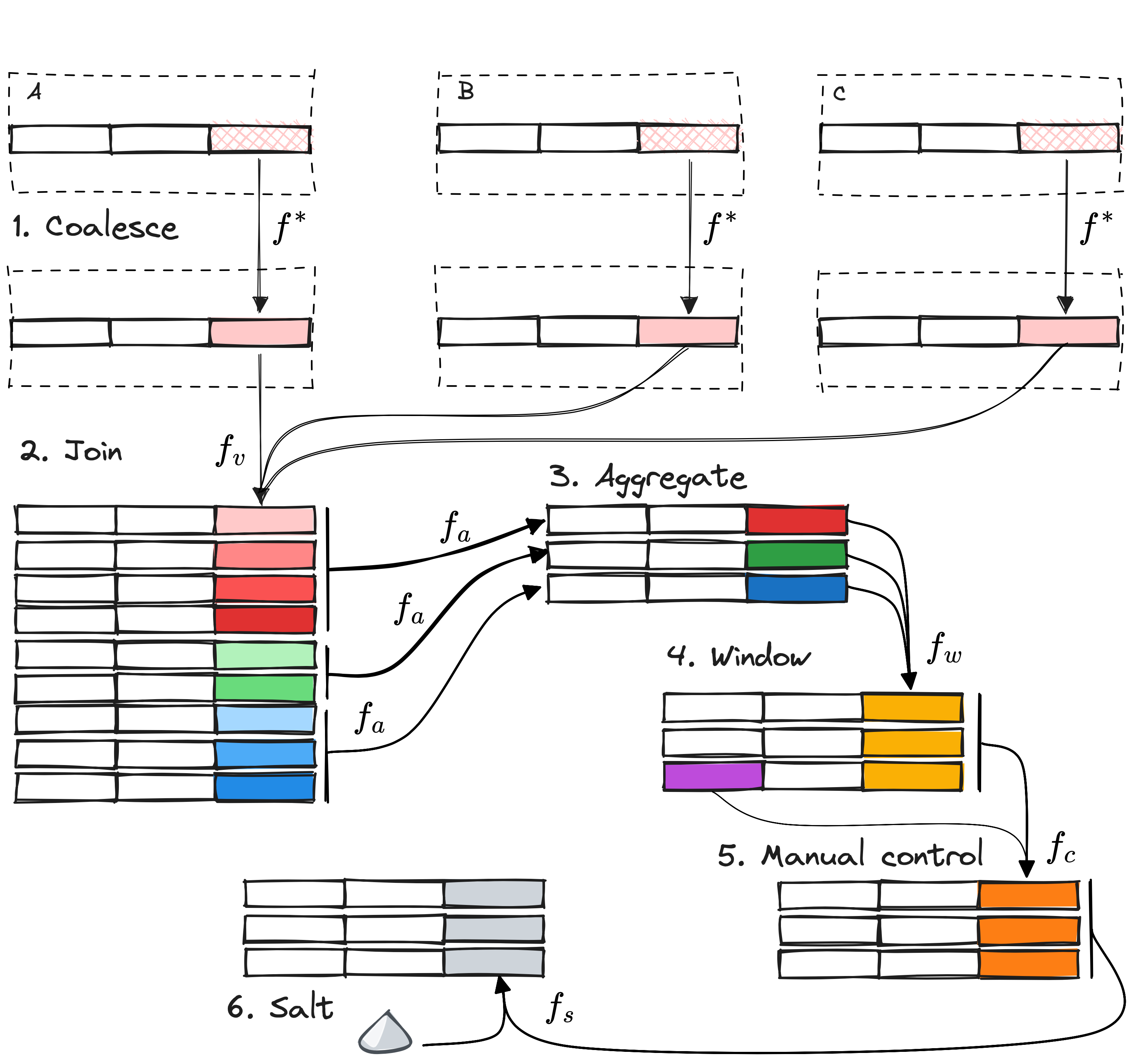}
    \caption{Overview of the token calculation for a SELECT query joining three tables $A$, $B$, and $C$. It starts with the values present in the \texttt{hash} columns of the tables involved in the query. These values are coalesced (1), combined horizontally (2), aggregated by groups (3), and aggregated vertically into a single value (4); the latter may be combined with a value extracted from the other columns of the result table (5), before being finally salted (6).}
    \Description{The diagram depicts a multi-step process starting from hashed values in three tables, followed by a series of transformations including coalescing, joining, aggregating, applying window functions, optional manual control, and finally salting the values to produce the final result.}
    \label{fig:complete_formula}
\end{figure}

Fig.~\ref{fig:complete_formula} outlines the complete process of fingerprinting a data retrieval query. It consists of a pipeline of six steps, each involving a dedicated function with its own properties:

\begin{enumerate}
\item \textbf{Coalescing function $f^*$.} As the original hash values are calculated from the rows of the tables, they cannot be NULL. However, when the main FROM clause involves an \emph{outer} join, NULL values may appear in the resulting \texttt{hash} columns. Coalescing these values with an arbitrary constant is mandatory to prevent their propagation to the final token.

\item \textbf{Variadic function $f_v$.} If $n>1$ tables are joined in the main FROM clause, the hash values of the rows of each table should be combined into a single value: $f_v(T_1\texttt{.hash}, T_2\texttt{.hash}, ..., T_n\texttt{.hash})$. Since the students cannot guess in which order the tables are enumerated in the solution, this function must be symmetric (i.e., order-insensitive).

\item \textbf{Aggregation function $f_a$.} When the outer query contains a GROUP BY clause, each group requires its own hash value. This function also must be symmetric, as the order of aggregation operations within each group is not deterministic.

\item \textbf{Window function $f_w$.} As soon as the result has multiple rows, the hashes need to be combined and distributed across the entire column. Here again, the function must be symmetric, as the order of the rows is not deterministic.

\begin{figure}[!ht]
    \includegraphics{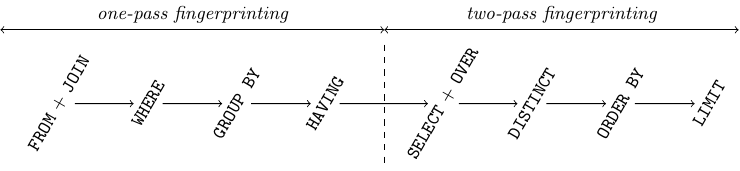}
    \caption{Logical order of execution of the clauses of an SQL query\protect\footnotemark. As the token is calculated in the SELECT clause, a single pass is sufficient to fingerprint the clauses executed \emph{before} this clause. For the other ones, the formula must include a placeholder and an instruction to manually replace it by a control value.}
    \Description{The figure divides the clauses of an SQL query into two groups: one-pass fingerprinting (FROM + JOIN, WHERE, GROUP BY, HAVING) and two-pass fingerprinting (SELECT + OVER, DISTINCT, ORDER BY, LIMIT). A dashed vertical line separates the two groups, indicating the point where the SELECT clause and subsequent clauses are processed.}
    \label{fig:clause_order}
\end{figure}
\footnotetext{
        The actual physical order can vary across DBMSs for performance or convenience reasons:
        for example, even though WHERE theoretically comes after FROM, filtering might be applied as soon as possible to reduce the dataset size; or, in the HAVING clause, references to aliases defined in the SELECT list may be tolerated (this is the default behavior of MySQL \href{https://dev.mysql.com/doc/refman/8.4/en/group-by-handling.html}{\cite{mysql}\texttt{/group-by-handling.html}}).
        However, as far as we know, none of these implementation details has any impact on the demarcation between one- and two-pass fingerprinting.
}

\item \textbf{Control function $f_c$.} When a calculation should be executed in the outer SELECT clause or after it (Fig.~\ref{fig:clause_order}), the formula includes a placeholder which the students are instructed to replace by a control value (e.g., “the third value in the \texttt{salary} column”, “the number of rows”, “the format string”, etc.). The function $f_c$ then combines this value with the one distributed in the previous stage by the window function $f_w$.

\item \textbf{Salt functions $f_s$.} If the token was constructed from a query's source-code and result table only, a wrong query could accidentally unlock a feedback associated with a different task. To avoid this, each task comes with its own unique salt function $f_s$. This ensures that the same query will produce different tokens for different tasks.

\end{enumerate}

\label{sec:global-properties}
\paragraph*{Properties of the composition.} Together, these six functions will be used to fingerprint a wide range of SQL queries, from simple data retrieval to complex aggregation and grouping operations. Their composition itself must therefore exhibit the following basic properties of a hash function:

\begin{description}
\item[Fixed-size output.] Not only must the final token belong to a certain integer range, but no step in the calculation must cause an overflow error.
\item[Determinism.] On the same database, the same query has always the same fingerprint. Once again, this implies that the functions $f_v$, $f_a$, and $f_w$ are symmetric.
\item[Few collisions.] Two queries producing different (unordered) result tables should have different fingerprints. We are not especially interested in the uniformity of the distribution of the tokens, as long as the probability of collision is low.
\item[Efficiency.] The calculation of the fingerprint must not significantly slow down that of the query.
\end{description}

On the other hand, our security requirements are modest: preventing the students from decrypting a feedback message by accident or through trial and error. We assume that their goal is to learn SQL, not to turn their database practical into a hacking challenge. First and foremost, the fingerprinting mechanism is intended to be a reliable way to check theirs answers against the predicted ones. We don't expect it to satisfy some properties commonly sought in cryptographic hash functions, such as avalanche effect, pre-image resistance, second pre-image resistance, collision resistance, pseudo-randomness, etc.

\subsubsection{Token formulas}
The composition pipeline of Fig.~\ref{fig:complete_formula} is mathematically described by Formula~\ref{eqn:formulas:agg-post}. It involves all the functions introduced above. However, in many cases, some of them can or must be dropped, leading to five formulas:

\begin{subequations}
\begin{align}
f_s\bigg(
    \phantom{f_c\Big(x, }
        f_w\big(
            \phantom{f_a\big(}
                f_v\big(
                    \{ f^*(h) \mid h \in \text{hashes} \}
                \big)
            \phantom{\big)}
        \big)
    \phantom{\Big)}
\bigg)
\tag{\theparentequation.1}
\label{eqn:formulas:basic} \\
f_s\bigg(
    \phantom{f_c\Big(x, }
          f_w\big(
            f_a\big(
                  f_v\big(
                      \{ f^*(h) \mid h \in \text{hashes} \}
                  \big)
            \big)
          \big)
    \phantom{\Big)}
\bigg)
\tag{\theparentequation.2}
\label{eqn:formulas:agg} \\
f_s\bigg(
    f_c\Big(x, 
        f_w\big(
            \phantom{f_a\big(}
                f_v\big(
                    \{ f^*(h) \mid h \in \text{hashes} \}
                \big)
              \phantom{\big)}
        \big)
    \Big)
\bigg)
\tag{\theparentequation.1c}
\label{eqn:formulas:basic-post} \\
f_s\bigg(
    f_c\Big(x, 
        f_w\big(
            f_a\big(
                f_v\big(
                    \{ f^*(h) \mid h \in \text{hashes} \}
                \big)
              \big)
        \big)
    \Big)
\bigg)
\tag{\theparentequation.2c}
\label{eqn:formulas:agg-post}\\
f_s\bigg(
    f_c\Big(x, 
        \makebox[0pt][l]{0}
        \phantom{f_w\big(
            f_a\big(
                f_v\big(
                    \{ f^*(h) \mid h \in \text{hashes} \}
                \big)
              \big)
        \big)}
    \Big)
\bigg)
\tag{\theparentequation.c}
\label{eqn:formulas:expression-only}
\end{align}
\label{eqn:formulas:all}
\end{subequations}

In our numbering scheme, a final \emph{c} denotes the inclusion of a control function $f_c$. Formulas~\ref{eqn:formulas:basic-post} and \ref{eqn:formulas:agg-post} are “controlled” variants of Formulas~\ref{eqn:formulas:basic} and \ref{eqn:formulas:agg}, respectively. The \textbf{dimension} (of an instance) \textbf{of a formula} is the arity of its variadic function $f_v$. For example, the instance of Formula~\ref{eqn:formulas:agg-post} presented in Fig.~\ref{fig:complete_formula} is 3-dimensional (it involves three tables $A$, $B$, and $C$). All instances of Formula~\ref{eqn:formulas:expression-only} are 0-dimensional.

\subsection{Fingerprinting scope}

\subsubsection{Main scope} \label{sec:main-scope}

Let us now examine which types of statements are covered by which fingerprinting mechanism.

\noindent\begin{minipage}{\textwidth}
\setlength\intextsep{-\baselineskip}
\begin{wrapfigure}{R}{0pt}
\begin{lstlisting}
    SELECT scalar_expressions,
           formula_§\ref{eqn:formulas:basic}§
    FROM   joined_tables
    WHERE  conditions
    §~§
    §~§
\end{lstlisting}
\end{wrapfigure}
\paragraph*{\hspace{1em}Basic queries.} Formula \ref{eqn:formulas:basic} is designed for the first type of query usually presented in an SQL course: it consists of a SELECT clause followed by a FROM clause, and an optional WHERE clause. Here, “basic” qualifies the outer query. In fact, nothing prevents it to contain subqueries (in the SELECT or the WHERE clauses) or derived queries (in the FROM clause). However, one must keep in mind that these nested parts are a “black box” for the outer SELECT's formula.
\end{minipage}\\

In any case, such queries can be fingerprinted in a single pass (no need for a control function $f_c$). As they don't involve any grouping or aggregation, the aggregation function $f_a$ \emph{cannot} be applied.

\paragraph*{Basic queries with post-processing.}
Formula \ref{eqn:formulas:basic-post} is intended for those queries that involve deduplication, ordering, and/or limiting. Such operations are applied after the calculation of the SELECT's expressions (cf. Fig.~\ref{fig:clause_order}), including the token formula. Therefore, they need the control function $f_c$ to take into account some chosen defining characteristics of the result table.

\begin{tabular}{p{0.35\textwidth}p{0.3\textwidth}p{0.25\textwidth}}
\begin{lstlisting}
SELECT DISTINCT scalar_exprs,
                formula_§\ref{eqn:formulas:basic-post}§
FROM   joined_tables
WHERE  conditions
§~§
\end{lstlisting}
&
\begin{lstlisting}
SELECT   scalar_exprs,
         formula_§\ref{eqn:formulas:basic-post}§
FROM     joined_tables
WHERE    conditions
ORDER BY columns
\end{lstlisting}    
&
\begin{lstlisting}
SELECT scalar_exprs,
       formula_§\ref{eqn:formulas:basic-post}§
FROM   joined_tables
WHERE  conditions
LIMIT  row_count
\end{lstlisting}
\end{tabular}

\noindent\begin{minipage}{\textwidth}
\setlength\intextsep{-\baselineskip}
\begin{wrapfigure}{R}{0pt}
\begin{lstlisting}
    SELECT aggregate_exprs,
           formula_§\ref{eqn:formulas:agg-post}§
    FROM   joined_tables
    WHERE  conditions
    §~§
    §~§
\end{lstlisting}
\end{wrapfigure}
\paragraph*{\hspace{1em}Queries with scalar aggregates.}
A \textbf{scalar aggregation} \cite{bowman2001,galindo2001} is a query that returns one single row consisting of one or several aggregated values. Such simple queries might be used to assess students' understanding of a given aggregation function. Full-form Formula~\ref{eqn:formulas:agg-post} is used here: the outcome of the aggregation has to be checked by a control function $f_c$, as it is calculated in the SELECT clause, making it unavailable in the first pass of the token calculation (cf. Fig.~\ref{fig:clause_order}).
\end{minipage}\\

Note that, since the aggregation outcome is scalar, the window function $f_w$ is unary idempotent \cite{marichal2015}, i.e., it has no effect on the final token and could be omitted. However, we recommend including it for consistency. Using a specific variation of Formula~\ref{eqn:formulas:agg-post} could, after a certain number of similar tasks, “leak” to the most astute students its structural correlation with the type of the expected query.

\noindent\begin{minipage}{\textwidth}
\setlength\intextsep{-\baselineskip}
\begin{wrapfigure}{R}{0pt}
\begin{lstlisting}
    SELECT   scalar_exprs, aggregate_exprs,
             formula_§\ref{eqn:formulas:agg}/\ref{eqn:formulas:agg-post}§
    FROM     joined_tables
    WHERE    conditions
    GROUP BY columns
    HAVING   conditions
    §~§
    §~§
\end{lstlisting}
\end{wrapfigure}
\paragraph*{\hspace{1em}Queries with vector aggregates.}
A \textbf{vector aggregation} \cite{bowman2001,galindo2001} is a query which specifies grouping columns as well as aggregates to be calculated for each group. Two formulas are possible here: \ref{eqn:formulas:agg} and \ref{eqn:formulas:agg-post}. Choosing one depends on the pedagogical purpose of the task. If the focus is on the grouping skills, the one-pass formula \ref{eqn:formulas:agg} is sufficient: the control function $f_c$ requiring an additional manual step, it is better to avoid it when possible.
If the focus is on the aggregation skills (or both), we fall back to Formula~\ref{eqn:formulas:agg-post}.
\end{minipage}\\

\paragraph*{Queries with or without top level aggregation.}
In \cite{date1997}, Date mentions that “the GROUP BY and HAVING clauses are actually redundant\,---\,i.e., for every select-expression that involves such a clause, there is a semantically equivalent one that does not”.

\noindent\begin{minipage}{\textwidth}
\setlength\intextsep{-\baselineskip}
\begin{wrapfigure}{R}{0pt}
\begin{lstlisting}
    SELECT DISTINCT dpt_id,
                    (SELECT avg(salary)
                    FROM employee B
                    WHERE A.dpt_id = B.dpt_id)
    FROM employee A
    §~§
    §~§
\end{lstlisting}
\end{wrapfigure}
For instance, \SELECT \verb|dpt_id, avg(salary)| \FROM \verb|employee| \GROUPBY \verb|dpt_id| could be rewritten with a correlated subquery as in the right-hand side query.
Although the queries are semantically equivalent, they cannot be fingerprinted in the same way. One calls for Formula~\ref{eqn:formulas:agg-post}, while the second one requires Formula~\ref{eqn:formulas:basic-post}. The two main categories of formulas (\ref{eqn:formulas:basic} and \ref{eqn:formulas:agg}) are never interchangeable. This is a case where choosing one over the other is not dictated by the semantics of the query, but by the instructor's intent to assess the mastery of a certain style. \hfill\accmat 
\end{minipage}\\

\noindent\begin{minipage}{\textwidth}
\setlength\intextsep{-\baselineskip}
\begin{wrapfigure}{R}{0pt}
\begin{lstlisting}
    SELECT scalar_exprs,
           formula_§\ref{eqn:formulas:expression-only}§
    [FROM  DUAL]
    §~§
    §~§
\end{lstlisting}
\end{wrapfigure}
\paragraph*{\hspace{1em}Expression-only queries.} 
One might want to test students' ability to look up and understand the documentation for complex scalar functions, such as \verb!date_format()! in MySQL or \verb!to_char()! in PostgreSQL (they are often vendor-specific). To this purpose, a query reduced to a scalar expression, with no FROM clause (in most RDBMSs) or FROM DUAL (in Oracle), is usually sufficient. In this situation, the most degenerate formula (\ref{eqn:formulas:expression-only}) applies.
\end{minipage}\\

\subsubsection{Extended scope} \label{sec:extended-scope}

Some difficulties and their workarounds are presented in the following paragraphs.

\paragraph*{DML statements.}
The simplest way to fingerprint an UPDATE, DELETE, or INSERT statement is to check the result table after the modification has been applied. This post-processing step can be done by executing \SELECT Formula~\ref{eqn:formulas:basic} \FROM the affected table.

\begin{tabular}{p{0.3\textwidth}p{0.3\textwidth}p{0.3\textwidth}}
\begin{lstlisting}
UPDATE table_name
SET    columns
WHERE  conditions
;
SELECT formula_§\ref{eqn:formulas:basic}§
FROM   table_name
\end{lstlisting}
&
\begin{lstlisting}
DELETE FROM table_name
WHERE  conditions
;

SELECT formula_§\ref{eqn:formulas:basic}§
FROM   table_name
\end{lstlisting}    
&
\begin{lstlisting}
INSERT INTO table_name
VALUES (new_values_1),
       (new_values_2)... 
;
SELECT formula_§\ref{eqn:formulas:basic}§
FROM   table_name
\end{lstlisting}
\end{tabular}

As far as we know, writing a DML statement that would directly produce the resulting token is not possible\footnote{
    As of June 2024, some DBMSs (PostgreSQL, SQLite, Oracle, SQL Server, etc.) support a non-standard RETURNING clause that might be thought of as a potential solution. Placed at the end of a modification command, it returns one result row for each database row that has been deleted, inserted, or updated. The keyword is followed by a list of expressions that define the values of the columns in the result table \href{https://www.sqlite.org/lang_returning.html}{\cite{sqlite}\texttt{/lang\_returning.html}}. However, window and aggregation functions are not allowed there, which limits their applicability to single-row modifications. In these cases, Formula~\ref{eqn:formulas:expression-only} may be used, but could not tell whether the checked statement were an INSERT, an UPDATE, a DELETE or even a trivial SELECT query.
}.

\noindent\begin{minipage}{\textwidth}
\setlength\intextsep{-\baselineskip}
\begin{wrapfigure}{R}{0pt}
\begin{lstlisting}
    SELECT emp_name, formula_§\ref{eqn:formulas:basic}§
    FROM   employee A
    JOIN   ( SELECT emp_id FROM employee
             EXCEPT
             SELECT manager_id FROM department
           ) B ON A.emp_id = B.emp_id
    §~§
    §~§
\end{lstlisting}
\end{wrapfigure}
\paragraph*{\hspace{1em}Derived queries with no aggregation.}
Some operations are impossible to fingerprint when they appear at the top level of a query. For example, a set operation cannot correctly work if its operands include the hash values, as they are all distinct. A workaround is to perform the set operation in a derived table, and then “restore” the hashes by joining it with the original ones. Formula~\ref{eqn:formulas:basic} can then be applied, in the same dimension and with same resulting token as an equivalent NOT EXISTS query. \hfill\accmat
\end{minipage}\\

Formula~\ref{eqn:formulas:basic} also applies when the inner SELECT clause can include a \texttt{hash} column. The raw hashes are then “transmitted” to the formula of the outer SELECT clause. This takes more than simply copying and pasting a formula, and starts to push our system to the limits of its practicality. At the very least, the problem statement should involve some degree of scaffolding to guide the students through the process \cite{wood1976}.

These two fingerprinting techniques (“hash restoration” or “raw hash transmission”) cannot be used when the derived query involves an aggregation.

\noindent\begin{minipage}{\textwidth}
\setlength\intextsep{-\baselineskip}
\begin{wrapfigure}{R}{0pt}
\begin{lstlisting}
    SELECT aggregate_exprs,
           formula_§\ref{eqn:formulas:expression-only}§
    FROM   ( SELECT   scalar_exprs,
                      aggregate_exprs
             FROM     joined_tables
             GROUP BY columns
           ) A
    WHERE  conditions
    §~§
    §~§
\end{lstlisting}
\end{wrapfigure}
\paragraph*{\hspace{1em}Derived queries with aggregation.}
SQL has no direct support for multi-level aggregations: \SELECT \texttt{max(avg(salary))} is incorrect; average salaries have to be calculated in a derived query, and the maximum of these averages in the outer query. With this pattern, it is impossible to “transmit” the raw hashes: they must be aggregated first, which implies breaking down the token formula between the inner and the outer SELECT clauses (see four examples in Appendix, Table~\ref{tab:six_variations}). Here, it's worth remembering that SQLab's \emph{raison d'être} is to get students learning and practicing SQL, \emph{not} mastering the intricacies of our fingerprinting mechanism. We thus recommend a simple workaround: drop the hash values altogether, and only rely on a control value copied from the result table and pasted into Formula~\ref{eqn:formulas:expression-only}.
\end{minipage}\\

\paragraph*{Common Table Expressions.}
The queries studied in the previous two paragraphs can be rewritten using one or more CTE. The considerations relative to the fingerprint still hold. The hash transmission technique works as well with \emph{recursive} CTE. \hfill\accmat

\subsubsection{Out-of-scope area} \label{sec:out-of-scope}

The latter cases were arguably lying at the edge of the practicality, but the following categories of SQL statements are definitely off-limits, i.e., cannot be fingerprinted by our mechanism:
\begin{itemize}
\item queries, both correct and incorrect, that return an empty table;
\item queries that raise a syntax error or a runtime error without returning a result table;
\item statements that pertain to other SQL sublanguages, such as DDL (CREATE, ALTER, DROP), DCL (GRANT, REVOKE), or TCL (COMMIT, ROLLBACK, SAVEPOINT).
\end{itemize}

\subsection{Fingerprinting accuracy} \label{sec:accuracy}
In this section, we assume that a fingerprint can be calculated, and proceed to evaluate the capacity of our mechanism to assess the correctness (or lack thereof) of an answer. 

\subsubsection{False negatives} Avoiding situations where a correct query is fingerprinted as incorrect entirely falls to the designer of the task: as long as she has marked as correct the token(s) produced by the queries she considers correct, these same queries will deterministically produce the same tokens, and thus be found correct by the system. Does she need to think of all the possible correct queries her students might write? Not necessarily: two gold queries that produce the same token are a rare exception rather than the rule.

\subsubsection{False positives} \label{sec:false-positives}
They occur when an incorrect query is fingerprinted as correct. As we know, the vast majority of incorrect queries result in an incorrect table, i.e., a different table than the gold one. The good news is that a token can more or less be regarded as a hash of the result table: if the tables are different, the tokens will be different. As we will see in Theorem~\ref{thm:accuracy}, this assertion needs to be nuanced, but it has a compelling corollary: an incorrect query is easier to detect when it actually produces an incorrect table. This generally depends on the database state, and thus on the designer's commitment to introduce enough “chaff” to make incorrect queries yield a result distinct, not only from the gold query, but also from each other. The latter property is key to the relevance of the hint which will be triggered by a wrong answer. However, there still remains another types of false positives, which are entirely intrinsic to the fingerprinting mechanism itself.

\begin{table}
    \tablecaption{
        All of these queries address the same question: “Which employee works 5 hours on project 30?”. They all return the same single-cell table, \texttt{'Ahmad V. Jabbar'}. For formatting reasons, USING is preferred to ON, and the condition “\texttt{hours = 5} \AND \texttt{prj\_id = '30'}” is replaced with “\texttt{etc.}” if needed.
        Before execution, the SELECT clause is augmented with either the one-dimensional or two-dimensional form of the basic Formula~\ref{eqn:formulas:basic}. The resulting tokens are displayed in columns $t_1$ and $t_2$, respectively. For better readability, actual token values are bijectively replaced with a counter. Only odd numbers are used for $t_1$, and even numbers for $t_2$.
        If a query lacks one of the required tables \texttt{A} and/or \texttt{B}, the token calculation raises an error, represented by a blank cell.
        Queries are identified by a letter and a number. Letter $G$ denotes a $G$old answer. Here, such a query can produce three different tokens: 1, 2, and 3.
        Letter $O$ denotes an $O$verly complicated correct answer. If such a query produces a token 1, 2, or 3, it indicates that the complication was not detected by the fingerprinting mechanism: we highlight this token in \textcolor{ACMOrange}{orange}.
        Letter $I$ denotes an $I$ncorrect answer, i.e., a query that might result in a wrong table for a different state of the database. If such a query produces a token 1, 2, or 3, it indicates that the fingerprinting mechanism failed to detect the error on the \emph{actual} state: we highlight this token in \textcolor{ACMRed}{red}. \hfill\accmat
    }
    
    \small\sffamily
    \begin{tabular}{l|cc|l}
    id. & $t_1$ & $t_2$ & raw query (before the injection of Formula~\ref{eqn:formulas:basic}) \\
    \hline
    $G_{1}$ & 1 &  & \lstinline[basicstyle=\scriptsize\ttfamily,keepspaces=true]|SELECT emp_name FROM employee A WHERE emp_id IN (SELECT emp_id FROM works_on WHERE etc.)| \\
\cline{1-3}
$G_{2}$ & 1 & 2 & \lstinline[basicstyle=\scriptsize\ttfamily,keepspaces=true]|SELECT A.emp_name FROM employee A JOIN works_on B USING (emp_id) WHERE hours = 5 AND prj_id = '30'| \\
$G_{3}$ & 3 & 2 & \lstinline[basicstyle=\scriptsize\ttfamily,keepspaces=true]|SELECT B.emp_name FROM employee B JOIN works_on A USING (emp_id) WHERE hours = 5 AND prj_id = '30'| \\
\cline{1-3}
$G_{4}$ & 1 & 2 & \lstinline[basicstyle=\scriptsize\ttfamily,keepspaces=true]|SELECT A.emp_name FROM employee A, works_on B WHERE A.emp_id = B.emp_id AND etc.| \\
$G_{5}$ & 3 & 2 & \lstinline[basicstyle=\scriptsize\ttfamily,keepspaces=true]|SELECT B.emp_name FROM employee B, works_on A WHERE A.emp_id = B.emp_id AND etc.| \\
\hline
$O_{1}$ & \textcolor{ACMOrange}{1} & \textcolor{ACMOrange}{2} & \lstinline[basicstyle=\scriptsize\ttfamily,keepspaces=true]|SELECT A.emp_name FROM employee A RIGHT JOIN works_on B USING (emp_id) WHERE etc.| \\
$O_{2}$ & \textcolor{ACMOrange}{3} & \textcolor{ACMOrange}{2} & \lstinline[basicstyle=\scriptsize\ttfamily,keepspaces=true]|SELECT B.emp_name FROM employee B RIGHT JOIN works_on A USING (emp_id) WHERE etc.| \\
\cline{1-3}
$O_{3}$ & \textcolor{ACMOrange}{1} & 4 & \lstinline[basicstyle=\scriptsize\ttfamily,keepspaces=true]|SELECT A.emp_name FROM employee A JOIN employee B USING (emp_id) JOIN works_on O USING (emp_id) WHERE etc.| \\
$O_{4}$ & \textcolor{ACMOrange}{3} & \textcolor{ACMOrange}{2} & \lstinline[basicstyle=\scriptsize\ttfamily,keepspaces=true]|SELECT B.emp_name FROM employee B JOIN employee O USING (emp_id) JOIN works_on A USING (emp_id) WHERE etc.| \\
$O_{5}$ & \textcolor{ACMOrange}{1} & \textcolor{ACMOrange}{2} & \lstinline[basicstyle=\scriptsize\ttfamily,keepspaces=true]|SELECT O.emp_name FROM employee O JOIN employee A USING (emp_id) JOIN works_on B USING (emp_id) WHERE etc.| \\
\cline{1-3}
$O_{6}$ & \textcolor{ACMOrange}{1} & \textcolor{ACMOrange}{2} & \lstinline[basicstyle=\scriptsize\ttfamily,keepspaces=true]|SELECT A.emp_name FROM employee A JOIN works_on B USING (emp_id) JOIN project O USING (prj_id) WHERE etc.| \\
$O_{7}$ & 5 & 6 & \lstinline[basicstyle=\scriptsize\ttfamily,keepspaces=true]|SELECT B.emp_name FROM employee B JOIN works_on O USING (emp_id) JOIN project A USING (prj_id) WHERE etc.| \\
$O_{8}$ & \textcolor{ACMOrange}{3} & 8 & \lstinline[basicstyle=\scriptsize\ttfamily,keepspaces=true]|SELECT O.emp_name FROM employee O JOIN works_on A USING (emp_id) JOIN project B USING (prj_id) WHERE etc.| \\
\hline
$I_{1}$ & \textcolor{ACMRed}{1} & \textcolor{ACMRed}{2} & \lstinline[basicstyle=\scriptsize\ttfamily,keepspaces=true]|SELECT A.emp_name FROM employee A LEFT JOIN works_on B USING (emp_id) WHERE etc.| \\
$I_{2}$ & \textcolor{ACMRed}{3} & \textcolor{ACMRed}{2} & \lstinline[basicstyle=\scriptsize\ttfamily,keepspaces=true]|SELECT B.emp_name FROM employee B LEFT JOIN works_on A USING (emp_id) WHERE etc.| \\
\cline{1-3}
$I_{3}$ &  &  & \lstinline[basicstyle=\scriptsize\ttfamily,keepspaces=true]|SELECT 'Ahmad V. Jabbar' | \\
\cline{1-3}
$I_{4}$ & \textcolor{ACMRed}{1} &  & \lstinline[basicstyle=\scriptsize\ttfamily,keepspaces=true]|SELECT emp_name FROM employee A WHERE emp_name = 'Ahmad V. Jabbar'| \\
$I_{5}$ & \textcolor{ACMRed}{1} &  & \lstinline[basicstyle=\scriptsize\ttfamily,keepspaces=true]|SELECT emp_name FROM employee A WHERE emp_id = '987987987'| \\
\cline{1-3}
$I_{6}$ & 7 &  & \lstinline[basicstyle=\scriptsize\ttfamily,keepspaces=true]|SELECT DISTINCT 'Ahmad V. Jabbar' FROM employee A| \\
$I_{7}$ & 9 & 10 & \lstinline[basicstyle=\scriptsize\ttfamily,keepspaces=true]|SELECT DISTINCT 'Ahmad V. Jabbar' FROM employee A, works_on B| \\
\cline{1-3}
$I_{8}$ & \textcolor{ACMRed}{1} & \textcolor{ACMRed}{2} & \lstinline[basicstyle=\scriptsize\ttfamily,keepspaces=true]|SELECT emp_name FROM employee A JOIN works_on B USING (emp_id) WHERE hours = 5| \\
\cline{1-3}
$I_{9}$ & \textcolor{ACMRed}{1} & 6 & \lstinline[basicstyle=\scriptsize\ttfamily,keepspaces=true]|SELECT A.emp_name FROM employee A JOIN project B USING (dpt_id) WHERE prj_id = '30' and sex = 'M'| \\
$I_{10}$ & 5 & 6 & \lstinline[basicstyle=\scriptsize\ttfamily,keepspaces=true]|SELECT B.emp_name FROM employee B JOIN project A USING (dpt_id) WHERE prj_id = '30' and sex = 'M'| \\
    \end{tabular}
    
    \label{tab:same_result_queries_basic}
\end{table}

\subsubsection{Worst-case example for false positives} \label{sec:worst-case}
Let us create an example of a task with several gold and wrong answers, all resulting in the same table. To maximize the number of possible “collisions”, the result is as small as possible: a single-cell table. Note that execution matching would be useless here, as all queries would be marked as correct.

Table~\ref{tab:same_result_queries_basic} presents possible queries to identify the unique employee who works 5 hours on project 30. This task requires filtering the \texttt{works\_on} table and to join it with the \texttt{employee} table. This falls into the scope of Formula~\ref{eqn:formulas:basic}. In the one-dimensional case (column $t_1$), students can express the join either in the WHERE clause ($G_1$), or in the FROM clause ($G_2$, ..., $G_5$). In the two-dimensional case, $G_1$ raises an error (as the \texttt{works\_on} table is unknown at the top level), while symmetric queries $G_2$, $G_3$ and $G_4$, $G_5$ produce the same token (as the variadic function $f_v$ itself is symmetric).

Constructing overly complex yet correct queries is easy. $O_1$ and $O_2$ are variations of $G_2$ and $G_3$, respectively, with a RIGHT keyword in the join that is both unnecessary and undetectable, since \texttt{emp\_id} is a not-nullable foreign key in \texttt{works\_on}. Introducing a useless self-join ($O_3$, $O_4$, $O_5$) or an extraneous third table ($O_6$, $O_7$, $O_8$) might be detected or not, depending on whether the formula refers to the offending table.

In the database state of Fig.~\ref{fig:company}, many incorrect queries can yield the expected single-cell result. 
Queries $I_1$ and $I_2$ are symmetric of $O_1$ and $O_2$, respectively. They would produce a different table if at least one employee were not working on any project.
Queries $I_3$ to $I_7$ exhibit \emph{condition-based value constraints}, a subcategory of deceptive errors that “denotes the infusion of magic numbers [...] with the intent to selectively sieve records that align with the OJS’s projected outcomes” \cite{wang2024}. Although query $I_3$ raises an error when complemented with formula, queries $I_4$ and $I_5$ are indistinguishable from query $G_1$. However, desperate uses of DISTINCT ($I_6$ and $I_7$) are detected because DISTINCT is evaluated \emph{after} the formula.
Erroneous conditions ($I_8$, $I_9$, $I_{10}$) may or may not be detected, depending on the state of the database. As we have said, it is the designer's responsibility to introduce “chaff” until every reasonably incomplete or incorrect condition yields a unique result.

It is worth noting that, for a mutation analysis based on the comparison of result tables, every overly complex correct queries $O_\#$ is an equivalent (i.e., unkillable) mutant. Nevertheless, our fingerprinting mechanism is able to detect either 1/8th or 3/8th of them, depending on the choice of the formula. Concerning the incorrect queries $I_\#$, although none is killed by the database state, 4/10th or 7/10th of them are already detected by our mechanism.

\subsubsection{Accuracy theorem} \label{sec:theory}

To understand how queries that return the same result table can produce different tokens, we need to introduce a simple query transformation.

\begin{definition}[Starring]
\begin{enumerate}
\item Let $Q$ be a query compatible with a $n$-dimensional instance $F_n$ of Formula~\ref{eqn:formulas:basic}, and let $T_1$, $T_2$, ..., $T_n$ be the tables involved by $F_n$. We call \textbf{starring}, the transformation that suppresses the clauses and operations executed after the outer SELECT clause (cf. Fig.~\ref{fig:clause_order}), and replaces the outer SELECT clause with $T_1\mathtt{.*}, T_2\mathtt{.*}, ..., T_n\mathtt{.*}$ (where $\mathtt{.*}$ is SQL notation). The result is the \textbf{starred query} $Q^\mathtt{*}$.
\item For Formula~\ref{eqn:formulas:agg}-compatible queries, the replacement expression is $\mathcal{C}(T_1\mathtt{.*}, T_2\mathtt{.*}, ..., T_n\mathtt{.*})$, where $\mathcal{C}$ is a vectorized null-preserving concatenation function.
\end{enumerate}
\end{definition}

Example for a two-dimensional Formula~\ref{eqn:formulas:basic}. $Q$ and $Q^\mathtt{*}$ return 16 and 13 rows, respectively: \hfill\accmat

\begin{tabular}{ll}
$Q$
&
\begin{lstlisting}
SELECT DISTINCT emp_id, location
FROM works_on A JOIN project B ON A.prj_id = B.prj_id
ORDER BY emp_id
\end{lstlisting}
\\
$Q^\mathtt{*}$
&
\begin{lstlisting}
SELECT A.*, B.*
FROM works_on A JOIN project B ON A.prj_id = B.prj_id
\end{lstlisting}  
\end{tabular}

Example for a one-dimensional Formula~\ref{eqn:formulas:agg}. $Q$ and $Q^\mathtt{*}$ both return 4 rows: \hfill\accmat

\begin{tabular}{ll}
$Q$
&
\begin{lstlisting}
SELECT location, count(*)
FROM project A GROUP BY location
\end{lstlisting}
\\
$Q^\mathtt{*}$
&
\begin{lstlisting}
SELECT §$f$§(A.prj_name), §$f$§(A.prj_id), §$f$§(A.location), §$f$§(A.dpt_id), §$f$§(A.hash)
FROM project A GROUP BY location
\end{lstlisting}
\end{tabular}

... where $f$ is, for instance, \verb!json_arrayagg! (MySQL, Oracle, SQL Server, etc.).

\begin{definition}[Perfect setting] The combination of a database state $S$ and the functions involved in the token calculation is said to be \textbf{perfect} if the following conditions are met:
\begin{enumerate}
\item Hash function $h$ is injective on $S$, i.e., all values on the columns \verb!hash! are distinct positive integers.
\item Coalescing function $f^*$ maps the NULL value to a positive integer that is distinct from actual outputs of $h$.
\item Variadic function $f_v$ is injective on the set of all combinations with repetitions of actual outputs of $h$ and $f^*$.
\item Aggregation function $f_a$, window function $f_w$, control function $f_c$, and salt function $f_s$ are injective on the set of actual outputs of $f_v$, $f_a$, $f_w$, and $f_c$, respectively.
\end{enumerate}
\end{definition}

Only the first two conditions are easy to check during the database construction. Note however that the goal of these conditions is to guarantee that the fingerprinting mechanism has no collision across all existing queries associated with distinct feedbacks. For this purpose, it is sufficient to check \emph{a posteriori} that the generated tokens are indeed distinct. While the constraints of a perfect setting are too strong to be useful in practice, they constitute a necessary theoretical framework to establish the main result of this paper:

\begin{theorem}[Fingerprinting accuracy]\label{thm:accuracy}
In a perfect setting, two queries $Q_1$ and $Q_2$ with the same formula \ref{eqn:formulas:basic} or \ref{eqn:formulas:agg} and no derived table in their FROM clause produce the same token if and only if their starred versions $Q_1^\mathtt{*}$ and $Q_2^\mathtt{*}$ result in the same table (ignoring the null columns and the order of columns, rows, and in-group rows).
\end{theorem}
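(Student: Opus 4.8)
The plan is to show that, in a perfect setting, the token is an \emph{injective} function of a canonical ``hash data structure'' extracted from the query, and that this same data structure is exactly what the starred query materialises as a table. Once both facts are in place, token equality and starred-table equality become two readings of the equality of a single object, and the theorem follows.

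First I would make the token computation explicit as a composition applied to the intermediate relation $R(Q)$ produced by the clauses that precede the outer SELECT (FROM/JOIN and WHERE for Formula~\ref{eqn:formulas:basic}; additionally GROUP BY and HAVING for Formula~\ref{eqn:formulas:agg}). For Formula~\ref{eqn:formulas:basic}, each combined row $t\in R(Q)$ contributes $v(t)=f_v(f^*(h_1(t)),\dots,f^*(h_n(t)))$ and the token is $f_s(f_w(\{\!\!\{\,v(t):t\in R(Q)\,\}\!\!\}))$; for Formula~\ref{eqn:formulas:agg} one inserts a per-group layer $f_a$ between $f_v$ and $f_w$. I would then define $D(Q)$ to be the multiset (for the grouped case, the multiset of per-group multisets) of coalesced hash-multisets $\{\!\!\{\,f^*(h_1(t)),\dots,f^*(h_n(t))\,\}\!\!\}$, and observe that the starred query returns precisely $D(Q)$: because there is no derived table, the columns selected by $T_i\mathtt{.*}$ are the original ones, so injectivity of $h$ on $S$ turns the multiset of full combined rows into the multiset of hash-tuples; selecting $T_1\mathtt{.*},\dots,T_n\mathtt{.*}$ up to column reordering matches the passage to \emph{unordered} hash-multisets, and for Formula~\ref{eqn:formulas:agg} the null-preserving concatenation $\mathcal{C}$ records exactly the in-group multiset that $f_a$ consumes.

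The core of the argument is the forward implication (equal tokens $\Rightarrow$ equal starred tables), and this is where every clause of the perfect setting is used. I would peel the pipeline from the outside in: injectivity of $f_s$ recovers its argument; injectivity of $f_w$ on the actual outputs of the preceding stage recovers the outer multiset; for the grouped case, injectivity of $f_a$ recovers each in-group multiset; injectivity of $f_v$ on combinations with repetitions, together with its symmetry, recovers each unordered multiset $\{\!\!\{f^*(h_1(t)),\dots\}\!\!\}$; and injectivity of $f^*$ together with injectivity of $h$ on $S$ recovers the underlying rows. The net effect is that equal tokens force $D(Q_1)=D(Q_2)$, hence equal starred tables modulo the order of columns, rows, and in-group rows, and modulo the null columns produced by outer joins (all NULL hashes being collapsed by $f^*$ to a single constant). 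The reverse implication is routine: equal starred tables give equal $D(\cdot)$ through the same dictionary, and determinism of the pipeline then yields equal tokens, with no appeal to injectivity.

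I expect the main obstacle to be the bookkeeping around NULLs and the symmetric functions, i.e.\ making the dictionary between $D(Q)$ and the starred table precise enough that the qualifications in the statement (``ignoring the null columns and the order of columns, rows, and in-group rows'') correspond \emph{exactly} to the identifications forced by $f^*(\mathrm{NULL})$ being one shared constant and by $f_v$, $f_a$, $f_w$ being symmetric. A secondary care point is justifying that the no-derived-table hypothesis is what licenses using injectivity of $h$ to pass between rows and hashes, since a derived table could expose transformed, non-injective hash columns and break the correspondence.
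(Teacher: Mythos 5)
Your proof is correct and reaches the same conclusion from the same two ingredients (injectivity of $h$ and of the formula pipeline), but it is organized quite differently from the paper's. The paper proves the forward direction by contradiction, with an explicit case analysis on whether the two outer FROM clauses reference the same tables and on whether the distinguishing sub-tuples are primary or null; this surfaces the two edge cases that motivate the qualifications in the statement (a hash unique to one query, and useless outer joins whose all-NULL columns must be ignored), but it reasons only about multisets of primary sub-tuples and counts of null sub-tuples. Your route instead factors both the token and the starred table through a single canonical object $D(Q)$ --- the multiset (of per-group multisets, for Formula~\ref{eqn:formulas:agg}) of per-row coalesced hash-multisets --- and shows the token determines $D(Q)$ by peeling the injective pipeline from the outside in, while the starred table determines $D(Q)$ via injectivity of $h$ modulo exactly the identifications named in the theorem. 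This buys uniformity (no case split on table lists, and Formulas~\ref{eqn:formulas:basic} and \ref{eqn:formulas:agg} differ only by one layer) and is in one respect sharper: it correctly separates starred tables in which the \emph{same} multiset of primary sub-tuples and the \emph{same} number of null sub-tuples are recombined into different rows (e.g., $(a_1,b_1),(a_2,\mathrm{NULL})$ versus $(a_1,\mathrm{NULL}),(a_2,b_1)$), a configuration that the paper's step on shared primary tuples, which argues only at the level of sub-tuple multisets and null-tuple cardinalities, does not explicitly resolve. Your two flagged care points are the right ones; the only interpretive commitment you should make explicit is that condition 4 of the perfect setting must be read as injectivity of $f_a$ and $f_w$ \emph{as functions of their input multisets}, since that is what licenses recovering the outer and in-group multisets during the peeling step.
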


\begin{proof}
    See Annex~\ref{sec:proof-accuracy}.
\end{proof}

Consider two examples of queries that produce the same table, but different tokens:

\begin{enumerate}

\item \begin{tabular}{p{0.37\textwidth}p{0.52\textwidth}}
\begin{lstlisting}
SELECT count(*) as subordinates
     , formula_§\ref{eqn:formulas:agg-post}§
FROM employee
WHERE supervisor_id IS NOT NULL
\end{lstlisting}    
&
\begin{lstlisting}
SELECT count(supervisor_id) as subordinates
     , formula_§\ref{eqn:formulas:agg-post}§
FROM employee
\end{lstlisting}
\end{tabular}~\\[-\bigskipamount]

Their starred queries indeed produce different tables: the left one aggregates 7 rows (those with a non-NULL \texttt{supervisor\_id}), while the right one aggregates all 8 rows. Since the original queries are semantically equivalent, the instructor should mark one as the gold query, and the other as a variant. \hfill\accmat

\item \begin{tabular}{p{0.47\textwidth}p{0.41\textwidth}}
\begin{lstlisting}
SELECT emp_id, salary, formula_§\ref{eqn:formulas:basic-post}§
FROM employee
WHERE salary = (SELECT max(salary)
                FROM employee)
\end{lstlisting}    
&
\begin{lstlisting}
SELECT emp_id, salary, formula_§\ref{eqn:formulas:basic-post}§
FROM employee
ORDER BY salary DESC
LIMIT 1
\end{lstlisting}
\end{tabular}~\\[-\bigskipamount]

The starred form of the left query returns a single-row table, while the right one loses its LIMIT clause and returns 8 rows. Although the two original queries return the same table \emph{in this state of the database}, they are not equivalent. The difference in the tokens enables the instructor to mark the left query as gold, and the right query as incorrect. \hfill\accmat

\end{enumerate}

\subsubsection{Extensions and limitations}

The preceding analysis should have clarified how our model diverges from the conventional OJS paradigm: it performs execution matching on the result tables of the \emph{starred queries}, as opposed to the original queries. These starred tables contain \emph{more} information, as they “record” which rows of which source tables are involved in the execution of the (outer) query. Nevertheless, they also may contain \emph{less} information, namely, they “ignore” every operation executed during or after the outer SELECT clause: projected aggregations, deduplication, row ordering, row limiting, etc. (right part of Fig.~\ref{fig:clause_order}).

\paragraph*{Extensions.}
The controlled versions \ref{eqn:formulas:basic-post} and  \ref{eqn:formulas:agg-post} of the two main formulas were introduced to address these shortcomings, albeit at the expense of a second pass, that requires the student to copy and paste a designated value. They strictly extend the scope of Theorem~\ref{thm:accuracy} to encompass queries where these operations are deemed significant. However, the accuracy of the fingerprinting mechanism is still limited.

\paragraph*{Limitations relative to SQL features.}
Our model cannot assess students' mastery of the following features:
\begin{itemize}
    \item Syntactically distinct but semantically equivalent constructs, including but not limited to:

    \begin{tabular}{lclp{1em}|p{1em}lcl}
        ~\\[-\medskipamount]
        \lstinline|x BETWEEN a AND b| & $\equiv$ & \lstinline|x >= a AND x <= b| &&&
        \lstinline|foo bar| & $\equiv$ & \lstinline|foo AS bar|\\
        \lstinline|x IN ('foo', 'bar')| & $\equiv$ & \lstinline|x = 'foo' OR x = 'bar'| &&&
        \lstinline|USING (id)| & $\equiv$ & \lstinline|ON a.id = b.id| \\
        \lstinline|EXISTS (SELECT 1 ...)| & $\equiv$ & \lstinline|EXISTS (SELECT * ...)| &&&
        \lstinline|LEFT JOIN| & $\equiv$ & \lstinline|LEFT OUTER JOIN|\\
        \lstinline|WHERE x <> y| & $\equiv$ & \lstinline|WHERE x != y| &&&
        \lstinline|JOIN| & $\equiv$ & \lstinline|INNER JOIN|\\
        ~\\[-\medskipamount]
    \end{tabular}

    \item Minute details of the projection, such as the order of the resulting columns, their aliases, the absence of extraneous columns, etc.
    \item Column qualification, identifier capitalization, formatting, and other stylistic choices.
\end{itemize}

\paragraph*{Limitations relative to SQL errors.}
Our model cannot detect and/or differentiate between all types of errors. Let us describe its behavior on the query outcomes classified by Taipulus et al. in Table~\ref{tab:taipalus} \cite{taipalus2018}. Like these authors, we reuse the numbering of the “quite complete list” originally proposed in \cite{brass2006}. We provide working examples in the accompanying notebook whenever interesting or necessary. \hfill\accmat

\begin{itemize}
    \item Correct result table.
    \begin{itemize}
        \item Gold queries are covered by our mechanism within the limits of Theorem~\ref{thm:accuracy} and its two-pass extension.
        \item Complications
        2, 
        6, 
        7, 
        8, 
        11, 
        12, 
        13, 
        14, 
        17, 
        18, 
        20, 
        23, 
        24, 
        25, 
        26, 
        30, 
        35, 
        and 36 
        are undetectable.
        \item Complication 5 (unused tuple variable) is detected. 
        \item Complications
        15, 
        19, 
        and 22 
        raise a runtime error due to an incompatibility with the provided formula.
    \end{itemize}
    \item Incorrect result table.
    \begin{itemize}
        \item Logical errors can be detected as long the database state has appropriate chaff.
        \item Semantic errors
        3 
        and 4 
        are undetectable, but unimportant (constant or duplicate output columns).
        \item Semantic errors
        27, 
        28, 
        29, 
        31, 
        33, 
        34, 
        37, 
        and 38 
        are detected.
        \item Semantic errors
        1 
        and 10 
        return an empty result, and thus no token.
    \end{itemize}
    \item Syntax errors and remaining runtime errors (39-43) are not covered.
\end{itemize}

In comparison, recall that execution matching in a conventional OJS is only able to return a binary response \cite{koberlein2024}. It cannot detect any complication, nor differentiate between any two logical or semantic errors. Our model is far more nuanced. Its capability to associate a distinct fingerprint to most distinct errors is key to providing meaningful feedback to the students.
Nevertheless, it is not a comprehensive solution, as it cannot assess all categories of SQL errors or features. One might debate whether these limitations are the most critical
challenges encountered by SQL learners. Taipulus et al. \cite{taipalus2018} argue that, as teachers, “we should focus primarily on the errors [...] which affect the result table, and secondarily on complications, which affect performance” (or readibility). Ultimately, it is up to the instructor to determine whether the limitations of our model are acceptable in the context of his or her teaching.

\subsection{Implementation} \label{sec:implementation}

Now that we have covered the theoretical fundations of the mechanism, let us move on to its practical implementation. First, we will explain how to populate the \texttt{hash} columns of the core tables. Then, we will revisit the six functions introduced earlier and suggest at least one built-in or user-defined function for each. These functions are divided into two groups: those that do not involve aggregation are relatively straightforward to implement, while the aggregation functions require a more in-depth analysis. 

\subsubsection{Row hashes}

The row hash function $h$ takes as input the name of the table and its column list (excluding, obviously, the \texttt{hash} column itself). It serializes these values into a JSON array, which is then cast into a string and hashed with a UDF \verb|string_hash()|. Note that these calculations are not performed by the application layer, but by SQL triggers or DDL's generated columns. Delegating them to the database is key to support DML commands: as soon as a task involves inserting or updating rows, their hashes must be created or updated dynamically.

\subsubsection{Non-aggregation functions}

\paragraph*{Coalescing function $f^*$.} In SQL, the polymorphic function \verb|coalesce()| returns the first non-NULL value among its arguments. For convenience, we define around it a monomorphic, one-argument wrapper \lstinline|nn()|, which takes a nullable big integer and returns a non-NULL one. Pseudo-code: \verb|nn(x: bigint) -> bigint := coalesce(x, 42)|.

\paragraph*{Variadic function $f_v$.} The few variadic functions offered by SQL are either non-symmetric (\verb|coalesce()|, \verb|concat()|), or non-combining (\verb|least()|, \verb|greatest()|). Aggregation functions like \verb|sum()| are semantically suitable, but syntactically non-variadic. Some DBMSs, like PostgreSQL, would allow defining their variadic equivalents, but it is simpler to unfold their arguments into a short expression with a binary operator. In SQL, besides the standard addition (\verb|+|) and multiplication (\verb|*|), the bitwise operators AND (\verb'&'), OR (\verb'|'), and XOR (\verb'^') are commutative and supported by most implementations, including SQL Server, PostgreSQL, and MySQL\footnote{
    SQLite requires writing \verb'x ^ y' as \verb'(x | y) - (x & y)'.
}. For reasons that will be detailed in Section~\ref{sec:aggregate-partial}, we have opted for the addition.

\paragraph*{Control function $f_c$.} Similarly, the addition is a good candidate for the control function. This requires the control value to be a number. In other cases, we first convert it (if needed) into a string, and then into a big integer with our UDF \verb|string_hash()| (again).

\paragraph*{Salt functions $f_s$.} SQLab builds in the final database up to ten hundreds functions \texttt{salt\_ddd(x: bigint) -> bigint := nn(x) XOR Y}, where \texttt{ddd} is a three-digit number (also corresponding to the task number) and \texttt{Y} a random big integer constant.

At first glance, the coalescing operation \texttt{nn()} seems redundant with the initial application of $f^*$. However, consider a WHERE clause using a direct comparison with NULL instead of \lstinline|IS [NOT] NULL| (error 9 in \cite{brass2006}). Typically, such a query will return no rows, hence no token. Yet, when it begins with \SELECT $\mathcal{A}$\verb|(...)|, where $\mathcal{A}$ is an aggregation function, it will return exactly one row, and thus one token.

This token is calculated by Formula~\ref{eqn:formulas:agg-post}. First, $f_v(\{ f^*(h) \mid h \in \text{hashes} \})$ evaluates to NULL. Then, for most\footnote{Except \texttt{count()} and, in MySQL, \texttt{bit\_and()}, \texttt{bit\_or()} and \texttt{bit\_xor()}, cf. \href{https://dev.mysql.com/doc/refman/8.4/en/bit-functions.html}{\cite{sqlite}\texttt{/bit-functions.html}}.} aggregation functions, $f_a(\text{\NULL})$ is NULL. The next function, $f_w$, is also an aggregation, with the same outcome. Hence, $f_c$ and $f_s$ are our last chances for transforming a NULL argument into a non-NULL token. $f_c$ being user-facing, it is better to “hide” the coalescing operation \texttt{nn()} in the UDF $f_s$.

\subsubsection{Aggregation functions on full-range hashes}\label{sec:aggregate-full}

Functions $f_w$ and $f_a$ need to be chosen with care. Concerning $f_w$, as it is used in conjunction with an empty \lstinline|OVER ()| to fill the whole \texttt{token} column, we can rule out those window functions which only make sense when the result table is partitioned, i.e., ranking functions like \texttt{rank()}, \texttt{row\_number()}, \texttt{cume\_dist()}, \texttt{ntile()}, etc., and so-called value window functions like \texttt{first\_value()}, \texttt{nth\_value()}, \texttt{lag()}, etc. This leaves us with the same aggregation functions\footnote{
    Except if the DBMS forbids their use as window functions, like PostgreSQL for \texttt{percentile\_cont()} \cite{postgresql}.
} that are candidates for $f_a$.

\begin{table}[!ht]
    \small
    \tablecaption{Availability of the numeric aggregation functions considered for $f_w$ and $f_a$.}
    \begin{tabular}{
  >{}m{11.5em}
  >{\raggedright\arraybackslash}m{18.9em}
  >{\centering\arraybackslash}m{0.7em}
  >{\centering\arraybackslash}m{0.7em}
  >{\centering\arraybackslash}m{0.7em}
  >{\centering\arraybackslash}m{0.7em}
  >{\centering\arraybackslash}m{0.7em}
  >{\centering\arraybackslash}m{0.7em}
  >{\centering\arraybackslash}m{0.7em}
  >{\centering\arraybackslash}m{0.7em}
  >{\centering\arraybackslash}m{0.7em}}
  \makecell[l]{\\[1.5ex]\textrm{\bf Names}}
  & \makecell[l]{\\[1.5ex]\textbf{Comment}}
  & \rlap{\begin{turn}{60}\textbf{DuckDB}\end{turn}}
  & \rlap{\begin{turn}{60}\textbf{IBM Db2}\end{turn}}
  & \rlap{\begin{turn}{60}\textbf{MySQL}\footnotemark{}\end{turn}}
  & \rlap{\begin{turn}{60}\textbf{Oracle}\end{turn}}
  & \rlap{\begin{turn}{60}\textbf{PostgreSQL}\end{turn}}
  & \rlap{\begin{turn}{60}\textbf{Snowflake}\end{turn}}
  & \rlap{\begin{turn}{60}\textbf{SQL Server}\end{turn}}
  & \rlap{\begin{turn}{60}\textbf{SQLite}\footnotemark{}\end{turn}}
  \\[-3ex]
  \midrule
  {\footnotesize\verb|avg, count, max, min, sum|} & Core aggregate functions, already in the first version of the standard ISO 9075:1987. & \cmark & \cmark & \cmark & \cmark & \cmark & \cmark & \cmark & \cmark \\
  \hline
  {\footnotesize\verb|modular_sum|} & Wraps around the token's range. & ~ & ~ & ~ & ~ & ~ & ~ & ~ & ~ \\ 
  \midrule
  {\footnotesize\verb|percentile_cont|} & First (\verb|quartile_1|), second (\texttt{median}), third quartile (\verb|quartile_3|), and interquartile range (\texttt{iqr} $= Q3 - Q1$). & \cmark & \cmark & ~ & \cmark & \cmark & \cmark & \cmark & \gcmark \\
  \midrule
  {\footnotesize\verb|bit_and, bit_or, bit_xor|} & Bitwise aggregations. & \cmark & ~ & \cmark & ~ & \cmark & \cmark & ~ & ~ \\
  \midrule
  {\footnotesize\verb|checksum_agg (hash_agg)|} & 64-bit hash over the \emph{unordered} values. & ~ & ~ & ~ & ~ & ~ & \cmark & \cmark & ~ \\
\end{tabular}

    \label{tab:aggregate_functions}
\end{table}
\footnotetext{MySQL 8.4 doesn't support \texttt{percentile\_cont()}, cf. \url{https://bugs.mysql.com/bug.php?id=93234}.}
\footnotetext{In SQLite, \texttt{percentile\_xx} requires the \texttt{stats} third-party extension \cite{sqlean}.}

Among them, we choose to study those of Table~\ref{tab:aggregate_functions}, regardless of their availability in this or that DBMS. Suitable functions must satisfy the global properties of the fingerprinting mechanism listed in Section~\ref{sec:global-properties}.

\begin{itemize}
\item Their outcome must be an integer: float-valued functions like \texttt{avg()} are cast accordingly.
\item This integer must fit into a 64-bit range: the function \texttt{sum()}, notably, is at risk of overflow. Two solutions are discussed: in the present section, we replace \texttt{sum()} with its wraparound version, \texttt{modular\_sum()}; in Section~\ref{sec:aggregate-partial}, we will keep \texttt{sum()}, but use fewer bits for the initial hash values.
\item The functions must have the fewest collisions possible. This may disqualify the non-combining ones, i.e., those whose outcome depends only on a certain portion of the input data, like \texttt{max()} and \texttt{min()}. The functions \texttt{count()}, \texttt{bit\_or()}, and \texttt{bit\_and()} seem also problematic, as they only capture a small amount of information from their input.
\end{itemize}

To go beyond these intuitions, we conduct a series of tests comparing the outcome density of the selected functions. Differences are magnified by working on 16-bit instead of 64-bit integers. We first generate 900 individuals consisting of random sets of up to 100 hash values. To simulate the coalescing of NULL values, 50~\% elements of 10~\% of them are replaced by an arbitrary fallback constant equal to 25~\% of the hash upper bound (16384). To simulate the queries giving \emph{almost} the expected answer, 100 variants of selected individuals are created in \emph{all} the following ways: removing one random element, removing two random elements, flipping one random bit in one random element, flipping two random bits in two random elements. We then apply the integer version of each aggregation function to the resulting 1000 individuals. The outcomes are plotted in Fig.~\ref{fig:densities_full}.

\begin{figure}[!ht]
    \includegraphics[width=0.9\linewidth]{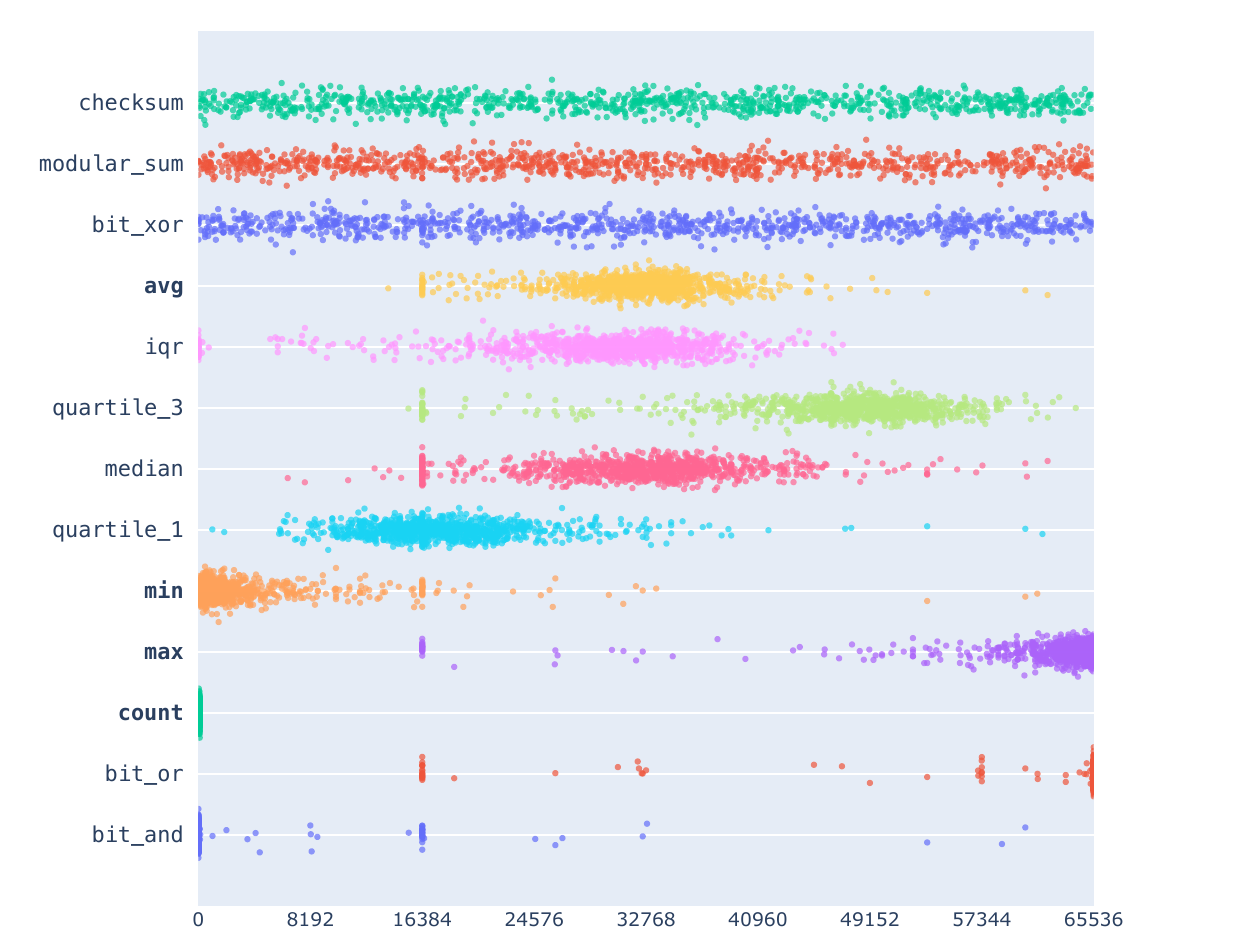}
    \caption{Density of aggregation function outcomes on full-range hashes (sparser is better). Each jittered strip plot represents the outcome of a function applied to 1000 individuals. \hfill\accmat}
    \Description{
        The figure shows density plots of various aggregation functions over a range of values from 0 to 65536. The x-axis spans this range. The aggregation functions and their densities are as follows:
        - most of the results of bit_and and bit_or fall at the ends of the range, with some outliers;
        - count appears as a vertical line at the start of the range.
        - min and max tend to concentrate at the ends of the range;
        - quartile_1, median and quartile_3 cluster around their respective quartiles, iqr and avg around the average value;
        - bit_xor, modular_sum and checksum are regularly distributed on the whole range.
    }
    \label{fig:densities_full}
\end{figure}

As expected, \texttt{bit\_and()}, \texttt{bit\_or()}, \texttt{count()}, \texttt{max()} and \texttt{min()} cluster against the bounds of the token's range. For the statistics functions, the clusters are wider and open on both ends. Note the sub-clusters around the fallback constant 16384. The remaining top three functions, \texttt{bit\_xor()}, \texttt{modular\_sum()}, and \texttt{checksum\_agg()}, exhibit a more uniform distribution.

\begin{figure}[!ht]
    \includegraphics[width=0.9\linewidth,trim={0cm 0.1cm 0cm 0cm},clip]{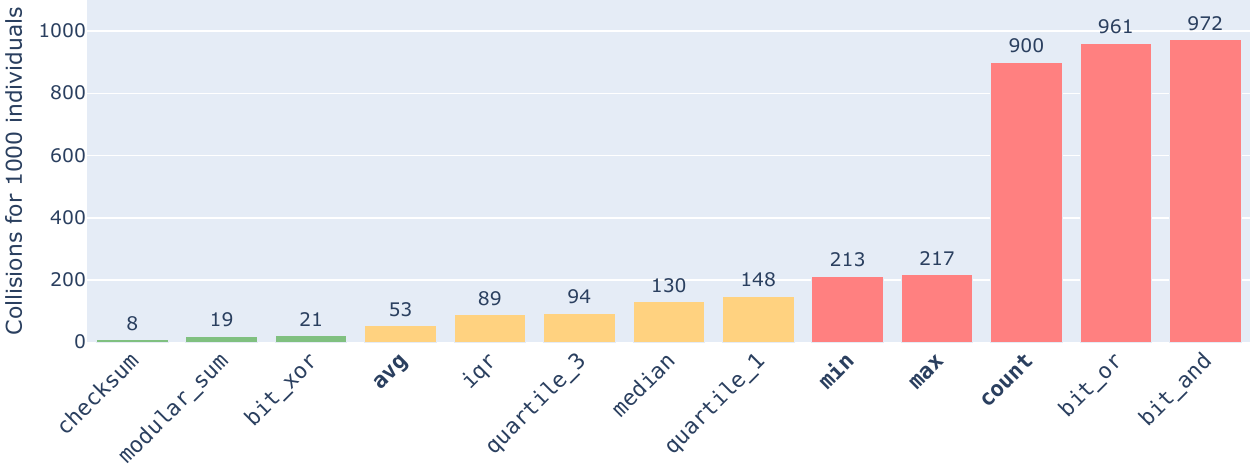}
    \caption{Collisions in aggregation function outcomes on full-range hashes (lower is better).
    In bold, the core SQL aggregation functions.
    In green [resp., orange, red], the functions that are suitable [resp. tolerable, unsuitable] for our purpose. \hfill\accmat
    }
    \Description{
        The histogram shows the aggregation functions by increasing number of collisions for 1000 individuals: checksum, modular_sum and bit_xor are below 25; avg, iqr, quartile_3, median and quartile_1 are between 50 and 150; min and max are just above 200; count, bit_or and bit_and are above 900.
    } 
    \label{fig:collisions_full}
\end{figure}

Fig.~\ref{fig:collisions_full} shows the collision count for each function. Obviously, \texttt{checksum\_agg()} would be ideal for our purposes. However, the fact that it is not widespread across DBMSs (cf. Table~\ref{tab:aggregate_functions}) is a major drawback when support for user-defined aggregation functions (UDAF) is lacking or would complicate the distribution of our SQLab database. For instance, PostgreSQL and MariaDB offer a CREATE AGGREGATE command which allows defining a new aggregation function in pure SQL, but MySQL and SQLite require writing, compiling and distributing a C/C++ shared library, which comes with is own challenges. Remember that we aim to provide a self-contained database that won't require any further installation by the end user.

Our ad-hoc function \texttt{modular\_sum()} is typical of this very issue. Let us see how we can avoid it altogether.

\subsubsection{Aggregation functions on partial-range hashes}\label{sec:aggregate-partial}

In this section, we will discuss how to “save” the built-in \texttt{sum()} function by preventing it from overflowing. The idea is to restrict the initial hash values to 40 bits. Thus, overflowing the 64-bit range would require at least $2^{64-40} = 16,777,216$ additions. Given the modest size of the datasets we work on (less than 50 rows for SQLab Island, less than 4000 rows for SQLab Sessform), this is rather unlikely\footnote{In our classes, it seems that only one student out of two hundreds has ever come across a sum overflow error.}. Let us now examine which impact this restriction would have on the other aggregation functions.

We conduct the same tests as in the previous section, with the following scale changes: 10-bit hash values and 16-bit tokens are used (the 10/16 ratio is the same as 40/64). We generate 500 individuals instead of 1000. The size of each individual is limited to 50 hash values instead of 100. These adaptations are designed to optimize the range of possible outcomes by ensuring that the maximum number of observed collisions remains just below the population size. The density plot is given in the accompanying notebook, and the collision histogram in Fig.~\ref{fig:collisions_partial}. \hfill\accmat

\begin{figure}[!ht]
    \includegraphics[width=0.9\linewidth,trim={0cm 0.1cm 0cm 0cm},clip]{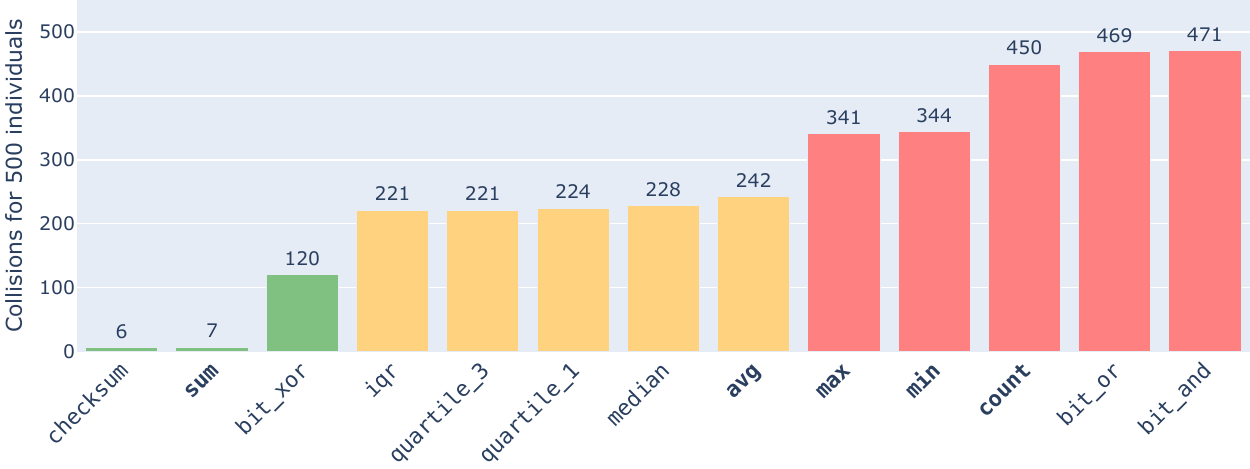}
    \caption{Collisions in aggregation function outcomes on partial-range hashes (lower is better). \hfill\accmat}
    \Description{
        The histogram shows the aggregation functions by increasing number of collisions for 500 individuals: checksum and sum are below 10; bit_xor is at 120; iqr, quartile_3, quartile_1, median and avg are between 220 and 250; min and max are just above 340; count, bit_or and bit_and are above 450. 
    } 
    \label{fig:collisions_partial}
\end{figure}

Compared with \texttt{modular\_sum()}, the behavior of \texttt{sum()} remains satisfactory, but at the cost of a deterioration in that of subsequent functions. Remember, however, that the actual hash range is on 40 bits (1,1 trillion possible values), not 10 bits (1024 possible values): in practice, the rate of collisions of any reasonably input-sensitive function (i.e., those on the left of \texttt{max()} and \texttt{min()} in Fig.~\ref{fig:collisions_partial}) will be vanishingly low.

\subsubsection{Composition of the aggregation functions}

One detail remains to be settled, which involves the associativity property. Associativity, first identified in the context of binary operations such as addition and multiplication, easily and crucially\footnote{
    Parallel and distributed computing can take advantage of the ability to group elements without affecting the outcome.
} extends to aggregation functions. Several equivalent definitions are found in the literature, 
for instance in \cite{marichal2015}: an aggregation function $\mathcal{A}: E^n \rightarrow E$ is \textbf{associative} if and only if
\begin{align*}
    \label{eqn:associativity}
    &\forall x \in E,  \mathcal{A}(x) = x\\
    &\forall x_i \in E, \mathcal{A}(x_1, ..., x_j, x_{j+1}, ..., x_k, x_{k+1}, ..., x_{n}) =
\mathcal{A}(x_1, ..., x_j, \mathcal{A}(x_{j+1}, ..., x_k), x_{k+1}, ..., x_{n}).
\end{align*}

Among the functions of Fig.~\ref{fig:collisions_partial}, only
\texttt{sum()},
\texttt{bit\_xor()},
\texttt{max()},
\texttt{min()},
\texttt{bit\_or()}, and
\texttt{bit\_and()}
are associative. For instance:
$
    \texttt{max}(1,2,3,4,5,6) = \texttt{max}(1,2,\texttt{max}(3,4,5),6)
                              = \texttt{max}(1,2,5,6)
                              = 6
$.

It follows that, for any symmetric associative aggregation function $\mathcal{A}$, any input multiset $X$ and any two partitions $(X_1, ..., X_p)$ and $(X'_1, ..., X'_p)$ of  $X$, we have
$
\mathcal{A}(\mathcal{A}(X_1), ..., \mathcal{A}(X_p)) = \mathcal{A}(\mathcal{A}(X'_1), ..., \mathcal{A}(X'_p)) = \mathcal{A}(X)
$.
Therefore, our two aggregation functions $f_w$ and $f_a$ cannot be independently chosen: if both are the same associative function, all groupings result in the same fingerprint (Fig.~\ref{ex:group_blindness}).

\begin{figure}[!ht]
    \begin{tabular}{V{0.31\textwidth}V{0.31\textwidth}V{0.31\textwidth}}
\begin{lstlisting}
SELECT count(*),
  sum(sum(hash)) over ()
FROM dependent
GROUP BY sex
\end{lstlisting}
&
\begin{lstlisting}
SELECT count(*),
  sum(sum(hash)) over ()
FROM dependent
GROUP BY relationship
\end{lstlisting}
&
\begin{lstlisting}
SELECT count(*),
  sum(sum(hash)) over ()
FROM dependent
\end{lstlisting}
\\
\begin{verbatim}
    4 | 4064302514963
    3 | 4064302514963
\end{verbatim}
&
\begin{verbatim}
    2 | 4064302514963
    3 | 4064302514963
    2 | 4064302514963
\end{verbatim}
&
\begin{verbatim}
    7 | 4064302514963
\end{verbatim}
\end{tabular}

    \caption{Identical 4064302514963 fingerprint for three queries with different or even non-existent GROUP BY clause. “Group blindness” is a consequence of choosing the same associative aggregation function for both $f_w$ and $f_a$. Token formulas and result tables simplified for clarity. \hfill\accmat}
    \label{ex:group_blindness}
    \Description{Text.}
\end{figure}

Besides this rule, one can theoretically choose any associative aggregation function(s) for $f_w$ and $f_a$, except the couple (\texttt{sum()}, \texttt{count()}) which exhibits an “inter-associativity” property:
$
    \texttt{sum}(\texttt{count}(1,2),\texttt{count}(3,4,5)) = \texttt{sum}(2,3)
    = 5
    = \texttt{sum}(4,1)
    = \texttt{sum}(\texttt{count}(1,2,3,4),\texttt{count}(5))
$.

\subsubsection{Final recommendations}

For the first required aggregation function, take the leftmost available function of Fig.~\ref{fig:collisions_partial}. If this function is non-associative, take it for the second one as well. Otherwise, take the next one. Should one of them be \texttt{bit\_xor()}, avoid using it for $f_a$: it would reduce duplicate hash values $x$ to either zero or $x$, which may result in fingerprint collision (cf. Annexes, Table~\ref{tab:xor_sum_vs_sum_xor}). Table~\ref{tab:fw_fa} gives the outcome of this process for some DBMSs. Of course, when distributing a user-defined aggregation function with the database is possible, it is better to create a \texttt{checksum\_agg()} implementation for both $f_w$ and $f_a$.

\begin{table}[!ht]
    \tablecaption{Recommended built-in implementations of $f_w$ and $f_a$ across various DBMSs. If \texttt{sum()} is used, remember to limit the hash size to 40 bits to avoid overflow.}
    \begin{tabular}{lll}
    ~
    & $f_w()$
    & $f_w(f_a())$ \\
    \cmidrule(lr){2-3}
    SQL Server & \texttt{checksum\_agg()} & \texttt{checksum\_agg(checksum\_agg())} \\
    Snowflake & \texttt{hash\_agg()} & \texttt{hash\_agg(hash\_agg())} \\
    PostgreSQL, MySQL, DuckDB & \texttt{sum()} & \texttt{bit\_xor(sum())} \\
    SQLite, Oracle, IBM Db2 & \texttt{sum()} & \texttt{sum(avg())}
    \end{tabular}
    \label{tab:fw_fa}
\end{table}

By default, SQLab uses 40-bit partial-range hashes. The recommended functions are those that had the fewest collisions in the tests we conducted, but they are not guaranteed to fit all queries on all datasets. A warning will be emitted if any collision occurs during the build process. In this event, the designer may choose another function for $f_w$ and/or $f_a$ in the offending formula.

As a concluding example, here is the full token formula again, followed by a possible MySQL implementation when three tables are involved in the outer query:
\begin{gather}
f_s\bigg(
    f_c\Big(x, 
        f_w\big(
            f_a\big(
                f_v\big(
                    \{ f^*(h) \mid h \in \text{hashes} \}
                \big)
              \big)
        \big)
    \Big)
\bigg)
\tag{\ref{eqn:formulas:agg-post}}
\\
\texttt{salt\_042(\{\{x\}\} + bit\_xor(sum(nn(A.hash) + nn(B.hash) + nn(C.hash))) OVER ()) as token}
\tag*{}
\end{gather}

\section{Life cycle of an SQLab database} \label{sec:life_cycle}

\subsection{Design process} \label{sec:design_process}

\subsubsection{Source material}\label{sec:source-material}

To create a database of exercises or adventures, the designer provides three assets:

\noindent\begin{minipage}{\textwidth}
\setlength\intextsep{-\baselineskip}
\begin{wrapfigure}{R}{0pt}
\begin{lstlisting}
    §~§
    §{\textcolor{lightgray}{ddl.sql (extract)}}§
  §\textcolor{lightgray}{\rule[1ex]{0.48\textwidth}{0.5pt}}§
  CREATE TABLE department (
      dpt_name VARCHAR(15) NOT NULL UNIQUE,
      dpt_id INT NOT NULL PRIMARY KEY,
      manager_id CHAR(9) NOT NULL,
      manager_start DATE,
      hash BIGINT
  );

  §{\textcolor{lightgray}{dataset/department.tsv}}§
  §\textcolor{lightgray}{\rule[1ex]{0.48\textwidth}{0.5pt}}§
  Research      §\textcolor{lightgray}{$\rightarrow$}§5§\textcolor{lightgray}{$\rightarrow$}§333445555§\textcolor{lightgray}{$\rightarrow$}§1988-05-22
  Administration§\textcolor{lightgray}{$\rightarrow$}§4§\textcolor{lightgray}{$\rightarrow$}§987654321§\textcolor{lightgray}{$\rightarrow$}§1995-01-01
  Headquarters  §\textcolor{lightgray}{$\rightarrow$}§1§\textcolor{lightgray}{$\rightarrow$}§888665555§\textcolor{lightgray}{$\rightarrow$}§1981-06-19
  §~§
\end{lstlisting}
\end{wrapfigure}
~
\begin{enumerate}
    \item The definition of the core dataset structure, consisting in a DDL script in the SQL dialect of the target DBMS (MySQL on the right). Each table must contain an additional \texttt{hash} column.
    \item A directory of tab-separated values files, each containing the data for one table. The \texttt{hash} column is not provided by the designer: it will be generated or calculated by a trigger.
    \item A \textbf{master notebook} specifying a set of independant or interdependent tasks. A task typically consists of a narrative, a problem statement, a primary gold query, and any number of variants and hints. The document is created in the Jupyter Notebook format \cite{kluyver2016}, which combines text cells (Markdown) and code cells (Python, SQL, or other languages) that can be executed in any order.
\end{enumerate}
\end{minipage}\\

\begin{figure}[!ht]
    \includegraphics[trim={2.55cm 10.5cm 5cm 3.55cm},clip]{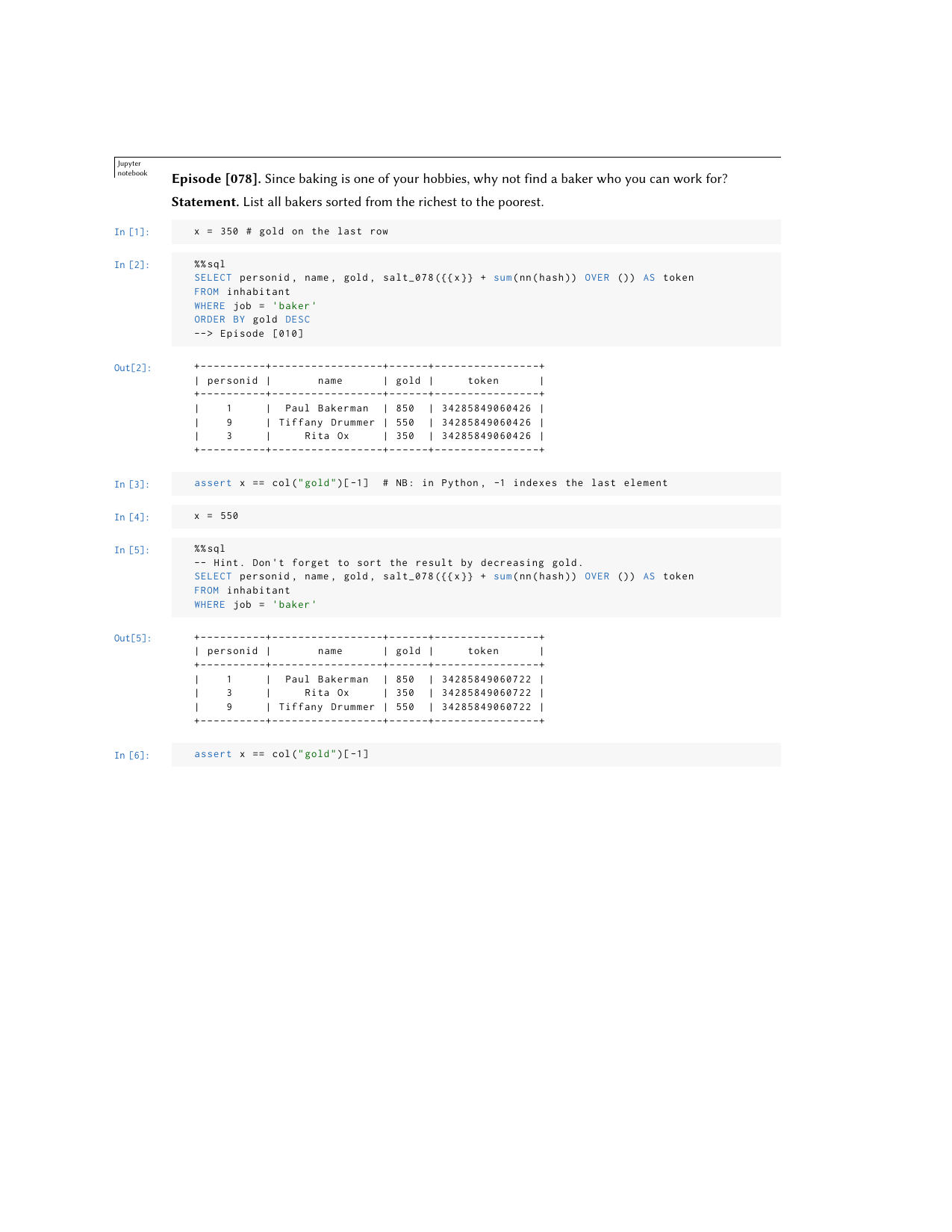}
    \caption{An extract of the master notebook of SQLab Island \cite{sqlab_island}, our adaptation of SQL Island \cite{schildgen2014}.}
    \label{fig:notebook}
    \Description{Text.}
\end{figure}

\subsubsection{Master notebook} The main inspiration for the master notebook syntax stems from the stated objectives of the Markdown format \cite{gruber2004}: “make it as readable as possible [...] without looking like it's been marked up with tags or formatting instructions”. Ideally, the reader should forget that the master notebook is destined to be parsed into a database. In this spirit, plain-english sectionning names (e.g., “Episode”, “Statement”, “Hint”) double as keywords. Furthermore, given that a notebook is executable, most of the machinery is delegated to the interpreters: SQL evaluates the queries and the tokens, Python checks the expected properties and control values, and the designer can focus on the educational and narrative content.

The example of Fig.~\ref{fig:notebook} defines a task pertaining to the SQLab Island adventure \cite{sqlab_island}. Let us dissect it:
\begin{itemize}
\item First, a Markdown cell introduces the task by specifying its type, its number and a bit of text (narrative, explanation, etc.). In an introductive cell, two types are possible: “Episode” (for an adventure) and “Exercise” (for an independant task).
\item The second cell, also in Markdown, provides the main statement of the task (type “Statement”).
\item The next cell is mandatory when the task involves a control value. It consists in a Python assignment of the form \verb!x = value # comment text!. The comment text will be injected in the instruction template “replace \texttt{(0.0)} with \verb!{}.!”, as shown in Fig.~\ref{ex:fingerprinting_2}.
\item The input part of the following cell starts by \verb!%%sql!, a “cell magic” of the Jupyter extension JupySQL \cite{jupysql}. It specifies the primary gold query, i.e., the expected answer to the problem statement, with its token formula.
\begin{itemize}
    \item Note that this formula uses the function \verb!salt_078()!, where \texttt{078} is also the task number.
    \item The expression \verb!{{x}}! refers to the two-pass fingerprinting control value in the syntax of JupySQL interpolation. It will be replaced with \texttt{350} during the execution, and with the placeholder \texttt{(0.0)} during the generation of the messages.
    \item The input cell ends with an SQL comment disguised as a long arrow \verb!-->!, followed by a reference to the episode that the token produced by this query would unlock.
\end{itemize}
\item The output part of the same cell is the result table, with its mandatory column \texttt{token}.
\item It is followed by an optional Python statement, which asserts that the last row in the \texttt{gold} column is indeed the expected control value. The helper function \texttt{col()} is used to access the values of a given column of the \emph{previous} result table.
\item Although this is not the case here, any number of secondary gold queries could follow. They will generally produce the same token value as the primary gold query. Otherwise, the token formula is mandatory. By default, all the gold queries will be concatenated to form the correction displayed to the students when they have unlocked a correct answer.
\item The three last cells apply the same logic to the specification of a hint. An incorrect query is introduced by a \verb!-- Hint! towards the correct answer. It must use the same formula, and produce a different token.
\end{itemize}

For more details on the structure and syntax of the master notebook, the reader is invited to consult the complete example \href{https://github.com/laowantong/sqlab_island/blob/main/en/sqlite/sqlab_island.ipynb}{\cite{sql_island_repo}\texttt{/blob/main/en/sqlite/sqlab\_island.ipynb}} or the documentation of SQLab \cite{sqlab}.

\subsubsection{Design notes}

\paragraph*{Coverage.} The database instance should kill all non-equivalent mutants, i.e., make all incorrect queries produce a token distinct from the correct ones: this is \textbf{full coverage}. Additionally, every incorrect query should be associated with a different hint, i.e., produce a token distinct from the incorrect ones: this is \textbf{discriminating coverage}. As we have discussed in Sections~\ref{sec:out-of-scope} and \ref{sec:false-positives}, these properties are not always possible to achieve. When they are, it is the duty of the designer to ascertain whether an update to the core dataset is required.

\paragraph*{Testing.}
Given that the core dataset is likely to evolve throughout the fine-tuning of the set of tasks, it is important to ensure that no regression occurs. To this end, we strongly recommend testing the output of each SQL cell. Two examples of such tests have been given in the context of two-pass fingerprinting (Fig.~\ref{fig:notebook}, cells 3 and 6). One-pass fingerprinting is no less amenable to testing. For instance, in the same game, the gold query for the question “Find the name of the chief of the village Onionville” is followed by:
\lstinline[language=Python]|assert col("name") == ["Fred Dix"]|.
Two key properties are checked here: 1. the result table contains a single row, and 2. the name of the chief is the expected one.

\paragraph*{Choice of the formula.}
A given query cannot be compatible with both a basic and an aggregation formula. In the case of a query with vector aggregate, one can hesitate between Formula~\ref{eqn:formulas:agg} and its controlled version~\ref{eqn:formulas:agg-post}, see Section~\ref{sec:main-scope} for a brief discussion. 
Concerning the dimension of the formula, it is generally dictated by the number of tables involved in the outer FROM clause of the gold query. Reducing the dimension may increase the stylistic freedom of the students, but at the cost of a higher risk of false positives (see below, first item).

\paragraph*{Concluding example.}
Some of these notes, as well as the observations made in Section~\ref{sec:accuracy}, could lead to the following advices for the design of the task of Table~\ref{tab:same_result_queries_basic}:
\begin{enumerate}
    \item opt for a two-dimensional instance of the formula. The one-dimensional formula allows for 5 gold queries out of 5, but detects only 1 out of 8 complications and 4 out of 10 errors. In contrast, the two-dimensional formula forbids the gold query $G_1$, but detects 2 more complications and 3 more errors.
    \item make the JOIN solution $G_2$ the first gold query;
    \item skip its symmetric $G_3$: as a general rule, for a query involving a $n$-dimensional formula in its outer SELECT clause and $n$ tables in its outer FROM clause, only one permutation is needed;
    \item mark the “inaccessible” one-dimensional query $G_1$ as a variant. Variants are not required to include the formula, and are simply presented to the students when they have unlocked a correct answer;
    \item mark the old-fashioned $G_4$ as another variant, with a brief explanation of the deprecated join syntax;
    \item for $O_3$, $O_7$ and $O_8$, provide a hint that only two tables are needed;
    \item modify the \texttt{employee} or \texttt{works\_on} table (either by inserting chaff or by updating the existing data) to make $I_1$, $I_2$ and $I_8$ yield a non-gold token, then provide the appropriate hints;
    \item for $I_7$, $I_9$ and $I_{10}$, provide a hint that the join condition is missing or incorrect.
\end{enumerate}

\subsection{Build process}

SQLab provides a subcommand \texttt{create} to build the database from the source material. It is a five-step process:

\begin{enumerate}
\item The database is created or recreated from the DDL script.
\item It is populated with the data from the TSV files and the corresponding hash values.
\item The master notebook is executed, and all the code cell outputs are updated.
\item It is parsed into an intermediate linear representation, the \textbf{records} (a JSON list of objects).
\item The records are traversed to populate a single-column table of encrypted messages, called \texttt{sqlab\_msg}.
\end{enumerate}

Throughout this process, the following properties are checked:

\begin{enumerate}
\item \emph{Hash uniqueness.} No hash value is duplicated within a table or across several tables\footnote{The latter explains why the (weaker) constraint \texttt{UNIQUE} was omitted from the \texttt{hash} column of the DDL file shown in Section~\ref{sec:source-material}.}.
\item \emph{Validity.} No SQL or Python code cell raises an error.
\item \emph{Structural conformity.} The cells of different natures follow a predefined order, for instance: episode or exercise header, context (optional), statement, primary solution, variants (optional), hints (optional).
\item \emph{Consistency.} In a given formula, the salt function has the same number as the task to which it belongs.
\item \emph{Usefulness.} Any primary solution or hint must produce a token (otherwise it could not be accessed)\footnote{
    A variant is not required to produce a token (since it can be accessed from the token of its associated primary solution).
}.
\item \emph{Token uniqueness.} Primary solution or hint tokens must be unique. Variant tokens must be distinct from hint tokens, but not necessarily from their associated primary solution token, nor among themselves\footnote{
    A variant may produce a token distinct from the token of its associated primary solution. In this case, the two tokens are functionally equivalent (they give access to the same message).
}.
\item \emph{Round-trip.} A message encrypted with a specific token can be decrypted using the same token.
\end{enumerate}

\subsection{Practice phase} \label{sec:practice_phase}

Once an SQLab adventure is compiled into an SQL dump, it is time to let the students engage with the database.

\subsubsection{Undesirable outcomes of our model}
The fingerprinting mechanism is a powerful tool, but also a source of potential frustrations. The instructor shall initially explain and demonstrate its operational principles. The first two or three queries should allow the students to grasp the concept, with the real SQL difficulties being deliberately kept to a minimum.

\begin{itemize}
\item \emph{Formula errors.}
Inserting a token formula in the outer SELECT clause may “break” a working query. The students must be warned of this risk of errors, which are not present in traditional exercises. They can be classified into three categories:

\begin{itemize}
    \item \emph{Compatibility errors.} In strict mode, a query cannot be compatible with both a basic and an aggregation formula. Suppose that a student attempts to solve a deduplication task with a GROUP BY clause, while the instructor has devised it with a DISTINCT keyword. Evaluating the provided Formula~\ref{eqn:formulas:basic} will raise a runtime error which may be difficult to interpret. See Annex~\ref{sec:incompatibility_handling} for a discussion on the handling of such situations.
    \item \emph{Reference errors.} If the formula involves an alias that is not defined in the outer FROM clause, the query will fail with an “Unknown column” error. The case where an expected alias refers to a wrong (extraneous) table is less problematic, as it will produce tokens which can be associated with specific hints.
    \item \emph{Control value errors.} In a two-pass process, the following errors may occur:
    \begin{itemize}
        \item \emph{Missing control value}, i.e., the placeholder is left at its default value of 0.0.
        \item \emph{Obsolete control value}, i.e., the student forgets to update the result value from a previous attempt.
        \item \emph{Misidentified control value}, e.g., the student is instructed to copy-paste the third value of a column, but copy-paste the fourth instead.
    \end{itemize}
    Note that these errors do not prevent the query from executing, meaning that the tokens they produce can be associated with a hint.
\end{itemize}

\item \emph{Unintentional clues and anti-clues.}
Strictly speaking, SQLab cannot create traditional blank-page exercises. Indeed, providing a token formula inevitably discloses two key features of the gold query: the minimal number of tables involved in the outer FROM clause, and the presence of an outer GROUP BY clause and/or an aggregation function in the outer SELECT clause. While the second clue may escape some students, the first quickly becomes obvious to all. This can be of some help, for example, in the case of a self-join, where realizing that the same table is involved twice is difficult for a majority of students, even those who have no problem with Cartesian products or natural joins \cite{miedema2022b}.

If the designer wishes, he or she can provide additional clues by naming the tables in the formula with the initials of those involved in the FROM clause.
Not doing so forces the students to alias these tables with the letters A, B, C. This is a stylistic anti-pattern, as it reduces the readability of the query and may lead to swapped aliases \cite{miedema2022b}.
\end{itemize}

Note that these undesirable outcomes are not intrinsic to the fingerprinting principle itself; rather, they stem from its minimalistic implementation as a standalone database. Such an approach delegates to the user several manipulations. They could be readily automated if the database were integrated into a dedicated learning platform. This is entirely possible, as we will demonstrate in Section~\ref{sec:embedded-version}.

\subsubsection{Benefits of our model}

In theory, students should no longer need the instructor to assess the correctness of their answers, help them when they are stuck, and provide them with the answer keys or the next task.

However, a dedicated instructor might leverage the practical session to its fullest potential: students are arguably the ultimate fuzzing tools, and can yield a variety of misunderstandings, misconceptions, and errors that few teachers would have thought of. This spontaneous production represents a great opportunity to enrich the database with hints and clarifications that are almost guaranteed to be useful in the future.

Unfortunately, even if students are instructed to call their instructor every time they stumble across an unexpected token, few of them actually do. Automatically logging their queries and the tokens they pass to the \texttt{decrypt()} function constitutes a non-intrusive way to gather an unbiased feedback on their difficulties and their progress. SQLab has a sub-command \texttt{report} to parse the logs and surface the new errors. Hot updates of the database are possible, but unpredicted errors sometimes require close examination, and one must be prepared to \emph{invest} time in this task\,---\,for the greater benefit of the students to come.


\section{Conclusion} \label{sec:conclusion}

\subsection{Main contributions}

\begin{enumerate}
\item Sections~\ref{sec:fingerprinting_concepts} to \ref{sec:accuracy} present a novel approach to the problem of semantic equivalence of SQL queries, based on the concept of fingerprinting. Mathematically, it relies on a hash function and five distinct formulas composing up to six functions. Such a formula calculates a fingerprint of the “starred” version of a query, which can informally be thought of as its restriction to the clauses executed before any projection. We prove that comparing fingerprints is equivalent to performing execution matching on starred queries. This comparison results in fewer false positives than execution matching on the original queries.
\item Section~\ref{sec:implementation} proposes several adaptations of these formulas to the practical constraints of a DBMS, informed by a statistical study of the outcomes of a set of aggregation functions on both full-range and partial-range big integer hashes.
\item Section~\ref{sec:individual_tasks} demonstrates the use of these fingerprints as tokens enabling selective access to pertinent information and feedback in an additional table of independently encrypted messages.
\item In \cite{sqlab}, we leverage these concepts and techniques in a new game engine, SQLab, developed in Python as a command-line tool. Its open-source licensing scheme (MIT), along with the fact that the games created with SQLab are playable without SQLab, should guarantee its longevity, regardless of academic imperatives and budgets. It already supports the three most popular open-source RDBMSs \cite{db_engines}: MySQL, PostgreSQL and SQLite. For each one, a proof-of-concept SQLab adaptation \cite{sqlab_island} of the well-known SQL Island game \cite{schildgen2014} is distributed under CC-BY-SA-4.0 (a content license, since an SQLab game is nothing but a database).
\item An SQLab game can be \emph{directed}, \emph{standalone}, \emph{stateful} and provide \emph{specific feedback} to the player. Section~\ref{sec:sql_games} claims that no existing SQL game exhibits all four properties, with the first two previously regarded as mutually exclusive. 
\item Section~\ref{sec:design_process} describes the main components and structure of the source material of an SQLab game, including the collection of related or unrelated tasks that constitute an adventure or exercises, respectively. This collection, formatted as a Jupyter notebook, interweaves SQL queries and basic Python assignments with narrative, educational or instructional texts.
\item Section~\ref{sec:practice_phase} lists the benefits and setbacks that students and instructors may encounter during an SQLab practical session. It underscores the critical role of each party, and hints at the prospective advantages of an embedded version of the game, with more details to be provided in Section~\ref{sec:embedded-version}.

\end{enumerate}


\subsection{Future directions}

\subsubsection{Empirical study}

SQLab was developed and tested in an educational setting over a three-year period, with two cohorts of undergraduate and master's students participating in each iteration. The inaugural adventure was found to be promising, but excessively ambitious, with only two students out of 100 successfully completing it (and this was far beyond the allotted time). Moreover, corrective measures were deemed necessary in the mechanism's design. From there, a simpler, more gradual adventure \cite{sqlab_sessform} was offered, with significant enhancements to the model. In the third year, a series of independent exercises and a mock examination were incorporated, and the software underwent a substantial revision for its first public release. The subsequent phase will entail a statistical investigation of the tool's efficacy, employing server logs and a comparative analysis of the outcomes of students utilizing SQLab and those engaged in more traditional practice. This study will be the subject of a separate article.

\subsubsection{DBMS support}

Extending SQLab to other RDBMSs should be straightforward, as long as they provide the following “modern” SQL features: window functions, triggers, user-defined functions, one hash function, and one decryption function (see Annex~\ref{sec:dbms-requirements} for more details on these requirements). Each adaptation requires about one hundred lines of Python, and as many lines of SQL code (see for example 
\href{https://github.com/laowantong/sqlab/tree/main/sqlab/dbms/mysql}{\cite{sqlab}\texttt{/tree/main/sqlab/dbms/mysql}}).

\subsubsection{Design assistance}

Automatically providing the designer with the following elements should be easy:

\begin{enumerate}

\item \emph{Formulas.} Most of the time, for a given query, the appropriate formula can be inferred from the number of tables involved in the outer FROM clause and the presence of an outer GROUP BY clause and/or an aggregation function in the outer SELECT clause. Auto-completing the query with this formula would be convenient.

\item \emph{Assertions.}
Testing the result table of each query is a safety net against the regressions that are bound to occur when the dataset evolves. These assertions are often predictable. For instance, it is common to test the cardinality of the result, the content of a projected column, or the presence of the expected control value. It should be possible to automatically infer some of these potential assertions from the statement and/or the category of the formula.

\item \emph{Mutants.}
Currently, the game designer is tasked with predicting, testing, and documenting all incorrect queries. To lighten this burden, the most straightforward mistakes could be generated. As we have seen in Section~\ref{sec:mutation}, this is a known problem, and a variety of generic mutation operators have been proposed by the research community. In addition, token-specific operators could automate the handling of the cases where students forget to replace a placeholder with the actual control value, or where they reuse the formula of a task in the next one.
A number of mutants could come with a hint consisting in a templated feedback message (level 3 of 4). These would include inadvertent Cartesian products, use of a wrong table in a join, confusion between strict and non-strict inequalities, and so forth.

In an ideal scenario, any (incorrect) mutant that would survive the database instance should result in the proposal of new or updated rows. 

\end{enumerate}

\subsubsection{Embedded version} \label{sec:embedded-version}

As we have seen, the defining characteristics of SQLab games are their \emph{directed} and \emph{standalone} nature. The latter does not prevent them to be embedded in a dedicated learning and training platform. On the contrary, such a platform would bring a number of benefits that, arguably, could unlock the full potential of our model. Here are a few areas where it could literally be a “game-changer”:

\begin{enumerate}

\item \emph{Interface.} Command-line and administration tools are not the most user-friendly interfaces. A dedicated front-end could allow the designer to enrich the storyline with images, animations, sound effects, etc. Navigation, accessibility, and responsivity could be greatly improved. Feedback messages could be displayed in a more appealing way: currently, some Markdown typographical enrichments are simulated with Unicode to compensate for the fact that administration tools can only display tables in plain text (cf. Fig.~\ref{fig:pma}).

\item \emph{Fingerprinting mechanism.} It could be made entirely transparent to the students. Each time they would submit a query, the system would automatically inject the token formula, execute the augmented query, optionally update the formula with a required control value and re-execute the query, retrieve the final token, and finally display both the result table (without this token) and the decrypted feedback message. Similarly, “saving” a stateless game could be as simple as recording the last token that was successfully submitted.

\item \emph{Gamification.} The platform could include a scoring system, a leaderboard, badges, personalized certificates (like SQL Island \cite{sql_island_online}), etc. We are aware that game based learning is not necessarily more efficient than traditional teaching methods \cite{hattie2008}. However, gamification \cite{deterding2011} has been proved to increase the motivation and the engagement of the students \cite{bai2020,ren2024,tahir2020}.

\item \emph{Grading.} An embedded version of SQLab could also be used to turn a set of exercises into a formal exam, and to grade it automatically or semi-automatically (given the limitations listed in Section~\ref{sec:out-of-scope}). SQL query evaluation based on similarity measures is an active research area \cite{kaur2023,koberlein2024,wang2020,yang2022}. It offers two undeniable advantages: saving time for graders and ensuring fairness among students. However, in our view, complex calculations of edit distances or other metrics tend to evaluate a query as a collection of unrelated deviations rather than a coherent whole, whereas a human grader's approach is often more holistic. Is is easy to imagine how a grading assistance system based on our model's fingerprint mechanism could work. For a given query, SQLab would likely present to the grader the various answers, grouped by their fingerprint and ordered by decreasing frequency. Regardless of the number of students, the grader would likely face only a handful of distinct answers, which could be cross-compared and evaluated normally. All responses would thus be graded fairly. Then, a second review may evaluate the complete submissions of each student. This would avoid, for example, penalizing $n$ times $n$ repetitions of the same mistake by the same student.

\end{enumerate}

\begin{acks}
We would like to thank Prof. Dr. Johannes Schildgen for responding quickly and favorably to our request to place SQL Island under an open-source licence, and for granting us permission to adapt it to SQLab. Our gratitude extends to our colleagues Kamel Chelghoum and Imed Kacem for the motivating discussions, Dominique Roméo, Pascal Krier and Giorgio Lucarelli for their help in setting up and running the first SQLab practical sessions, and, last but not least, our students: without them, none of this would have been possible.
\end{acks}

\printbibliography[nottype=software,title={References}]
\printbibliography[type=software,title={Softwares}]

@software{sqlean,
  title  = {— \emph{SQLean, the ultimate set of SQLite extensions}},
  author = {Anton Zhiyanov},
  date   = {2021--},
  url    = {https://github.com/nalgeon/sqlean},
  key    = {sqlean}
}

@software{phpmyadmin,
  title  = {— \emph{phpMyAdmin, a web interface for MySQL and MariaDB}},
  author = {The phpMyAdmin Project},
  date   = {1998--},
  url    = {https://www.phpmyadmin.net},
  key    = {phpmyadmin}
}

@software{dbeaver,
  title  = {— \emph{DBeaver, universal database tool.}},
  author = {{Serge Rider and OSS contributors}},
  date   = {2011--},
  url    = {https://dbeaver.io},
  key    = {dbeaver}
}

@software{datagrip,
  title  = {— \emph{DataGrip, the Cross-Platform IDE for Databases \& SQL}},
  author = {JetBrains},
  date   = {2015--},
  url    = {https://www.jetbrains.com/datagrip},
  key    = {datagrip}
}

@software{squirrel,
  title  = {— \emph{SQuirreL SQL, universal SQL client}},
  author = {{Colin Bell and OSS contributors}},
  date   = {2001--},
  url    = {https://squirrel-sql.sourceforge.io},
  key    = {squirrel}
}

@software{sql_murder_mystery_online,
  author = {Park, Joon and He, Cathy and Germuska, Joe},
  title  = {— \emph{SQL Murder Mystery.}},
  date   = {2019--},
  url    = {https://mystery.knightlab.com},
  key    = {sqlmurdermysteryonline}
}

@software{sql_murder_mystery_repo,
  author = {Park, Joon and He, Cathy},
  title  = {— \emph{SQL Murder Mystery (repository)}},
  date   = {2018--},
  url    = {https://github.com/NUKnightLab/sql-mysteries},
  key    = {sqlmurdermysteryrepo}
}

@software{sql_island_repo,
  author = {Johannes Schildgen},
  title  = {— \emph{SQL Island (repository)}},
  date   = {2019--},
  url    = {https://github.com/jschildgen/sql-island},
  key    = {sqlislandrepo}
}

@software{sql_island_online,
  author = {Johannes Schildgen},
  title  = {— \emph{SQL Island}},
  date   = {2013--},
  url    = {https://sql-island.informatik.uni-kl.de},
  key    = {sqlislandonline}
}

@software{select_star_sql_repo,
  author = {Zi Chong Kao},
  title  = {— \emph{Select Star SQL (repository)}},
  date   = {2018--},
  url    = {https://github.com/zichongkao/selectstarsql},
  key    = {selectstarsqlrepo}
}

@software{select_star_sql_online,
  author = {Zi Chong Kao},
  title  = {— \emph{Select Star SQL}},
  date   = {2018--},
  url    = {https://selectstarsql.com},
  key    = {selectstarsqlonline}
}

@software{schemaverse_repo,
  author = {Josh McDougall},
  title  = {— \emph{Schemaverse (repository)}},
  date   = {2010--2017},
  url    = {https://github.com/Abstrct/Schemaverse},
  key    = {schemaverserepo}
}

@software{schemaverse_online,
  author = {Josh McDougall},
  title  = {— \emph{Schemaverse}},
  date   = {2010--2023},
  url    = {http://schemaverse.com},
  note   = {(dead link as of July 2024)},
  key    = {schemaverseonline}
}

@software{galaxql_online,
  author = {Jari Komppa},
  title  = {— \emph{GalaXQL}},
  date   = {2005--},
  url    = {https://solhsa.com/g3},
  key    = {galaxqlonline}
}

@software{galxql_repo,
  author = {Jari Komppa and OSS contributors},
  title  = {— \emph{GalaXQL (repository)}},
  date   = {2014--},
  url    = {https://sourceforge.net/projects/galaxql/},
  key    = {galaxqlrepo}
}

@software{adbenture_online,
  author = {Claudia Steinberger},
  title  = {— \emph{aDBenture}},
  date   = {2023--},
  url    = {https://adbenture.aau.at/en},
  key    = {adbentureonline}
}

@software{sqlab,
  author = {Aristide Grange},
  title  = {— \emph{SQL Adventure Builder}},
  date   = {2024--},
  url    = {https://github.com/laowantong/sqlab},
  key    = {sqlab}
}

@software{sqlab_island,
  author = {Aristide Grange},
  title  = {— \emph{SQLab Island}},
  date   = {2024--},
  url    = {https://github.com/laowantong/sqlab_island},
  key    = {sqlabisland}
}

@software{sqlab_sessform,
  author = {Aristide Grange},
  title  = {— \emph{SQLab Sessform}},
  date   = {2024--},
  url    = {https://github.com/laowantong/sqlab_sessform},
  key    = {sqlabsessform}
}

@software{mysql,
  author = {{Oracle Corporation}},
  title  = {— \emph{MySQL 8.0 Reference Manual}},
  url    = {https://dev.mysql.com/doc/refman/8.4/en},
  key    = {mysql}
}

@software{postgresql,
  author = {{PostgreSQL Global Development Group}},
  title  = {— \emph{PostgreSQL 14.0 Documentation}},
  url    = {https://www.postgresql.org/docs/14},
  key    = {postgresql}
}

@software{sqlite,
  author = {Hipp, D. Richard},
  title  = {— \emph{SQLite 3.46 Documentation}},
  url    = {https://www.sqlite.org/docs.html},
  key    = {sqlite}
}

@software{mariadb,
  author = {{MariaDB Corporation AB}},
  title  = {— \emph{MariaDB 11.6 Documentation}},
  url    = {https://mariadb.com/docs},
  key    = {mariadb}
}

@software{oracle,
  author = {{Oracle Corporation}},
  title  = {— \emph{Oracle Database 23ai Documentation}},
  url    = {https://docs.oracle.com},
  key    = {oracle}
}

@manual{sql_standard_2023,
  title  = {{ISO/IEC 9075-2:2023(en) Information technology — Database languages SQL}},
  year   = {2023},
  part   = {2: Foundation},
  author = {{International Organization for Standardization}},
  url    = {https://www.iso.org/obp/ui/en/#iso:std:iso-iec:9075:-2:ed-6:v1:en}
}

@misc{scutaru2024,
  author       = {Scutaru, Cristian},
  title        = {{Database Products with Window Functions Support}},
  howpublished = {Data Xtractor Blog},
  url          = {https://data-xtractor.com/blog/query-builder/window-functions-support},
  date         = {2024-09},
}

@misc{gruber2004,
  author       = {Gruber, John},
  title        = {{Markdown}},
  date         = {2004-12-17},
  howpublished = {Daring Fireball},
  url          = {https://daringfireball.net/projects/markdown/}
}

@misc{wiki:dml,
  author = {Wikipedia},
  title  = {{Data manipulation language}},
  date   = {2004-09},
  url    = {https://en.wikipedia.org/wiki/Data_manipulation_language}
}

@misc{db_engines,
  author = {Redgate Software},
  title  = {{DB-Engines: ranking of relational DBMS}},
  date   = {2024-09},
  url    = {https://db-engines.com/en/ranking/relational+dbms}
}

@inproceedings{canale2022,
  author    = {Canale, Lorenzo and Farinetti, Laura},
  booktitle = {2022 IEEE 46th Annual Computers, Software, and Applications Conference (COMPSAC)},
  title     = {{SQL Murder Mystery: a serious game to learn querying databases}},
  year      = {2022},
  volume    = {},
  number    = {},
  pages     = {129-138},
  keywords  = {Training;Games;Software;Database systems;Serious games;Data mining;Edutainment;Serious games;Playful learning;Learning analytics;Educational data mining},
  doi       = {10.1109/COMPSAC54236.2022.00027}
}

@phdthesis{miedema2024,
  author = {Daphne Miedema},
  title  = {{On Learning SQL: Disentangling concepts in Data Systems Education}},
  school = {Eindhoven University of Technology},
  year   = {2024},
  type   = {{PhD} Thesis}
}

@inproceedings{schildgen2014,
  title       = {{SQL Island: An Adventure Game to Learn the Database Language SQL}},
  year        = {2014},
  pages       = {137-138},
  booktitle   = {{8th European Conference on Games Based Learning (ECGBL 2014)}},
  event_place = {Berlin, Germany},
  event_name  = {European Conference on Games Based Learning},
  event_date  = {09-10.10.2014},
  author      = {Schildgen, Johannes}
}

@incollection{schildgen2015,
  author    = {Schildgen, Johannes and Deßloch, Stefan},
  title     = {{SQL-Grundlagen spielend lernen mit dem Text-Adventure SQL Island}},
  year      = 2015,
  booktitle = {Datenbanksysteme für Business, Technologie und Web (BTW 2015)},
  publisher = {Gesellschaft für Informatik e.V.},
  address   = {Bonn},
  pissn     = {1617-5468},
  isbn      = {978-3-88579-635-0},
  pages     = {687--690}
}

@article{Jesus2015,
  author   = {Jesus, Paulo and Baquero, Carlos and Almeida, Paulo Sergio},
  journal  = {IEEE Communications Surveys \& Tutorials},
  title    = {{A Survey of Distributed Data Aggregation Algorithms}},
  year     = {2015},
  volume   = {17},
  number   = {1},
  pages    = {381-404},
  keywords = {Routing;Topology;Taxonomy;Network topology;Peer-to-peer computing;Distributed databases;Wireless sensor networks;Distributed algorithms;data aggregation;performance trade-offs;fault-tolerance},
  doi      = {10.1109/COMST.2014.2354398}
}

@book{soflano2011,
  author    = {Soflano, Mario},
  editor    = {Minhua Ma and Andreas Oikonomou and Lakhmi C. C. Jain},
  title     = {{Serious Games and Edutainment Applications}},
  publisher = {Springer-Verlag},
  chapter   = {18},
  pages     = {347-368},
  year      = {2011},
  address   = {London},
  isbn      = {978-1-4471-2160-2},
  doi       = {10.1007/978-1-4471-2161-9_18}
}

@book{vanrenesse2003,
  author    = {van Renesse, Robbert},
  title     = {{The Importance of Aggregation}},
  publisher = {Springer Berlin Heidelberg},
  booktitle = {Future Directions in Distributed Computing: Research and Position Papers},
  year      = {2003},
  address   = {Berlin, Heidelberg},
  pages     = {87--92},
  isbn      = {978-3-540-37795-5},
  doi       = {10.1007/3-540-37795-6_16}
}

@incollection{tobin2016,
  author    = {Samuel Tobin},
  title     = {{Save}},
  booktitle = {Debugging Game History: A Critical Lexicon},
  editor    = {Henry Lowood and Raiford Guins},
  publisher = {The MIT Press},
  year      = {2016},
  chapter   = {45},
  pages     = {385--387},
  doi       = {10.7551/mitpress/10087.001.0001},
  isbn      = {9780262331944},
  url       = {https://doi.org/10.7551/mitpress/10087.001.0001}
}

@article{xinogalos2022,
  author         = {Xinogalos, Stelios and Satratzemi, Maya},
  title          = {{The Use of Educational Games in Programming Assignments: SQL Island as a Case Study}},
  journal        = {Applied Sciences},
  volume         = {12},
  year           = {2022},
  number         = {13},
  article-number = {6563},
  url            = {https://www.mdpi.com/2076-3417/12/13/6563},
  issn           = {2076-3417},
  doi            = {10.3390/app12136563}
}

@inproceedings{broder1993,
  author    = {Broder, Andrei Z.},
  editor    = {Capocelli, Renato and De Santis, Alfredo and Vaccaro, Ugo},
  title     = {{Some applications of Rabin's fingerprinting method}},
  booktitle = {Sequences II},
  year      = {1993},
  publisher = {Springer New York},
  address   = {New York, NY},
  pages     = {143--152},
  isbn      = {978-1-4613-9323-8}
}

@book{hattie2008,
  author    = {Hattie, John},
  publisher = {Routeledge},
  address   = {London, England},
  title     = {{Visible Learning: A Synthesis of Over 800 Meta-Analyses Relating to Achievement}},
  isbn      = {9781134024124},
  year      = 2008
}

@article{marichal2015,
  title    = {{Preassociative aggregation functions}},
  journal  = {Fuzzy Sets and Systems},
  volume   = {268},
  pages    = {15-26},
  year     = {2015},
  issn     = {0165-0114},
  doi      = {https://doi.org/10.1016/j.fss.2014.11.010},
  url      = {https://www.sciencedirect.com/science/article/pii/S0165011414005132},
  author   = {Jean-Luc Marichal and Bruno Teheux},
  keywords = {Aggregation, Associativity, Preassociativity, Functional equation, Axiomatization},
  abstract = {The classical property of associativity is very often considered in aggregation function theory and fuzzy logic. In this paper we provide axiomatizations of various classes of preassociative functions, where preassociativity is a generalization of associativity recently introduced by the authors. These axiomatizations are based on existing characterizations of some noteworthy classes of associative operations, such as the class of Aczélian semigroups and the class of t-norms.}
}

@article{wang2024,
  author     = {Wang, Jinshui and Chen, Shuguang and Tang, Zhengyi and Lin, Pengchen and Wang, Yupeng},
  title      = {{False Positives and Deceptive Errors in SQL Assessment: A Large-Scale Analysis of Online Judge Systems}},
  date       = {2024-05},
  issue_date = {September 2024},
  publisher  = {ACM},
  address    = {New York, NY, USA},
  volume     = {24},
  number     = {3},
  url        = {https://doi.org/10.1145/3654677},
  doi        = {10.1145/3654677},
  journal    = {ACM Trans. Comput. Educ.},
  articleno  = {30},
  numpages   = {23},
  keywords   = {Online Judge Systems, SQL error categorization, deceptive errors}
}

@article{taipalus2018,
  author     = {Taipalus, Toni and Siponen, Mikko and Vartiainen, Tero},
  title      = {{Errors and Complications in SQL Query Formulation}},
  date       = {2018-08},
  issue_date = {September 2018},
  publisher  = {ACM},
  address    = {New York, NY, USA},
  volume     = {18},
  number     = {3},
  url        = {https://doi.org/10.1145/3231712},
  doi        = {10.1145/3231712},
  journal    = {ACM Trans. Comput. Educ.},
  articleno  = {15},
  numpages   = {29},
  keywords   = {standardization, query languages, languages, exercise design, errors, SQL, Human factors}
}

@inproceedings{taipalus2019,
  author    = {Taipalus, Toni and Per\"{a}l\"{a}, Piia},
  title     = {{What to Expect and What to Focus on in SQL Query Teaching}},
  year      = {2019},
  isbn      = {9781450358903},
  publisher = {ACM},
  address   = {New York, NY, USA},
  url       = {https://doi.org/10.1145/3287324.3287359},
  doi       = {10.1145/3287324.3287359},
  booktitle = {Proceedings of the 50th ACM Technical Symposium on Computer Science Education},
  pages     = {198–203},
  numpages  = {6},
  keywords  = {database education, error, query language, relational database, sql},
  location  = {Minneapolis, MN, USA},
  series    = {SIGCSE '19}
}

@book{date1997,
  author    = {Date, C. J. and Darwen, Hugh},
  title     = {{A guide to the SQL standard (4th ed.): a user's guide to the standard database language SQL}},
  year      = {1997},
  isbn      = {0201964260},
  publisher = {Addison-Wesley Longman Publishing Co., Inc.},
  address   = {USA}
}

@book{bowman2001,
  author    = {Bowman, Judith S. and Emerson, Sandra L. and Darnovsky, Marcy},
  title     = {{The practical SQL handbook: using SQL variants}},
  year      = {2001},
  isbn      = {0201703092},
  publisher = {Addison-Wesley Longman Ltd.},
  address   = {GBR},
  edition   = {4th}
}

@inproceedings{tuparov2022,
  author    = {Georgi Tuparov},
  editor    = {Avram Eskenazi and
               George Totkov and
               Ivan Koychev},
  title     = {{Sandbox Databases in {SQL} University Courses - State of the Art}},
  booktitle = {Proceedings of the 15th International Conference Education and Research
               in the Information Society, Plovdiv, Bulgaria, October 13 - 14, 2022},
  series    = {{CEUR} Workshop Proceedings},
  volume    = {3372},
  pages     = {82--89},
  publisher = {CEUR-WS.org},
  year      = {2022},
  url       = {https://ceur-ws.org/Vol-3372/paper09.pdf},
  timestamp = {Fri, 02 Jun 2023 16:15:08 +0200},
  biburl    = {https://dblp.org/rec/conf/eris/Tuparov22.bib},
  bibsource = {dblp computer science bibliography, https://dblp.org}
}

@article{miedema2022a,
  title     = {{Expert perspectives on student errors in SQL}},
  author    = {Miedema, Daphne and Fletcher, George and Aivaloglou, Efthimia},
  journal   = {ACM Transactions on Computing Education},
  volume    = {23},
  number    = {1},
  pages     = {1--28},
  year      = {2022},
  publisher = {ACM New York, NY}
}

@misc{zhao2024,
  title         = {{LLM-SQL-Solver: Can LLMs Determine SQL Equivalence?}},
  author        = {Fuheng Zhao and Lawrence Lim and Ishtiyaque Ahmad and Divyakant Agrawal and Amr El Abbadi},
  year          = {2024},
  eprint        = {2312.10321},
  archiveprefix = {arXiv},
  primaryclass  = {cs.DB},
  url           = {https://arxiv.org/abs/2312.10321}
}

@inproceedings{burgstaller2023,
  author    = {Nicole Burgstaller and Michael Morak and Claudia van der Rijst and Claudia Steinberger},
  title     = {{Learning Databases Interactively: An SQL Adventure}},
  editor    = {Theo Bastiaens},
  booktitle = {Proceedings of EdMedia + Innovate Learning 2023},
  date      = {2023-07},
  pages     = {97--107},
  address   = {Vienna, Austria},
  publisher = {Association for the Advancement of Computing in Education (AACE)},
  url       = {https://www.learntechlib.org/p/222488},
  abstract  = {aDBenture}
}

@article{kim2020,
  author     = {Kim, Hyeonji and So, Byeong-Hoon and Han, Wook-Shin and Lee, Hongrae},
  title      = {{Natural language to SQL: where are we today?}},
  date       = {2020-06},
  issue_date = {June 2020},
  publisher  = {VLDB Endowment},
  volume     = {13},
  number     = {10},
  issn       = {2150-8097},
  url        = {https://doi.org/10.14778/3401960.3401970},
  doi        = {10.14778/3401960.3401970},
  journal    = {Proc. VLDB Endow.},
  pages      = {1737–1750},
  numpages   = {14}
}

@book{elmasri2017,
  author    = {Elmasri, Ramez and Navathe, Shamkant},
  title     = {{Fundamentals of Database Systems}},
  year      = {2017},
  publisher = {Addison-Wesley Publishing Company},
  address   = {USA},
  edition   = {7th}
}

@article{smith2022,
  author  = {Smith, Marion},
  title   = {{Teaching SQL in an On-line Learning Environment: Considering for Selecting a relational database}},
  journal = {Southwestern Business Administration Journal},
  volume  = {20},
  number  = {1},
  year    = {2022},
  url     = {https://digitalscholarship.tsu.edu/sbaj/vol20/iss1/4}
}

@article{rivest1998,
  title     = {{Chaffing and winnowing: Confidentiality without encryption}},
  author    = {Rivest, Ronald L and others},
  journal   = {CryptoBytes (RSA laboratories)},
  volume    = {4},
  number    = {1},
  pages     = {12--17},
  year      = {1998},
  publisher = {Citeseer}
}

@article{brass2006,
  title   = {{Semantic errors in SQL queries: A quite complete list}},
  journal = {Journal of Systems and Software},
  volume  = {79},
  number  = {5},
  pages   = {630-644},
  year    = {2006},
  issn    = {0164-1212},
  doi     = {https://doi.org/10.1016/j.jss.2005.06.028},
  url     = {https://www.sciencedirect.com/science/article/pii/S016412120500124X},
  author  = {Stefan Brass and Christian Goldberg}
}

@article{chu2017a,
  author     = {Chu, Shumo and Weitz, Konstantin and Cheung, Alvin and Suciu, Dan},
  title      = {{HoTTSQL: proving query rewrites with univalent SQL semantics}},
  date       = {2017-06},
  issue_date = {June 2017},
  publisher  = {ACM},
  address    = {New York, NY, USA},
  volume     = {52},
  number     = {6},
  issn       = {0362-1340},
  url        = {https://doi.org/10.1145/3140587.3062348},
  doi        = {10.1145/3140587.3062348},
  journal    = {SIGPLAN Not.},
  pages      = {510–524},
  numpages   = {15},
  keywords   = {SQL, Homotopy Types, Formal Semantics, Equivalence}
}

@inproceedings{chu2017b,
  added-at  = {2023-12-14T15:34:03.000+0100},
  author    = {Chu, Shumo and Wang, Chenglong and Weitz, Konstantin and Cheung, Alvin},
  ee        = {http://cidrdb.org/cidr2017/papers/p51-chu-cidr17.pdf},
  keywords  = {},
  publisher = {www.cidrdb.org},
  title     = {{Cosette: An Automated Prover for SQL.}},
  url       = {http://dblp.uni-trier.de/db/conf/cidr/cidr2017.html#ChuWWC17},
  year      = 2017
}

@article{zhou2019,
  author     = {Zhou, Qi and Arulraj, Joy and Navathe, Shamkant and Harris, William and Xu, Dong},
  title      = {{Automated verification of query equivalence using satisfiability modulo theories}},
  date       = {2019-07},
  issue_date = {July 2019},
  publisher  = {VLDB Endowment},
  volume     = {12},
  number     = {11},
  issn       = {2150-8097},
  url        = {https://doi.org/10.14778/3342263.3342267},
  doi        = {10.14778/3342263.3342267},
  journal    = {Proc. VLDB Endow.},
  pages      = {1276–1288},
  numpages   = {13}
}

@inproceedings{yang2022,
  author    = {Yang, Sophia and Herman, Geoffrey L. and Alawini, Abdussalam},
  title     = {{Analyzing Student SQL Solutions via Hierarchical Clustering and Sequence Alignment Scores}},
  year      = {2022},
  isbn      = {9781450393508},
  publisher = {ACM},
  address   = {New York, NY, USA},
  url       = {https://doi.org/10.1145/3531072.3535319},
  doi       = {10.1145/3531072.3535319},
  booktitle = {Proceedings of the 1st International Workshop on Data Systems Education},
  pages     = {10–15},
  numpages  = {6},
  keywords  = {SQL, database education, hierarchical clustering, online assessment, sequence alignment},
  location  = {Philadelphia, PA, USA},
  series    = {DataEd '22}
}

@article{kanburoglu2024,
  title     = {{Text-to-SQL: A methodical review of challenges and models}},
  keywords  = {deep learning, large language model, natural language processing, Text-to-SQL},
  author    = {Kanburoğlu, {Ali Buğra} and Tek, {F. Boray}},
  year      = {2024},
  doi       = {10.55730/1300-0632.4077},
  language  = {English},
  volume    = {32},
  pages     = {403--419},
  journal   = {Turkish Journal of Electrical Engineering and Computer Sciences},
  issn      = {1300-0632},
  publisher = {Turkiye Klinikleri Journal of Medical Sciences},
  number    = {3}
}

@inproceedings{zhong2020,
  title     = {{Semantic Evaluation for Text-to-{SQL} with Distilled Test Suites}},
  author    = {Zhong, Ruiqi  and Yu, Tao  and Klein, Dan},
  editor    = {Webber, Bonnie  and Cohn, Trevor  and He, Yulan  and Liu, Yang},
  booktitle = {Proceedings of the 2020 Conference on Empirical Methods in Natural Language Processing (EMNLP)},
  month     = nov,
  year      = {2020},
  address   = {Online},
  publisher = {Association for Computational Linguistics},
  url       = {https://aclanthology.org/2020.emnlp-main.29},
  doi       = {10.18653/v1/2020.emnlp-main.29},
  pages     = {396--411}
}

@article{galindo2001,
  author     = {Galindo-Legaria, C\'{e}sar and Joshi, Milind},
  title      = {{Orthogonal optimization of subqueries and aggregation}},
  date       = {2001-05},
  issue_date = {June 2001},
  publisher  = {ACM},
  address    = {New York, NY, USA},
  volume     = {30},
  number     = {2},
  issn       = {0163-5808},
  url        = {https://doi.org/10.1145/376284.375748},
  doi        = {10.1145/376284.375748},
  journal    = {SIGMOD Rec.},
  pages      = {571–581},
  numpages   = {11}
}

@book{cscurricula2023,
  author    = {Kumar, Amruth N. and others},
  title     = {{Computer Science Curricula 2023}},
  year      = {2024},
  isbn      = {9798400710339},
  publisher = {ACM},
  address   = {New York, NY, USA}
}

@book{prensky2001,
  author    = {Prensky, Marc},
  title     = {{Digital game-based learning}},
  year      = {2001},
  isbn      = {9780071454001},
  publisher = {Mc Graw Hill},
  address   = {New York, NY, USA}
}

@article{prensky2003,
  author     = {Prensky, Marc},
  title      = {{Digital game-based learning}},
  date       = {2003-10},
  issue_date = {October 2003},
  publisher  = {ACM},
  address    = {New York, NY, USA},
  volume     = {1},
  number     = {1},
  url        = {https://doi.org/10.1145/950566.950596},
  doi        = {10.1145/950566.950596},
  journal    = {Comput. Entertain.},
  pages      = {21},
  numpages   = {1},
  keywords   = {video games, teachers, motivation, learning, generational change, games, education, digital, computer games}
}

@article{gee2003,
  author     = {Gee, James Paul},
  title      = {{What video games have to teach us about learning and literacy}},
  date       = {2003-10},
  issue_date = {October 2003},
  publisher  = {ACM},
  address    = {New York, NY, USA},
  volume     = {1},
  number     = {1},
  url        = {https://doi.org/10.1145/950566.950595},
  doi        = {10.1145/950566.950595},
  journal    = {Comput. Entertain.},
  pages      = {20},
  numpages   = {1},
  keywords   = {literacy, learning, games, education}
}

@article{ren2024,
  author     = {Ren, Jiaopin and Xu, Wei and Liu, Ziqing},
  title      = {{The Impact of Educational Games on Learning Outcomes: Evidence From a Meta-Analysis}},
  date       = {2024-01},
  issue_date = {Feb 2024},
  publisher  = {IGI Global},
  address    = {USA},
  volume     = {14},
  number     = {1},
  issn       = {2155-6849},
  url        = {https://doi.org/10.4018/IJGBL.336478},
  doi        = {10.4018/IJGBL.336478},
  journal    = {Int. J. Game-Based Learn.},
  pages      = {1–25},
  numpages   = {25},
  keywords   = {extrinsic motivation, gamification, intrinsic motivation, learning achievement, serious games}
}

@article{wood1976,
  author  = {Wood, David and Bruner, Jerome S. and Ross, Gail},
  title   = {{The Role of Tutoring in Problem Solving}},
  journal = {Journal of Child Psychology and Psychiatry},
  volume  = {17},
  number  = {2},
  pages   = {89-100},
  doi     = {https://doi.org/10.1111/j.1469-7610.1976.tb00381.x},
  url     = {https://acamh.onlinelibrary.wiley.com/doi/abs/10.1111/j.1469-7610.1976.tb00381.x},
  eprint  = {https://acamh.onlinelibrary.wiley.com/doi/pdf/10.1111/j.1469-7610.1976.tb00381.x},
  year    = {1976}
}

@inproceedings{deterding2011,
  author    = {Deterding, Sebastian and Dixon, Dan and Khaled, Rilla and Nacke, Lennart},
  title     = {{From game design elements to gamefulness: defining "gamification"}},
  year      = {2011},
  isbn      = {9781450308168},
  publisher = {ACM},
  address   = {New York, NY, USA},
  url       = {https://doi.org/10.1145/2181037.2181040},
  doi       = {10.1145/2181037.2181040},
  booktitle = {Proceedings of the 15th International Academic MindTrek Conference: Envisioning Future Media Environments},
  pages     = {9–15},
  numpages  = {7},
  keywords  = {alternate reality games, game-based technologies, gameful design, gamefulness, games, gamification, pervasive games, play, playful design, playfulness, serious games},
  location  = {Tampere, Finland},
  series    = {MindTrek '11}
}

@inbook{djaouti2011,
  author    = {Djaouti, Damien and Alvarez, Julian and Jessel, Jean-Pierre and Rampnoux, Olivier},
  editor    = {Ma, Minhua and Oikonomou, Andreas and Jain, Lakhmi C.},
  title     = {{Origins of Serious Games}},
  booktitle = {Serious Games and Edutainment Applications},
  year      = {2011},
  publisher = {Springer London},
  address   = {London},
  pages     = {25--43},
  isbn      = {978-1-4471-2161-9},
  doi       = {10.1007/978-1-4471-2161-9_3},
  url       = {https://doi.org/10.1007/978-1-4471-2161-9_3}
}

@book{abt1970,
  title     = {{Serious Games}},
  author    = {Abt, Clark C.},
  year      = {1970},
  publisher = {Viking Press},
  address   = {New York, NY, USA}
}

@book{schaff_1886,
  editor    = {Philip Schaff},
  title     = {{A Select Library of the Nicene and Post-Nicene Fathers of the Christian Church}},
  volume    = {2},
  year      = {1886--1890},
  address   = {Buffalo},
  publisher = {The Christian Literature Co.},
  url       = {https://www.ccel.org/ccel/schaff/npnf206.v.CVII.html}
}

@article{papastergiou2009,
  author     = {Papastergiou, Marina},
  title      = {{Digital Game-Based Learning in high school Computer Science education: Impact on educational effectiveness and student motivation}},
  date       = {2009-01},
  issue_date = {January, 2009},
  publisher  = {Elsevier Science Ltd.},
  address    = {GBR},
  volume     = {52},
  number     = {1},
  issn       = {0360-1315},
  url        = {https://doi.org/10.1016/j.compedu.2008.06.004},
  doi        = {10.1016/j.compedu.2008.06.004},
  journal    = {Comput. Educ.},
  pages      = {1–12},
  numpages   = {12},
  keywords   = {Secondary education, Multimedia/hypermedia systems, Interactive learning environments, Gender studies, Applications in subject areas}
}

@article{egenfeldt2006,
  author  = {Egenfeldt-Nielsen, Simon},
  year    = {2006},
  month   = {10},
  pages   = {},
  title   = {{Overview of research on the educational use of video games}},
  volume  = {1},
  journal = {Digital Kompetanse},
  doi     = {10.18261/ISSN1891-943X-2006-03-03}
}

@article{boyle2016,
  title    = {{An update to the systematic literature review of empirical evidence of the impacts and outcomes of computer games and serious games}},
  journal  = {Computers \& Education},
  volume   = {94},
  pages    = {178-192},
  year     = {2016},
  issn     = {0360-1315},
  doi      = {https://doi.org/10.1016/j.compedu.2015.11.003},
  url      = {https://www.sciencedirect.com/science/article/pii/S0360131515300750},
  author   = {Elizabeth A. Boyle and others},
  keywords = {Computer games, Serious games, Entertainment, Engagement, Learning, Systematic literature review}
}

@article{connolly2012,
  title     = {{A systematic literature review of empirical evidence on computer games and serious games}},
  keywords  = {Computer games, Serious games, Learning, Skill enhancement, Engagement},
  author    = {Connolly, {Thomas M.} and Boyle, {Elizabeth A.} and Ewan MacArthur and Thomas Hainey and Boyle, {James M.}},
  year      = {2012},
  month     = sep,
  doi       = {10.1016/j.compedu.2012.03.004},
  language  = {English},
  volume    = {59},
  pages     = {661--686},
  journal   = {Computers \& Education},
  issn      = {0360-1315},
  publisher = {Elsevier B.V.},
  number    = {2}
}

@article{funk2004,
  title     = {{Violence exposure in real-life, video games, television, movies, and the internet: is there desensitization?}},
  author    = {Funk, Jeanne B. and Baldacci, Heidi Bechtoldt and Pasold, Tracie and Baumgardner, Jennifer},
  journal   = {Journal of adolescence},
  volume    = {27},
  number    = {1},
  pages     = {23--39},
  year      = {2004},
  publisher = {Elsevier}
}

@article{bai2020,
  title   = {{Does gamification improve student learning outcome? Evidence from a meta-analysis and synthesis of qualitative data in educational contexts}},
  journal = {Educational Research Review},
  volume  = {30},
  pages   = {100322},
  year    = {2020},
  issn    = {1747-938X},
  doi     = {https://doi.org/10.1016/j.edurev.2020.100322},
  url     = {https://www.sciencedirect.com/science/article/pii/S1747938X19302908},
  author  = {Shurui Bai and Khe Foon Hew and Biyun Huang}
}

@article{bond2020,
  title     = {{Mapping research in student engagement and educational technology in higher education: A systematic evidence map}},
  author    = {Bond, Melissa and Buntins, Katja and Bedenlier, Svenja and Zawacki-Richter, Olaf and Kerres, Michael},
  journal   = {International journal of educational technology in higher education},
  volume    = {17},
  pages     = {1--30},
  year      = {2020},
  publisher = {Springer}
}

@article{ekin2023,
  title    = {{Drawing the big picture of games in education: A topic modeling-based review of past 55 years}},
  journal  = {Computers \& Education},
  volume   = {194},
  pages    = {104700},
  year     = {2023},
  issn     = {0360-1315},
  doi      = {https://doi.org/10.1016/j.compedu.2022.104700},
  url      = {https://www.sciencedirect.com/science/article/pii/S0360131522002718},
  author   = {Cansu C. Ekin and Elif Polat and Sinan Hopcan},
  keywords = {Serious Games, Game-based Learning, Text Mining, Topic Modeling, Latent Dirichlet Allocation, Games in Education}
}

@inproceedings{lupano2021,
  author    = {Lupano, Mara and Farinetti, Laura and Morreale, Domenico},
  booktitle = {2021 IEEE 45th Annual Computers, Software, and Applications Conference (COMPSAC)},
  title     = {{OPUS: an Alternate Reality Game to learn SQL at university}},
  year      = {2021},
  volume    = {},
  number    = {},
  pages     = {121-126},
  keywords  = {Databases;Conferences;Collaboration;Games;Collective intelligence;Software;Alternate Reality Games;Pedagogy of play;SQL},
  doi       = {10.1109/COMPSAC51774.2021.00028}
}

@article{soflano2015,
  title    = {{An application of adaptive games-based learning based on learning style to teach SQL}},
  journal  = {Computers \& Education},
  volume   = {86},
  pages    = {192-211},
  year     = {2015},
  issn     = {0360-1315},
  doi      = {https://doi.org/10.1016/j.compedu.2015.03.015},
  url      = {https://www.sciencedirect.com/science/article/pii/S0360131515000937},
  author   = {Mario Soflano and Thomas M. Connolly and Thomas Hainey},
  keywords = {Games-based learning, Adaptivity, Learning style, SQL, Role-playing games}
}

@thesis{ward2015,
  type    = {Master's Thesis},
  title   = {{Teaching of SQL Through a Game}},
  author  = {Ward, Patrick T},
  year    = {2015},
  school  = {Suny Polytechnic Institute},
  address = {Utica, NY, USA},
  url     = {https://soar.suny.edu/bitstream/handle/20.500.12648/951/Ward2015SQL.pdf}
}

@inproceedings{pustulka2021,
  author    = {Pustulka, Ela and Krause, Kai and de Espona, Lucia and Kennel, Andrea},
  title     = {{SQL Scrolls - A Reusable and Extensible DGBL Experiment}},
  year      = {2021},
  isbn      = {9781450385763},
  publisher = {ACM},
  address   = {New York, NY, USA},
  url       = {https://doi.org/10.1145/3507923.3507932},
  doi       = {10.1145/3507923.3507932},
  booktitle = {Proceedings of the 10th Computer Science Education Research Conference},
  pages     = {39–48},
  numpages  = {10},
  keywords  = {GBL, SQL, database, game, teaching},
  location  = {Virtual Event, Netherlands},
  series    = {CSERC '21}
}

@inproceedings{batista2019,
  title        = {{SQL Planet - A Game Proposal to Teach SQL Language}},
  author       = {Batista, A and Martins, R and others},
  booktitle    = {{Proceedings of the 12th annual International Conference of Education, Research and Innovation}},
  pages        = {4372--4377},
  date         = {2019-11},
  organization = {IATED},
  doi          = {10.21125/iceri.2019.1095}
}

@inproceedings{tahir2020,
  author       = {Faiza Tahir and Antonija Mitrovic and Valerie Sotardi},
  booktitle    = {{Proceedings of the 28th International Conference on Computers in Education}},
  editor       = {So, H. J. and others},
  title        = {{Investigating the Effects of Gamifying SQL-Tutor}},
  date         = {2020-11},
  organization = {Asia-Pacific Society for Computers in Education}
}

@conference{kluyver2016,
  title        = {{Jupyter Notebooks -- a publishing format for reproducible computational workflows}},
  author       = {Thomas Kluyver and others},
  booktitle    = {{Positioning and Power in Academic Publishing: Players, Agents and Agendas}},
  editor       = {F. Loizides and B. Schmidt},
  organization = {IOS Press},
  pages        = {87 - 90},
  year         = {2016}
}

@misc{brotli,
  series       = {Request for Comments},
  number       = 7932,
  howpublished = {RFC 7932},
  publisher    = {RFC Editor},
  doi          = {10.17487/RFC7932},
  url          = {https://www.rfc-editor.org/info/rfc7932},
  author       = {Jyrki Alakuijala and Zoltan Szabadka},
  title        = {{Brotli Compressed Data Format}},
  pagetotal    = 128,
  year         = 2016,
  month        = jul
}

@article{chandra2015,
  author     = {Chandra, Bikash and Chawda, Bhupesh and Kar, Biplab and Reddy, K. V. Maheshwara and Shah, Shetal and Sudarshan, S.},
  date       = {2015},
  doi        = {10.1007/s00778-015-0395-0},
  isbn       = {0949-877X},
  journal    = {The VLDB Journal},
  number     = {6},
  pages      = {731--755},
  title      = {{Data generation for testing and grading SQL queries}},
  url        = {https://doi.org/10.1007/s00778-015-0395-0},
  volume     = {24},
  year       = {2015},
  bdsk-url-1 = {https://doi.org/10.1007/s00778-015-0395-0}
}

@inproceedings{tuya2006,
  author    = {Tuya, Javier and Suárez-Cabal, Mar{\'\i}a José and de la Riva, Claudio},
  title     = {{SQLMutation: A tool to generate mutants of SQL database queries}},
  year      = {2006},
  isbn      = {076952897X},
  publisher = {IEEE Computer Society},
  address   = {USA},
  url       = {https://doi.org/10.1109/MUTATION.2006.13},
  doi       = {10.1109/MUTATION.2006.13},
  booktitle = {Proceedings of the Second Workshop on Mutation Analysis},
  keywords  = {Database testing, Mutation operators, SQL query, SQL testing},
  series    = {MUTATION '06}
}

@article{tuya2007,
  title    = {{Mutating database queries}},
  journal  = {Information and Software Technology},
  volume   = {49},
  number   = {4},
  pages    = {398-417},
  year     = {2007},
  issn     = {0950-5849},
  doi      = {https://doi.org/10.1016/j.infsof.2006.06.009},
  url      = {https://www.sciencedirect.com/science/article/pii/S0950584906000814},
  author   = {Javier Tuya and Mar{\'\i}a José Suárez-Cabal and Claudio de {la Riva}},
  keywords = {Software testing, Database testing, SQL testing, Fault-based testing, Mutation testing, Test adequacy criteria}
}

@article{jia2011,
  author   = {Jia, Yue and Harman, Mark},
  journal  = {IEEE Transactions on Software Engineering},
  title    = {{An Analysis and Survey of the Development of Mutation Testing}},
  year     = {2011},
  volume   = {37},
  number   = {5},
  pages    = {649-678},
  keywords = {Genetic mutations;Software testing;Fault detection;History;Books;Programming profession;Computer languages;Java;Educational institutions;Automata;Mutation testing;survey.},
  doi      = {10.1109/TSE.2010.62}
}

@misc{yang2023,
  title         = {{On Repairing Natural Language to SQL Queries}},
  author        = {Aidan Z. H. Yang and Ricardo Brancas and Pedro Esteves and Sofia Aparicio and Joao Pedro Nadkarni and Miguel Terra-Neves and Vasco Manquinho and Ruben Martins},
  year          = {2023},
  eprint        = {2310.03866},
  archiveprefix = {arXiv},
  primaryclass  = {cs.DB},
  url           = {https://arxiv.org/abs/2310.03866}
}

@article{suarezcabal2017,
  author  = {Su{\'a}rez-Cabal, Mar{\'\i}a Jos{\'e} and de la Riva, Claudio and Tuya, Javier and Blanco, Raquel},
  doi     = {10.1007/s10515-017-0212-7},
  isbn    = {1573-7535},
  journal = {Automated Software Engineering},
  number  = {4},
  pages   = {719--755},
  title   = {{Incremental test data generation for database queries}},
  volume  = {24},
  year    = {2017}
}

@inproceedings{shah2011,
  author    = {Shah, Shetal and Sudarshan, S. and Kajbaje, Suhas and Patidar, Sandeep and Gupta, Bhanu Pratap and Vira, Devang},
  booktitle = {2011 IEEE 27th International Conference on Data Engineering},
  title     = {{Generating test data for killing SQL mutants: A constraint-based approach}},
  year      = {2011},
  pages     = {1175-1186},
  keywords  = {Databases;Testing;Algebra;Aggregates;Programming;Polynomials;Humans},
  doi       = {10.1109/ICDE.2011.5767876}
}

@inproceedings{chan2005,
  author    = {Chan, W.K. and Cheung, S.C. and Tse, T.H.},
  booktitle = {Fifth International Conference on Quality Software (QSIC'05)},
  title     = {{Fault-based testing of database application programs with conceptual data model}},
  year      = {2005},
  volume    = {},
  number    = {},
  pages     = {187-196},
  keywords  = {Data models;Application software;Database systems;Genetic mutations;Proposals;Software testing;Software quality;System testing;Operating systems;Database languages;database application testing;fault-based testing;semantic mutants.},
  doi       = {10.1109/QSIC.2005.27}
}

@article{wasik2018,
  author     = {Wasik, Szymon and Antczak, Maciej and Badura, Jan and Laskowski, Artur and Sternal, Tomasz},
  title      = {{A Survey on Online Judge Systems and Their Applications}},
  date       = {2018-01},
  issue_date = {January 2019},
  publisher  = {ACM},
  address    = {New York, NY, USA},
  volume     = {51},
  number     = {1},
  issn       = {0360-0300},
  url        = {https://doi.org/10.1145/3143560},
  doi        = {10.1145/3143560},
  journal    = {ACM Comput. Surv.},
  articleno  = {3},
  numpages   = {34},
  keywords   = {Online judge, challenge, contest, crowdsourcing, evaluation as a service}
}

@thesis{kaur2023,
  type   = {Master's Thesis},
  author = {Kaur, Damanpreet},
  school = {San José State University},
  title  = {{EvalSQL - Automated Assessment of Database Queries}},
  year   = {2023},
  doi    = {10.31979/etd.bs5q-k5u4},
  url    = {https://scholarworks.sjsu.edu/etd_projects/1314}
}

@misc{koberlein2024,
  title         = {{Quantifying Semantic Query Similarity for Automated Linear SQL Grading: A Graph-based Approach}},
  author        = {Leo Köberlein and Dominik Probst and Richard Lenz},
  year          = {2024},
  eprint        = {2403.14441},
  archiveprefix = {arXiv},
  primaryclass  = {cs.DB},
  url           = {https://arxiv.org/abs/2403.14441}
}

@inproceedings{yu2018,
  title     = {{S}pider: A Large-Scale Human-Labeled Dataset for Complex and Cross-Domain Semantic Parsing and Text-to-{SQL} Task},
  author    = {Yu, Tao and Zhang, Rui and Yang, Kai and Yasunaga, Michihiro and Wang, Dongxu and Li, Zifan and Ma, James and Li, Irene and Yao, Qingning and Roman, Shanelle and Zhang, Zilin and Radev, Dragomir},
  editor    = {Riloff, Ellen and Chiang, David and Hockenmaier, Julia and Tsujii, Jun{'}ichi},
  booktitle = {Proceedings of the 2018 Conference on Empirical Methods in Natural Language Processing},
  date      = {2018-10},
  address   = {Brussels, Belgium},
  publisher = {Association for Computational Linguistics},
  url       = {https://aclanthology.org/D18-1425},
  doi       = {10.18653/v1/D18-1425},
  pages     = {3911--3921}
}

@article{miedema2022b,
  author     = {Miedema, Daphne and Aivaloglou, Efthimia and Fletcher, George},
  title      = {{Identifying SQL misconceptions of novices: findings from a think-aloud study}},
  date       = {2022-02},
  issue_date = {March 2022},
  publisher  = {ACM},
  address    = {New York, NY, USA},
  volume     = {13},
  number     = {1},
  issn       = {2153-2184},
  url        = {https://doi.org/10.1145/3514214},
  doi        = {10.1145/3514214},
  journal    = {ACM Inroads},
  pages      = {52–65},
  numpages   = {14}
}

@article{wang2020,
  title   = {{Combining dynamic and static analysis for automated grading SQL statements}},
  author  = {Wang, Jinshui and Zhao, Yunpeng and Tang, Zhengyi and Xing, Zhenchang},
  journal = {Journal of Network Intelligence},
  volume  = {5},
  number  = {4},
  pages   = {179--190},
  year    = {2020}
}

@inproceedings{ahadi2016,
  author    = {Ahadi, Alireza and Behbood, Vahid and Vihavainen, Arto and Prior, Julia and Lister, Raymond},
  title     = {{Students' Syntactic Mistakes in Writing Seven Different Types of SQL Queries and its Application to Predicting Students' Success}},
  year      = {2016},
  isbn      = {9781450336857},
  publisher = {ACM},
  address   = {New York, NY, USA},
  url       = {https://doi.org/10.1145/2839509.2844640},
  doi       = {10.1145/2839509.2844640},
  booktitle = {Proceedings of the 47th ACM Technical Symposium on Computing Science Education},
  pages     = {401–406},
  numpages  = {6},
  keywords  = {databases, machine learning, online assessment, sql queries},
  location  = {Memphis, Tennessee, USA},
  series    = {SIGCSE '16}
}

@article{nayak2024,
  title     = {{Student Outcome Assessment on Structured Query Language using Rubrics and Automated Feedback Generation}},
  journal   = {International Journal of Advanced Computer Science and Applications},
  doi       = {10.14569/IJACSA.2024.0150374},
  url       = {http://dx.doi.org/10.14569/IJACSA.2024.0150374},
  year      = {2024},
  publisher = {The Science and Information Organization},
  volume    = {15},
  number    = {3},
  author    = {Sidhidatri Nayak and Reshu Agarwal and Sunil Kumar Khatri and Masoud Mohammadian}
}

@inproceedings{shute2024,
  title     = {{SQL Has Problems. We Can Fix Them: Pipe Syntax In SQL}},
  author    = {Jeff Shute and Shannon Bales and Matthew Brown and Jean-Daniel Browne and Brandon Dolphin and Romit Kudtarkar and Andrey Litvinov and Jingchi Ma and John Morcos and Michael Shen and David Wilhite and Xi Wu and Lulan Yu},
  booktitle = {Proceedings of the VLDB Endowment},
  publisher = {VLDB Endowment},
  year      = {2024},
  pages     = {4051-4063}
}

\appendix

\section{Annexes}

\subsection{Supplementary figures and tables}

\begin{figure}[!ht]
    \includegraphics[width=0.9\linewidth]{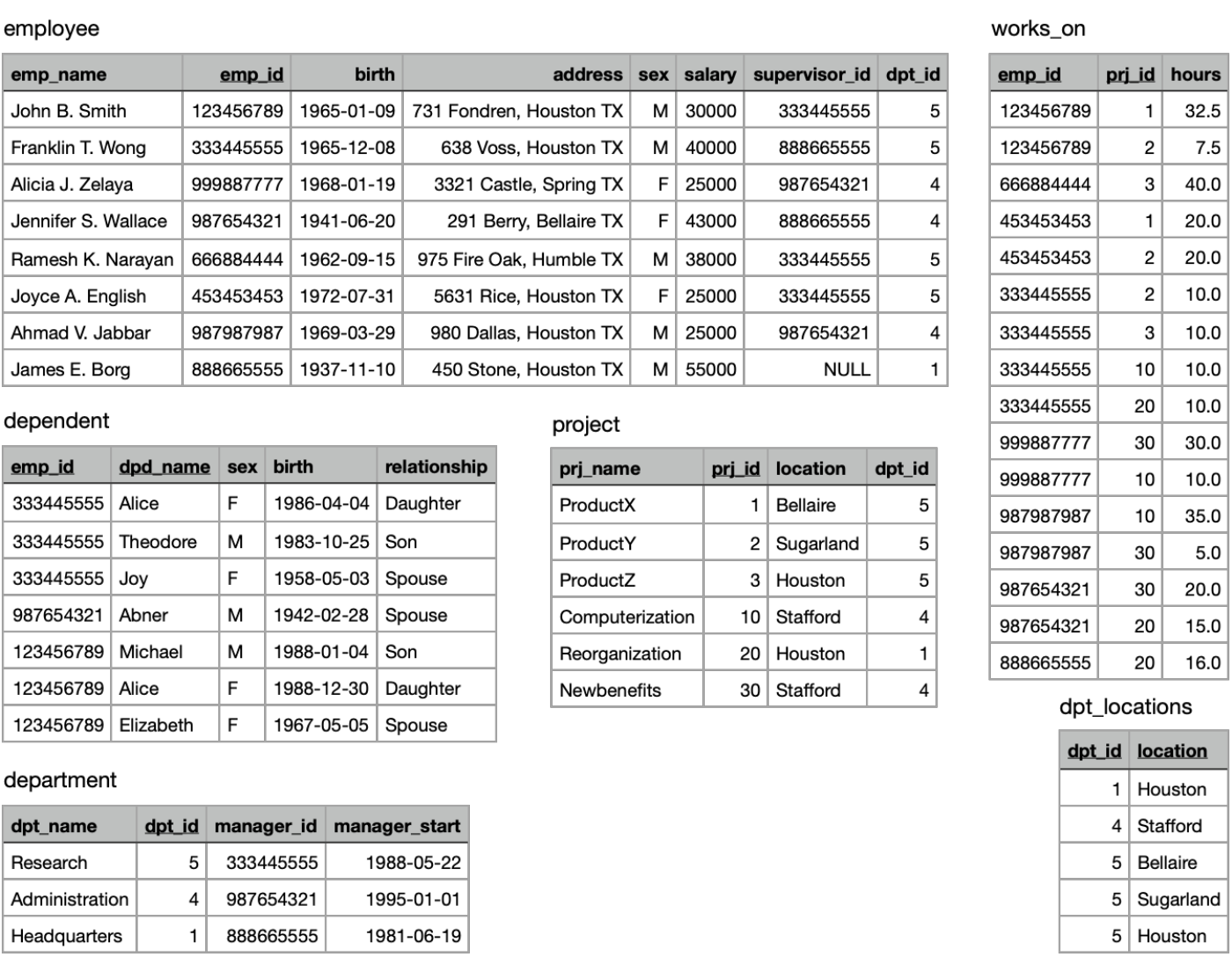}
    \caption{The majority of the queries studied in the present paper use the schema of the \texttt{company} database defined in the textbook of Elmasri and Navathe \cite{elmasri2017}.
    When provided, the result tables are calculated on the instance of p.~250 (ibid.).
    Note that some identifiers were renamed, and columns \texttt{Fname}, \texttt{Minit}, and \texttt{Lname} were merged into a single column \texttt{emp\_name}.}
    \label{fig:company}
    \Description{The company database consists of 6 tables. The employee table includes emp_name, emp_id, birth, address, sex, salary, supervisor_id, and dpt_id. The department table includes dpt_name, dpt_id, manager_id, and manager_start. The dpt_locations table includes dpt_id and location. The project table includes prj_name, prj_id, location, and dpt_id. The works_on table includes emp_id, prj_id, and hours. The dependent table includes emp_id, dpd_name, sex, birth, and relationship.}
\end{figure}

\begin{figure}[!ht]
    \includegraphics[width=0.9\textwidth]{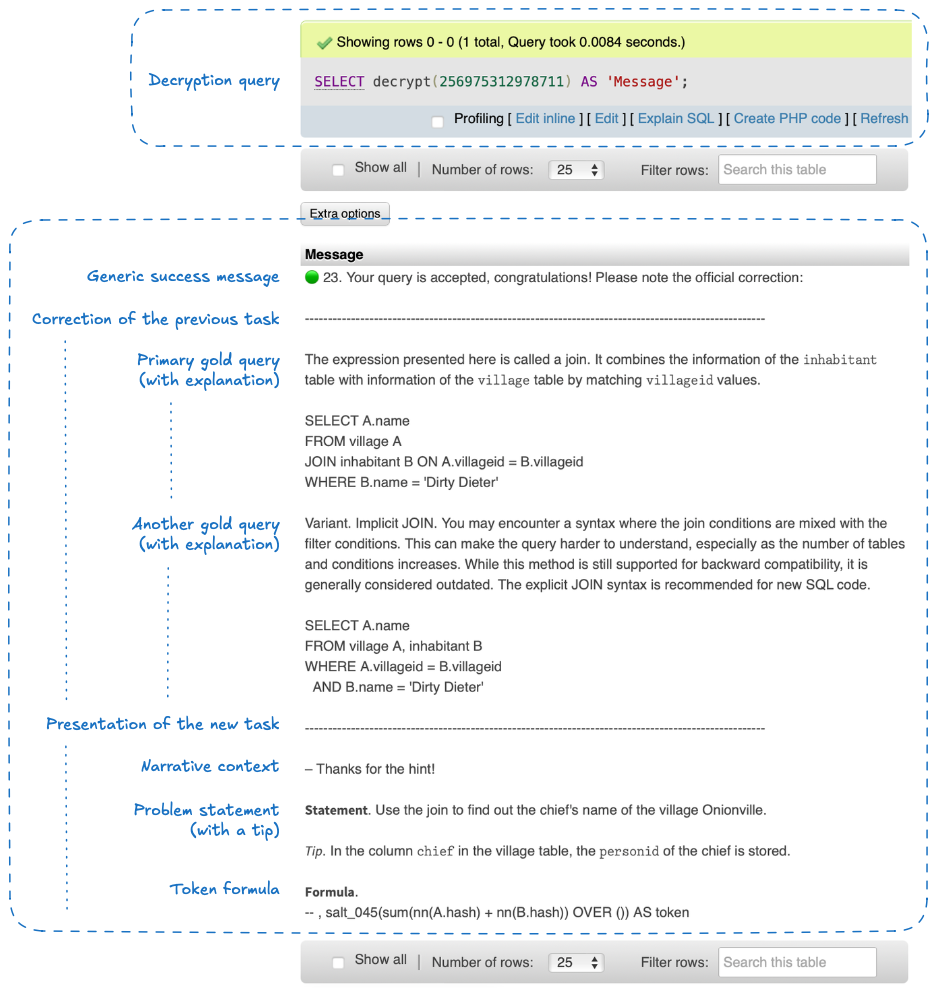}
    \caption{Accessing a feedback message of the MySQL version of SQLab Island under phpMyAdmin \cite{phpmyadmin}. The text is decrypted when the student submits a correct answer. It is a concatenation of a generic success message, the correction of the previous task (here, the primary gold query and another one), and the presentation of the new task (the narrative context, the problem statement, and the commented token formula). Although phpMyAdmin renders the tables in plain text, an automatic conversion of Markdown to Unicode mathematical characters allows for the display of bold, italic, and monospace texts.}
    \label{fig:pma}
    \Description{An annotated screenshot of a typical answer unlocked by SELECT decrypt(token). It consists of a generic success message, the correction of the previous task (the primary gold query and another gold query, both with an explanation), and the presentation of the new task (the narrative context, the problem statement with a tip, and the token formula).}
\end{figure}

\begin{table}[!ht]
    \tablecaption{A case of collision with \texttt{bit\_xor()}. The left [resp., right] query counts the number of dependents [resp., \emph{male} dependents] per employee.
    If the token formula implements $f_w(f_a())$ as \texttt{sum(bit\_xor())}, both queries have the same fingerprint (columns \texttt{token\_1}).
    Indeed, every employee happens to have either zero or an odd number of dependents, and either zero or one \emph{male} dependent. In both cases, the left join creates an odd number of occurrences of each \texttt{employee} row. Xoring a value with itself an odd number of times has no effect, resulting in the same token for both queries.
    The issue is resolved by implementing $f_w(f_a())$ as \texttt{bit\_xor(sum())} (columns \texttt{token\_2}), since summing three times a non-zero value is obviously different from summing it once.
    Of course, here, the natural choice is a two-dimensional formula, which would have avoided the issue altogether. \hfill\accmat}
    \begin{tabular}{ll}
\begin{lstlisting}[basicstyle=\tiny\ttfamily]
SELECT A.emp_id
    , count(B.emp_id) as count
    , salt_042(sum(bit_xor(nn(A.hash))) OVER ()) as token_1
    , salt_042(bit_xor(sum(nn(A.hash))) OVER ()) as token_2
FROM employee A
LEFT JOIN dependent B ON A.emp_id = B.emp_id
GROUP BY A.emp_id
\end{lstlisting}
&
\begin{lstlisting}[basicstyle=\tiny\ttfamily]
SELECT A.emp_id
     , count(B.emp_id) as count
     , salt_042(sum(bit_xor(nn(A.hash))) OVER ()) as token_1
     , salt_042(bit_xor(sum(nn(A.hash))) OVER ()) as token_2
FROM employee A
LEFT JOIN dependent B ON A.emp_id = B.emp_id §\textcolor{ACMOrange}{\texttt{AND B.sex = 'M'}}§
GROUP BY A.emp_id
\end{lstlisting}
\\
\\
\begin{minipage}{0.49\linewidth}\scriptsize
\begin{verbatim}
  |   emp_id  | count |     token_1     |     token_2     |
  +-----------+-------+-----------------+-----------------+
  | 123456789 |   3   | 276085499732552 | 278507996506236 |
  | 333445555 |   3   | 276085499732552 | 278507996506236 |
  | 453453453 |   0   | 276085499732552 | 278507996506236 |
  | 666884444 |   0   | 276085499732552 | 278507996506236 |
  | 888665555 |   0   | 276085499732552 | 278507996506236 |
  | 987654321 |   1   | 276085499732552 | 278507996506236 |
  | 987987987 |   0   | 276085499732552 | 278507996506236 |
  | 999887777 |   0   | 276085499732552 | 278507996506236 |
\end{verbatim}
\end{minipage}
&
\begin{minipage}{0.49\linewidth}\scriptsize
\begin{verbatim}
  |   emp_id  | count |     token_1     |     token_2     |
  +-----------+-------+-----------------+-----------------+
  | 123456789 |   1   | 276085499732552 | 280652170274914 |
  | 333445555 |   1   | 276085499732552 | 280652170274914 |
  | 453453453 |   0   | 276085499732552 | 280652170274914 |
  | 666884444 |   0   | 276085499732552 | 280652170274914 |
  | 888665555 |   0   | 276085499732552 | 280652170274914 |
  | 987654321 |   1   | 276085499732552 | 280652170274914 |
  | 987987987 |   0   | 276085499732552 | 280652170274914 |
  | 999887777 |   0   | 276085499732552 | 280652170274914 |
\end{verbatim}
\end{minipage}
\end{tabular}
    \label{tab:xor_sum_vs_sum_xor}
\end{table}

\begin{table}[!ht]
    \tablecaption{
        Six queries to find the employees who work 40 hours (adapted and extended from \cite{galindo2001}). All perform a table combination, an aggregation, and a filter in various orders, as shown from top to bottom in the corner of each cell.
        $J$ and $C$ denote a join and a correlation, respectively.
        $A_s$ and $A_v$ denote a scalar and a vector aggregate, respectively.
        $F_w$ and $F_h$ denote a filter in a WHERE and a HAVING clause, respectively.
        \\
        The six queries produce three distinct tokens, one for each row. Row A uses Formula~\ref{eqn:formulas:agg} in a two-dimensional setting. Rows B and C use Formula~\ref{eqn:formulas:basic} in a one- and two-dimensional settings, respectively.
        As soon as a derived table is involved (A2, B2, C1, C2), it must handle part of the token calculation and pass it to the outer query. Students can hardly be expected to craft such decompositions on their own, which places this type of answer outside the pedagogical scope of our model.
        \\
        For readability, the derived tables are expressed as common table expressions (CTE), and the formulas are shortened by omitting the salt function $f_s$, the coalescing function $f^*$, and the alias “\lstinline|AS token|”. \hfill\accmat
        }
    
\begin{tabular}{r|l|l|}
\multicolumn{1}{l}{}&
\multicolumn{1}{c}{1}&
\multicolumn{1}{c}{2}
\\ \cline{2-3}

\multicolumn{1}{l}{}
&
\multicolumn{1}{r}{$J$}
&
\multicolumn{1}{r}{$J$}
\\

&
\multicolumn{1}{r}{$A_v$}
&
\multicolumn{1}{r}{$A_v$}
\\

&
\multicolumn{1}{r}{$F_w$}
&
\multicolumn{1}{r}{$F_h$}
\\[-3\normalbaselineskip]

A
&
\begin{lstlisting}[basicstyle=\footnotesize\ttfamily]
SELECT emp_id
     , emp_name
     , bit_xor(sum(A.hash + B.hash)) OVER ()
FROM employee A
JOIN works_on B USING (emp_id)
GROUP BY A.emp_id
HAVING 40 = sum(hours)
\end{lstlisting}
&
\begin{lstlisting}[basicstyle=\footnotesize\ttfamily]
WITH derived_table AS (
  SELECT emp_id
       , emp_name
       , sum(hours) AS sum_hours
       , sum(A.hash) as A_hash
       , sum(B.hash) as B_hash
  FROM employee A
  JOIN works_on B USING (emp_id)
  GROUP BY A.emp_id
)
SELECT emp_id
     , emp_name
     , bit_xor(A_hash + B_hash) OVER ()
FROM derived_table
WHERE 40 = sum_hours
\end{lstlisting}

\\\cline{2-3}

\multicolumn{1}{l}{}

&
\multicolumn{1}{r}{$C$}
&
\multicolumn{1}{r}{$C$}
\\

&
\multicolumn{1}{r}{$A_s$}
&
\multicolumn{1}{r}{$A_v$}
\\

&
\multicolumn{1}{r}{$F_w$}
&
\multicolumn{1}{r}{$F_h$}
\\[-3\normalbaselineskip]

B
&
\begin{lstlisting}[basicstyle=\footnotesize\ttfamily]
SELECT emp_id
     , emp_name
     , sum(A.hash) OVER ()
FROM employee A
WHERE 40 = (
  SELECT sum(hours)
  FROM works_on B
  WHERE A.emp_id = B.emp_id
)
\end{lstlisting}
&
\begin{lstlisting}[basicstyle=\footnotesize\ttfamily]
WITH derived_table AS (
  SELECT
      ( SELECT emp_id
        FROM works_on B
        WHERE A.emp_id = B.emp_id
        GROUP BY emp_id
        HAVING 40 = sum(hours)
      ) AS emp_id
      , emp_name
      , hash
  FROM employee A
)
SELECT emp_id
     , emp_name
     , sum(A.hash) OVER ()
FROM derived_table A
WHERE emp_id IS NOT NULL
\end{lstlisting}
\\\cline{2-3}

\multicolumn{1}{l}{}
&
\multicolumn{1}{r}{$A_v$}
&
\multicolumn{1}{r}{$A_v$}
\\

&
\multicolumn{1}{r}{$F_h$}
&
\multicolumn{1}{r}{$J$}
\\

&
\multicolumn{1}{r}{$J$}
&
\multicolumn{1}{r}{$F_w$}
\\[-3\normalbaselineskip]

C
&
\begin{lstlisting}[basicstyle=\footnotesize\ttfamily]
WITH derived_table AS (
  SELECT emp_id
       , sum(hours) as sum_hours
       , sum(hash) as hash
  FROM works_on
  GROUP BY emp_id
  HAVING 40 = sum(hours)
)
SELECT emp_id
     , emp_name
     , sum(A.hash + B.hash) OVER ()
FROM employee A
JOIN derived_table B USING (emp_id)
\end{lstlisting}
&
\begin{lstlisting}[basicstyle=\footnotesize\ttfamily]
WITH derived_table AS (
  SELECT emp_id
       , sum(hours) as sum_hours
       , sum(hash) as hash
  FROM works_on
  GROUP BY emp_id
)
SELECT emp_id
     , emp_name
     , sum(A.hash + B.hash) OVER ()
FROM employee A
JOIN derived_table B USING (emp_id)
WHERE 40 = sum_hours
\end{lstlisting}

\\\cline{2-3}

\end{tabular}
    \label{tab:six_variations}
\end{table}

\subsection{DBMS requirements} \label{sec:dbms-requirements}
\begin{enumerate}
\item \emph{Window functions.} First introduced in Oracle 8i (1998) under the name \emph{analytic functions}, they were added to the ANSI/ISO SQL Standard in 2003. Since then, they have been implemented in most major DBMSs, including
IBM Db2 (2008), 
PostgreSQL 8.4 (2009), 
SQL Server (2012), 
MariaDB 10.2 (2017), 
MySQL 8.0.2 (2018), 
SQLite 3.25.0 (2018), 
among others \cite{scutaru2024}. SQLab rely on window functions to compute and broadcast the access token across an entire column of the result table. 

\item \emph{Hash function.} During the creation of the database, a hash of each row is calculated and stored in a separate column. These values are used as cross-table identifiers, and combined to form the tokens. Aside from this main purpose, the designer may rely on the availability of a hash function to make the students convert a non-numeric control value into a number, as part of a two-pass fingerprinting. Although SQLab uses \texttt{sha2()} for MySQL (built-in), \texttt{sha256()} for PostgreSQL (built-in), or \texttt{crypto\_sha256()} for SQLite (extension via \cite{sqlean}), even a non-cryptocraphic standard hash algorithm should be suitable.

\item \emph{Decryption function.} The feedback messages are encrypted by the Python program that build the SQLab database, but they must be decrypted at the DBMS level each time a student submit a token. This requires a decryption function, like \texttt{aes\_decrypt()} in MySQL or \texttt{pgp\_sym\_decrypt()} in PostgreSQL. Note that SQLean 0.27.0 \cite{sqlean}, the set of easy-to-install SQLite extensions we rely on, does \emph{not} provide cryptographic encryption/decryption functions. For now, we obfuscate the messages with a combination of hex-encoding and Brotli compression \cite{brotli}, and prefix the result with a SHA-256 hash of the access token. This is an area where our SQLite support could be improved.

\item \emph{User Defined Functions.} SQLab defines and exposes to the students
the coalescing function \texttt{nn()},
the \texttt{string\_hash()} function,
the \texttt{decrypt()} function,
as well as a great number of instances of a salt function \texttt{salt\_ddd()}.
All need to be implemented as UDFs in the DBMS itself, preferably in a way that does not require any additional installation or compilation step by the end user. As a short example, an implementation of the \texttt{decrypt()} function is shown in Fig.~\ref{fig:mysql-decrypt}.

\begin{figure}[!ht]
\begin{lstlisting}
  CREATE FUNCTION decrypt(token BIGINT UNSIGNED) RETURNS TEXT DETERMINISTIC
  BEGIN
      DECLARE CONTINUE HANDLER FOR SQLWARNING BEGIN END;
      DECLARE message TEXT;
      SELECT coalesce(
                 max(CONVERT(uncompress(aes_decrypt(msg, token)) USING utf8mb4)),
                 CONVERT("{preamble_default}" USING utf8mb4)
             ) INTO message
      FROM sqlab_msg;
      RETURN message;
  END;
\end{lstlisting}
\caption{A MySQL implementation of the \texttt{decrypt()} UDF.}
\label{fig:mysql-decrypt}
\Description{Text.}
\end{figure}

\item \emph{Triggers or generated columns.} In a preliminary version of SQLab, the hash values were calculated during the build process. As a consequence, only stateless games could be supported. This approach was abandoned in favor of triggers, which can update the hash values whenever a DML statement inserts or modifies rows. Note that the MySQL triggers suffer from a limitation which prevents them from including any \texttt{auto-increment} column in the hash calculation. An experimental solution with generated columns has been developed, but seems difficult to port to SQLite.

\end{enumerate}

\subsection{Proof of Theorem~\ref{thm:accuracy} on fingerprinting accuracy} \label{sec:proof-accuracy}

\begin{proof}
\emph{Direct, by contradiction.} Assume that $Q_1$ and $Q_2$ produce the same token, but their starred versions $Q_1^\mathtt{*}$ and $Q_2^\mathtt{*}$ result in different tables. Since the outer FROM clause has no derived table, all the tables it refers to are part of the primary database. Thus, every row constructed by this FROM must be a concatenation of either rows from these tables, or sequences of NULL values (in case of an outer join). Let us call them \emph{primary tuples} and \emph{null tuples}, respectively.
\begin{enumerate}
\item Suppose that the outer FROM clauses of $Q_1$ and $Q_2$ refer to the same list of tables.
\begin{enumerate}
    \item \label{itm:direct:1}
    Suppose without loss of generality that there exists a primary tuple of $Q_1^\mathtt{*}$ that is only constructed by the outer FROM clause of $Q_1$. Since $h$ is injective, its hash value is distinct from any other hash value. In particular, it does not appear in any row constructed by the outer FROM clause of $Q_2$. The formula of $Q_1$ takes this unique hash value as an input. Being a pipeline of injective functions, it necessarily produces a token distinct from that of $Q_2$: contradiction.
    \item \label{itm:direct:2}
    Suppose that all the primary tuples constructed by the outer FROM clause of $Q_1^\mathtt{*}$ are also constructed by the outer FROM clause of $Q_2^\mathtt{*}$, and conversely. The differing tuples must be null. The appearance of null tuples can only happen in the case of an outer join. Suppose without loss of generality that it is a \emph{left} outer join. Consider the corresponding non-matching tuples of the left side of the join. They must be non-null, i.e., primary. The fact that the set of primary tuples is the same for $Q_1^\mathtt{*}$ and $Q_2^\mathtt{*}$ has two consequences:
    \begin{enumerate}
        \item The token difference cannot come from these tuples. It must come from the null tuples. A null tuple has a NULL value in the \texttt{hash} column, but $f^*$ maps this NULL to a constant positive integer. So the contribution of the null tuples to the tokens depends only on their cardinality.
        \item There are as many null tuples for $Q_1$ as for $Q_2$. Hence, the token difference cannot come from these tuples either. Contradiction with the hypothesis.
    \end{enumerate}
\end{enumerate}
\item Suppose without loss of generality there exists a table $T_1$ only referred to in the outer FROM clause of $Q_1$.
\begin{enumerate}
    \item If at least one tuple of $T_1$ constructed by the outer FROM clause of $Q_1^\mathtt{*}$ is primary, cf. step \ref{itm:direct:1}.
    \item Suppose that all the tuples of $T_1$ constructed by the outer FROM clause of $Q_1^\mathtt{*}$ are null. This can only happen in the case of a useless outer join, i.e., an outer join having no matching rows in $T_1$. The reasoning of step \ref{itm:direct:2} implies that there exists a table $T_2$ only referred to in the outer FROM clause of $Q_2$, and having as many null tuples as $T_1$. We can indeed construct two queries that produce the same token but different tables by left-joining the same table $T$ to $T_1$ and $T_2$, respectively, on a condition that is always false. The result of $Q_1^\mathtt{*}$ and $Q_2^\mathtt{*}$ will then differ on the null columns of $T_1$ and $T_2$, respectively. However, the theorem states that these columns are ignored.
\end{enumerate}
\end{enumerate}

\emph{Converse.} This is the obvious direction. Let $Q_1$ and $Q_2$ be two queries whose starred versions $Q_1^\mathtt{*}$ and $Q_2^\mathtt{*}$ result in the same table $T$. Since $Q_1$ and $Q_2$ have the same deterministic formula, and take as input the same values in $T$, they must produce the same token.
\end{proof}

\subsection{Default handling of unintended formulas in three DBMSs} \label{sec:incompatibility_handling}

Assume that the students are tasked to deduplicate a column \texttt{col}. There are two methods to achieve this:
\begin{enumerate}
\item either with a DISTINCT keyword in a basic query, which requires Formula~\ref{eqn:formulas:basic};
\item or (questionably) with a GROUP BY clause in an aggregated query, which requires Formula~\ref{eqn:formulas:agg}.
\end{enumerate}

The students are not free to choose the formula, as it is provided in the statement. Consequently, there is a risk that they will use the wrong method. Here is how three DBMSs would handle the cases where the query is “incompatible” with the formula:

\begin{enumerate}
\item \lstinline|SELECT DISTINCT col, formula_|\texttt{\ref{eqn:formulas:agg}} \lstinline|FROM A| \hfill\accmat
\begin{description}
    \item[SQLite.] \emph{No error.}
    \item[MySQL.] \texttt{1140: In aggregated query without GROUP BY, expression \#1 of SELECT list contains nonaggregated column 'a.col'; this is incompatible with sql\_\allowbreak{}mode=only\_\allowbreak{}full\_\allowbreak{}group\_\allowbreak{}by.}
    \item[PostgreSQL.] \texttt{GroupingError: column "a.col" must appear in the GROUP BY clause or be used in an aggregate function.}
\end{description}

\item \lstinline|SELECT col, formula_|\texttt{\ref{eqn:formulas:basic}} \lstinline|FROM A GROUP BY col| \hfill\accmat
\begin{description}
    \item[SQLite.] \emph{No error.}
    \item[MySQL.] \texttt{1055: Expression \#2 of SELECT list is not in GROUP BY clause and contains nonaggregated column 'A.hash' which is not functionally dependent on columns in GROUP BY clause; this is incompatible with sql\_\allowbreak{}mode=only\_\allowbreak{}full\_\allowbreak{}group\_\allowbreak{}by.}
    \item[PostgreSQL.] \texttt{GroupingError: column "a.hash" must appear in the GROUP BY clause or be used in an aggregate function.}
\end{description}

\end{enumerate}

SQLite is quite permissive in these situations. In addition to yielding tokens, these tokens are distinct. This allows the designer to provide a hint regarding the recommended style.

Both MySQL and PostgreSQL are strict about the compatibility of queries with formulas. Unfortunately, the messages they produce in such cases are likely to be confusing for the students. Furthermore, the designer has no means of substituting a specific hint to these runtime errors. While it is uncommon to be able to employ both an aggregated and a non-aggregated query to achieve the same result, the issue is bound to arise when an aggregation is used where it is not expected, or vice versa. The instructor can explain that the formula should be commented during debugging, as it is only useful for calculating the final token. Consistent with this advice, SQLab always provides the formula in the form of a comment (see bottom of Fig.~\ref{fig:pma}).

\end{document}